\documentclass[12pt,reqno]{amsart}
\usepackage{graphicx,amscd, color,amsmath,amsfonts,amssymb,geometry,amssymb, xfrac,nicefrac}
\usepackage[initials]{amsrefs}

\newtheorem{theorem}{Theorem}[section]

\newtheorem{proposition}[theorem]{Proposition}
\newtheorem{corollary}[theorem]{Corollary}

\newtheorem{remark}[theorem]{Remark}

\newtheorem{lemma}[theorem]{Lemma}

\newtheorem{definition}[theorem]{Definition}

\numberwithin{equation}{section}

\newcommand{\rank}{\operatorname{rank}}
\newcommand{\Ima}{\operatorname{Im}}

\geometry{left=2cm,right=2cm,top=2cm,bottom=2cm,headheight=2.5mm}
\hyphenation{par-ti-cu-lar-ly}

\begin{document}

\title{On the structure of Completely Reducible States}

\author[Cariello ]{D. Cariello}

\address{Faculdade de Matem\'atica, \newline\indent Universidade Federal de Uberl\^{a}ndia, \newline\indent 38.400-902  Uberl\^{a}ndia, Brazil.}
\email{dcariello@ufu.br}

\thanks{2020 Mathematics Subject Classification. 15A69, 15B48, 15B57, 81P42} 
\keywords{states, completely reducible, completely positive, positive under partial transpose} 

\subjclass[2010]{}

\begin{abstract}
The complete reducibility property for bipartite states reduced the separability problem to a proper subset of positive under partial transpose states and was used to prove several  theorems inside and outside entanglement theory. So far only three types of bipartite states were proved to possess this property. In this work, we provide some procedures to create states with this property, we call these states by the name of completely reducible states. The convex combination of such states is the first procedure, showing that the set of  completely reducible states is a convex cone.    We also provide a complete description of the extreme rays of this set. Then we show that powers, roots and partial traces of completely reducible states result in states of  the same type. Finally, we consider a shuffle of states that preserves this  property. This shuffle allows us to construct states with the complete reducibility property avoiding the only three  conditions known to date that imply this property. We conclude this paper by showing a connection between our results and the distillability problem. All these results were possible due to a simplification of the description of this property also presented here for the first time.
\end{abstract}

\maketitle

\section{Introduction}

\vspace{0,5cm}

There are many theorems concerning matrices with non-negative coefficients. One particularly important example is the Perron-Frobenius theorem \cite{Meyer}, which relates properties on the spectrum of such matrices with the  existence or non-existence of invariant subspaces generated by subsets of the canonical basis.\vspace{0,3cm}

This theorem was extended to linear operators acting on matrix algebras leaving the set of positive semidefinite Hermitian matrices invariant, the so-called positive maps \cite{evans}.  For example, the notion of irreducibility borrowed from Perron-Frobenius theory  was translated to positive maps  as follows:  Let $V\in\mathcal{M}_k$ be an orthogonal projection and consider the sub-algebra $V\mathcal{M}_kV=\{VXV, X\in\mathcal{M}_k\}$. A positive map $T:V\mathcal{M}_kV\rightarrow V\mathcal{M}_kV$ is called irreducible if the only sub-algebras $W\mathcal{M}_kW$ of $V\mathcal{M}_kV$  left invariant by $T$ satisfy $W=0$ or $W=V$.   \vspace{0,3cm}

In this work, we are not only interested in the case $V=Id_{k\times k}$. The main problem discussed here is related  to the construction of positive maps that are irreducible on some proper algebras ($V\neq Id_{k\times k}$) with additional conditions (See the definition of completely reducible map below).  \vspace{0,3cm}

Another  result concerning square non-negative matrices, which was also adapted to positive maps, is the Sinkhorn-Knopp theorem \cite{Sinkhorn}. This result states whether such matrix can be turned into a double stochastic one via multiplication on the left and on the right by positive diagonal matrices. The notions of support and total support play the key role in this theorem and surprisingly these concepts can be easily extended to positive maps (even rectangular ones) and analogous theorems follow for rectangular positive maps by just adapting Sinkhorn and Knopp original proofs  \cite{gurvits2004, CarielloLAMA}.  This extension to positive maps is used to study the existence of the filter normal form for bipartite states  \cite{gurvits2004,CarielloLAMA}. This normal form has been used to unify some separability conditions \cite{Git}.
\vspace{0,3cm}

Another notion originated from matrix theory  adapted to  positive maps,  which turned out to be exceptionally important in entanglement theory, is the complete reducibility property \cite[Definition 8.5, Ch I]{schaefer}. This property, owned by special types of states in this theory,  reduces the separability problem to a smaller subset of states (\cite[Corollary 14]{CarielloIEEE}). Its adaptation to positive maps can be described as follows.  Let $T:V\mathcal{M}_kV\rightarrow V\mathcal{M}_kV$ be a self-adjoint positive map with respect to the trace inner product  $($i.e., $tr(T(X)Y^*)=tr(XT(Y)^*)$, where $tr(Z)$ stands for the trace of $Z)$. We say that $T:V\mathcal{M}_kV\rightarrow V\mathcal{M}_kV$  is completely reducible if there are orthogonal projections $W_1,\ldots,W_l$ in $\mathcal{M}_k$ satisfying
\begin{itemize}
\item $W_i\mathcal{M}_kW_i\subset V\mathcal{M}_kV$ or, equivalently, $VW_i=W_iV=W_i$ for every $i$,
\item $W_iW_j=W_jW_i=0$ for $i\neq j$,
\item $\displaystyle V\mathcal{M}_kV=\bigoplus_{i=1}^lW_i\mathcal{M}_kW_i\oplus R$,
where $\displaystyle R$ is the orthogonal complement of $\displaystyle \bigoplus_{i=1}^lW_i\mathcal{M}_kW_i$ 

within $V\mathcal{M}_kV$ with respect to the trace inner product,
\item $T(W_i\mathcal{M}_kW_i)\subset W_i\mathcal{M}_kW_i$ for every $i$,
\item $T|_{W_i\mathcal{M}_kW_i}$ is irreducible for every $i$ and
\item $T|_{R}\equiv 0$\vspace{0,2cm}
\end{itemize}

 The only non-trivial requirement in the list above  is the last one, which is a very strong restriction as it implies, for example, that the identity map $(Id:\mathcal{M}_k\rightarrow \mathcal{M}_k)$ is not completely reducible. This last condition is also responsible for the reduction of the separability problem to a special subset  of states.\vspace{0,3cm}

Now, the easiest way to produce completely reducible maps is to add self-adjoint irreducible maps supported on orthogonal sub-algebras. However, there are indirect ways to produce these maps. It is surprising that entanglement oriented ideas play a  significant role in such a task. \vspace{0,3cm}

In order to understand how this property appears in entanglement theory, let us identify $\mathcal{M}_k\otimes \mathcal{M}_m\simeq \mathcal{M}_{km}$ via Kronecker product and consider $\gamma=\sum_{i=1}^nA_i\otimes B_i\in \mathcal{M}_k\otimes \mathcal{M}_m$. \vspace{0,3cm}

We say that $\gamma\in \mathcal{M}_k\otimes \mathcal{M}_m$ is a state (non-normalized) if $\gamma$ is a positive semidefinite Hermitian matrix. In addition, if its rank is 1 then we call it a pure state.  Now, for every given state $\gamma\in \mathcal{M}_k\otimes \mathcal{M}_m$, define  two linear maps $G_\gamma:\mathcal{M}_k\rightarrow \mathcal{M}_m$ and $F_\gamma:\mathcal{M}_m\rightarrow \mathcal{M}_k$ as

\vspace{0,3cm}

 \begin{center}
  $ G_{\gamma}(X)=\sum_{i=1}^n tr(A_iX)B_i$\  and\  $ F_{\gamma}(X)=\sum_{i=1}^n tr(B_iX)A_i.$
 \end{center}
  
  \vspace{0,3cm}

  The linear maps  -\ $ G_{\gamma}(X), F_{\gamma}(X) $\ - are positive maps as they send positive semidefinite Hermitian matrices into  positive semidefinite Hermitian matrices. This can be seen by the equation
\begin{equation}\label{eqadjoints}
tr(\gamma(X\otimes Y^*))=tr(G_\gamma(X)Y^*)=tr(XF_\gamma(Y^*))=tr(XF_{\gamma^*}(Y)^*)=tr(XF_\gamma(Y)^*).
\end{equation}

and the fact that whenever $\gamma,X,Y$ are positive semidefinite $tr(\gamma(X\otimes Y^*))\geq 0$. \vspace{0,3cm}

This equation also tells us that $G_{\gamma}$ and $F_{\gamma}$ are adjoints with respect to the trace inner product. Hence, $F_{\gamma}\circ G_{\gamma}:\mathcal{M}_k\rightarrow \mathcal{M}_k$ is a self-adjoint positive map. \vspace{0,3cm}

There are conditions to be imposed on a state $\gamma$ to guarantee the complete reducibility of $F_{\gamma}\circ G_{\gamma}:\mathcal{M}_k\rightarrow \mathcal{M}_k$. These conditions arise naturally in entanglement theory. For example,\vspace{0,3cm}

\begin{itemize}
\item[(i)] If $\gamma$ remains positive under partial transpose, i.e., $\gamma\geq 0$ and $\gamma^{\Gamma}=\sum_{i=1}^nA_i\otimes B_i^t\geq 0$, then $F_{\gamma}\circ G_{\gamma}:\mathcal{M}_k\rightarrow \mathcal{M}_k$ is completely reducible \cite[Theorem 26]{CarielloIEEE}.
\item[(ii)] If $\gamma$ remains positive under partial transpose composed with realignment, i.e., $\gamma\geq 0$ and $\mathcal{R}(\gamma^{\Gamma})\geq 0$, then $F_{\gamma}\circ G_{\gamma}:\mathcal{M}_k\rightarrow \mathcal{M}_k$ is completely reducible \cite[Theorem 27]{CarielloIEEE}. 
\item[(iii)] If $\gamma$ remains the same under realignment, i.e., $\gamma\geq 0$ and $\mathcal{R}(\gamma)=\gamma$, then $F_{\gamma}\circ G_{\gamma}:\mathcal{M}_k\rightarrow \mathcal{M}_k$ is completely reducible \cite[Theorem 28]{CarielloIEEE}.
\end{itemize}

\vspace{0,3cm}

The partial transpose and the realignment map are tools primarily used to detect entanglement \cite{peres, horodeckifamily,rudolph,rudolph2}. For us they are useful to construct  completely reducible maps indirectly.  The three types of states described above satisfy several  results \cite{CarielloArxiv} and many of these follow from the complete reducibility property. Below we describe some consequences of this property.

\begin{enumerate}
\item If $\gamma$ is one of these three types  of states (i-iii) and
\begin{itemize}
\item the non-zero Schmidt coefficients of $\gamma$ are equal then $\gamma$ is separable  \cite[Proposition 15]{CarielloIEEE}.
\item the rank of $\gamma$ coincides with its reduced ranks then $\gamma$ is separable \cite[Theorems 5.1, 5.2, 5.5]{CarielloArxiv}.
\end{itemize} 
\item If $\gamma$ remains positive under partial transpose (with one extra mild property) then the problem of determining whether $\gamma$ can be put in the filter normal form or not is reduced in polynomial time to the very specific problem of finding Perron eigenvectors of completely positive maps \cite{CarielloLMP}.
\item If $\mathbb{C}^{k}$ contains $k$ mutually unbiased bases then $\mathbb{C}^{k}$ contains $k+1$, which is known as Weiner's theorem \cite{weiner}. This follows from the first item above \cite[Theorem 36]{CarielloIEEE}.
\item There is also an application of this property  on entanglement breaking Markovian dynamics \cite{hanson}. 

\end{enumerate}

\vspace{0,2cm}

These consequences  should be more than enough to convince anyone of the importance of studying this property  more deeply. So our first major result is  the simplest description we could come up with for the complete reducibility property of $F_{\gamma}\circ G_{\gamma}:\mathcal{M}_k\rightarrow \mathcal{M}_k$ coming directly from the state $\gamma$.  In our corollary \ref{corollary2}, we show that  $F_{\gamma}\circ G_{\gamma}:\mathcal{M}_k\rightarrow \mathcal{M}_k$ is completely reducible if and only if for every pair of orthogonal projections $W\in\mathcal{M}_k$ and $V\in\mathcal{M}_m$, the following two conditions are equivalent
\begin{itemize}
\item[$a)$]  $tr(\gamma(W\otimes V^{\perp}))=tr(\gamma( W^{\perp}\otimes V))=0$,
\item[$b)$]  $\gamma=(W\otimes V)\gamma(W\otimes V)+(W^{\perp}\otimes V^{\perp})\gamma(W^{\perp}\otimes V^{\perp})$,
\end{itemize}
where $W^{\perp}=Id-W$ and $V^{\perp}=Id-V$.
\vspace{0,2cm}

It is obvious that $b)$ implies $a)$ for every state, but the converse is only true for states with this property.  From this new description, we deduce pretty much every other result of this article. 

\vspace{0,2cm}

Here we can get a glimpse of how this property actually works. For example, the conditions fulfilled by the states of item $(1)$ (described in the list above) imply the existence of orthogonal projections $W\in\mathcal{M}_k$ and $V\in\mathcal{M}_m$ satisfying the conditions of item $a)$ above. Then, by item $b)$, the states break into two pieces and inductions are performed on each piece proving the separability of these states.

\vspace{0,2cm}

The reduction of the separability problem to a proper subset of the positive under partial transpose states \cite[Corollary 14]{CarielloIEEE} follows the same line of reasoning. Whenever orthogonal projections $W\in\mathcal{M}_k$ and $V\in\mathcal{M}_m$ satisfying the conditions of item $a)$ exist for a positive under partial transpose state $\gamma$, they force this $\gamma$ to break into two pieces as described in item $b)$. The separability problem is now reduced to each of these pieces. The process can be repeated on each piece whenever new orthogonal projections satisfying the conditions of item $a)$ are found. 

\vspace{0,2cm}
Now, the separability problem can be reduced even further, to the set of states that remain positive under partial transpose composed with realignment, whose partial transposes are positive definite (See \cite{CarielloQIC3}).  Therefore, as argued in \cite{CarielloArxiv}, these three types of states (i-iii) should be treated  equally  within entanglement theory. Our attempt to do so is the definition presented in the next paragraph and the aim of this work described below.

\vspace{0,2cm}

For the sake of simplicity, let us call a state $\gamma\in \mathcal{M}_k\otimes \mathcal{M}_m$ such that $F_{\gamma}\circ G_{\gamma}:\mathcal{M}_k\rightarrow \mathcal{M}_k$ is completely reducible by the name of completely reducible state and let us define $$CR_{k, m}=\{\gamma\in \mathcal{M}_k\otimes\mathcal{M}_m, \gamma\geq 0 \text{ and } F_{\gamma}\circ G_{\gamma}:\mathcal{M}_k\rightarrow \mathcal{M}_k\text{ is completely reducible}\}.$$ 

\vspace{0,2cm}

The aim of this work is to study  $CR_{k,m}$ in order to gain more information on this set and to prove results in a more general way than those already proven for the triad of quantum states described above.  We also look for other types of matrices within this set besides the three types above. 

\vspace{0,2cm}

Our main investigative tool is a brand new description of the complete reducibility property (Corollary \ref{corollary2}).  Intuitively, this tool shows that whenever specific subspaces of $\mathbb{C}^k\otimes\mathbb{C}^m$ belong to the kernel of a state of $CR_{k,m}$ (See item $a)$ above), then that state decomposes nicely (See item $b)$ above). It is worth noticing that the absence of these specific subspaces within the kernel of states does not exclude these states from $CR_{k,m}$. Therefore there are trivial types of states in  $CR_{k,m}$, e.g., the positive definite type (Corollary \ref{corollaryCRdense}).

\vspace{0,2cm}
 
Now, we could simply exclude positive definite states from $CR_{k,m}$, in order to avoid this trivial case, but this would force $CR_{k,m}$ to lose one of its interesting properties, its convexity, which is another consequence of our new tool (Theorem \ref{theoremconvex}). 

\vspace{0,2cm}

Hence, we keep positive definite states inside $CR_{k,m}$ to show that this set is a dense convex cone inside the set of states of $\mathcal{M}_k\otimes \mathcal{M}_m$ (Theorem \ref{theoremconvex}). Since the set of states is closed and is distinct to $CR_{k,m}$ \cite[Lemmas 29, 30]{CarielloIEEE}, the cone $CR_{k,m}$ cannot  be closed. 

\vspace{0,2cm}

In particular, $CR_{k,m}$  cannot be the union of the three closed sets formed by the three types of states described in (i-iii). Therefore, there are states within $CR_{k,m}$ but outside these three closed sets. In the final part of our paper we provide an explicit method to create such states (Corollary \ref{corollary_new_types}) avoiding trivial examples such as these positive definite states. The states created by this method are by no means trivial, since we were only able to show their complete reducibility property by the new refined version of it.

\vspace{0,3cm}

Next, we also describe the extreme rays of $CR_{k,m}$  (Proposition \ref{propositionextremerays}).
It is quite remarkable that the cone of the separable states and the cone of the completely reducible states share the same extreme rays, despite the cone of the separable states being much smaller than $CR_{k,m}$  ($CR_{k,m}$ is dense within the cone of all states).

\vspace{0,3cm}

Then we show that powers, roots and partial traces of completely reducible states result in states of  the same type (See Propositions \ref{prop_power_roots_compred} and \ref{proppartialtraceiscompred}). These new results were completely out of reach without the new description of the complete reducibility property.

\vspace{0,3cm}

Finally, we define a shuffle of states (See Definition \ref{definitionshuffle}) that preserves the complete reducibility property whenever the original states are completely reducible (Theorem \ref{theoremshuffle}). We also notice that shuffling states of different types produce completely reducible states of different types, avoiding the three original requirements described above in  (i-iii) (Corollary \ref{corollary_new_types}).

\vspace{0,3cm}

Although, shuffling states of the same type (i-iii) preserve the type (See Theorem \ref{theoremshuffle2}). This property of the shuffle might be useful to approach  one of the most important open problems in quantum information theory: the distillability problem (See \cite{fiveproblems,horodeckifamily_distillability, Clarisse,LinChen}). 

\vspace{0,3cm}

The distillability problem can be simply stated as whether it is possible to shuffle $n$ times  the partial transpose of an entangled bipartite state $\gamma\in \mathcal{M}_k\otimes \mathcal{M}_k$  , $S(\gamma^{\Gamma},\ldots,\gamma^{\Gamma})\in \mathcal{M}_{k^{n}}\otimes \mathcal{M}_{k^{n}}$,  in order to find a vector $v=a\otimes b+c\otimes d\in \mathbb{C}^{k^{n}}\otimes \mathbb{C}^{k^{n}}$ such that $$tr(S(\gamma^{\Gamma},\ldots,\gamma^{\Gamma})vv^*)<0.$$ 

It is clear from the definition of the shuffle  that if $\gamma$ is a positive under partial transpose state then the answer is negative, but for states not possessing this property no definite answer has been found. As a consequence of our theorem 
\ref{theoremshuffle2},  we show that it is impossible to find such $v$ if $\dim(\text{span}\{a,b,c,d\})=2$ and $\gamma$ is invariant under realignment (See Corollary \ref{corollary_distillability}). We know that there are invariant under realignment states that are not positive under partial transpose (See \cite[Examples 3.36]{Cariello_thesis}).

\vspace{0,5cm}

This work is organized as follows. In section 2, we prove some preliminary results. In section 3, we show that the complete reducibility property for states can be described in a simpler way (Corollary \ref{corollary2}), which is used in the proof of the convexity of $CR_{k,m}$ (Theorem \ref{theoremconvex}). We also show that $CR_{k,m}$ is dense within the set of states (Corollary \ref{corollaryCRdense}).  In section 4,  we describe the extreme rays of $CR_{k,m}$.
In section 5, we define the shuffle of states, which is an operation that preserves the complete reducibility property. Other operations that preserve this property are also presented in this section.
In section 6, we construct completely reducible states avoiding the three known requirements described above in  (i-iii) and present the last result concerning the distillability problem (Corollary \ref{corollary_distillability}).
\section{Preliminaries}

\vspace{0,5cm}

Let $V,W\in \mathcal{M}_k$ be orthogonal projections, define $W\mathcal{M}_k=\{WX,\ X\in \mathcal{M}_k\}$, $\mathcal{M}_kW=\{XW,\ X\in \mathcal{M}_k\}$ and $V\mathcal{M}_kW=\{VXW,\ X\in \mathcal{M}_k\}$. Also, consider $W^{\perp}=Id-W$ and let $P_k$ denote the set of positive semidefinite Hermitian matrices of order $k$.  Whenever a matrix $\gamma\in P_k$, we shall denote this fact by $\gamma\geq 0$.

\vspace{0,2cm}

 We present here some basic results on completely positive maps and one basic result on completely reducible maps as defined in the introduction. Let us start with the basics.

\vspace{0,2cm}

\begin{definition}A linear map $T:\mathcal{M}_k\rightarrow\mathcal{M}_m$ is completely positive if there are matrices $R_1,\ldots,R_s\in \mathcal{M}_{m\times k}$ such that $T(X)=\sum_{i=1}^sR_iXR_i^*.$\end{definition}

\vspace{0,5cm}

The following theorem due to Choi \cite{choi} characterizes the completely positive maps.
\begin{theorem}\label{theoremchoi} Let $u_k=\sum_{i=1}^ke_i\otimes e_i$, where $e_1,\ldots,e_k$ is the canonical basis of $\mathbb{C}^k$. Then the linear map  $T:\mathcal{M}_k\rightarrow \mathcal{M}_m$ is completely positive if and only if $(Id\otimes T)(u_ku_k^*)$ is a state, or equivalently, $( T\otimes Id)(u_ku_k^*)$ is a state.\end{theorem}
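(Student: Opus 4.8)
The plan is to prove the theorem through the Choi correspondence: the assignment $T\mapsto (Id\otimes T)(u_ku_k^*)$ is a linear bijection from the space of linear maps $\mathcal{M}_k\to\mathcal{M}_m$ onto $\mathcal{M}_k\otimes\mathcal{M}_m\simeq\mathcal{M}_{km}$. The verification is short: from $u_ku_k^*=\sum_{i,j=1}^k e_ie_j^*\otimes e_ie_j^*$ one gets $(Id\otimes T)(u_ku_k^*)=\sum_{i,j}e_ie_j^*\otimes T(e_ie_j^*)$, and since $\{e_ie_j^*\}$ is a basis of $\mathcal{M}_k$, the $(i,j)$-block of this matrix is $T(e_ie_j^*)$. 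Hence the Choi matrix determines $T$ and can be prescribed arbitrarily; in particular, two linear maps with the same Choi matrix coincide, and this injectivity is what powers the converse implication.

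For the forward direction, if $T(X)=\sum_{i=1}^sR_iXR_i^*$ with $R_i\in\mathcal{M}_{m\times k}$, then
\begin{equation*}
(Id\otimes T)(u_ku_k^*)=\sum_{i=1}^s(Id\otimes R_i)\,u_ku_k^*\,(Id\otimes R_i)^*=\sum_{i=1}^s v_iv_i^*,\qquad v_i:=(Id\otimes R_i)u_k,
\end{equation*}
a finite sum of positive semidefinite rank-one matrices, hence a state.

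For the converse, I would start from a state $C:=(Id\otimes T)(u_ku_k^*)$, write $C=\sum_{i=1}^s v_iv_i^*$ with $v_i\in\mathbb{C}^k\otimes\mathbb{C}^m$ (e.g.\ from a spectral decomposition), and then use the elementary fact that every vector $v\in\mathbb{C}^k\otimes\mathbb{C}^m$ is uniquely $v=(Id\otimes R)u_k$ for some $R\in\mathcal{M}_{m\times k}$ --- writing $v=\sum_{j=1}^k e_j\otimes w_j$, one sets $Re_j:=w_j$. Applying this to each $v_i$ produces matrices $R_i$ with $v_i=(Id\otimes R_i)u_k$; setting $S(X):=\sum_{i=1}^sR_iXR_i^*$, the forward computation shows that $S$ has Choi matrix $\sum_i v_iv_i^*=C$, which equals the Choi matrix of $T$. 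By injectivity of the correspondence, $T=S$, so $T$ is completely positive. The equivalence with the ``$(T\otimes Id)(u_ku_k^*)$ is a state'' formulation is then immediate: the swap $\Sigma\colon\mathbb{C}^m\otimes\mathbb{C}^k\to\mathbb{C}^k\otimes\mathbb{C}^m$, $\Sigma(x\otimes y)=y\otimes x$, is unitary, fixes $u_k$ on $\mathbb{C}^k\otimes\mathbb{C}^k$, and conjugates $(T\otimes Id)(u_ku_k^*)$ into $(Id\otimes T)(u_ku_k^*)$, so one is positive semidefinite precisely when the other is.

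I expect the only real friction to be bookkeeping: making sure the vector-to-matrix passage $v\leftrightarrow R$ is exactly the one that converts the rank-one decomposition of the Choi matrix into Kraus form, so that the reconstructed map $S$ genuinely reproduces $C$. No ingredient deeper than finite-dimensional linear algebra over $\mathbb{C}$ is needed.
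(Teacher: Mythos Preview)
Your argument is correct and is in fact the standard proof of Choi's theorem. Note, however, that the paper does not prove this statement at all: it is quoted as a classical result due to Choi, with a reference, and no proof is supplied. So there is no paper proof to compare against; you have simply filled in a proof that the paper omits.
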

\vspace{0,3cm}

\begin{remark}\label{remarkpropertiesadjoints}
From the definition of $G_{\gamma}$ and $F_{\gamma}$ described in the introduction, it can be seen that \begin{equation}\label{eq1} \gamma=F_{\gamma}((\cdot)^t)\otimes Id\ (u_mu_m^*)=Id\otimes G_{\gamma}((\cdot)^t)\ (u_ku_k^*) 
\end{equation}
 $($See \cite[Propositions 5 and 8]{cariello}$)$. 

Now, if we choose $R\in\mathcal{M}_{k\times m}$ and define  $\gamma=(R\otimes Id) (u_mu_m^*) (R^*\otimes Id)\in \mathcal{M}_k\otimes \mathcal{M}_m$ then 
\begin{itemize}
\item $\gamma$ is also equal to $(Id\otimes R^t) (u_ku_k^*) (Id\otimes \overline{R})$ and
\item  $F_{\gamma}(X^t)=RXR^*$ and $G_{\gamma}(X^t)=R^tX\overline{R}$ by the equation \eqref{eq1}.
\end{itemize}

In particular, if $R=Id$ then $\gamma=u_ku_k^*$ and $G_{u_ku_k^*}(X^t)=F_{u_ku_k^*}(X^t)=X$.
\end{remark}

\vspace{0,3cm}

\begin{lemma}\label{lemmaCPselfadjoint}Let $\gamma\in \mathcal{M}_k\otimes\mathcal{M}_m$ be a state. Then  $F_{\gamma}\circ G_{\gamma}:\mathcal{M}_k\rightarrow \mathcal{M}_k$ is a completely positive map and self-adjoint  with respect to the trace inner product.\end{lemma}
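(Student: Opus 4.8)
Proof plan for Lemma 1.6 ($F_\gamma \circ G_\gamma$ is completely positive and self-adjoint).

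The plan is to first establish that $G_\gamma$ and $F_\gamma$ are each completely positive, and then use the fact that compositions of completely positive maps are completely positive, together with the adjoint relation already recorded in equation~\eqref{eqadjoints}. For the completely positive claim, I would invoke Remark~\ref{remarkpropertiesadjoints}: since $\gamma \geq 0$, write a purification-style decomposition. Concretely, because $\gamma \in P_{km}$ we may write $\gamma = \sum_{j} v_j v_j^*$ with $v_j \in \mathbb{C}^k \otimes \mathbb{C}^m$, and each rank-one term $v_j v_j^*$ is of the form $(R_j \otimes Id)(u_m u_m^*)(R_j^* \otimes Id)$ for a suitable $R_j \in \mathcal{M}_{k\times m}$ (reshaping the vector $v_j$ into a matrix). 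By the second bullet of Remark~\ref{remarkpropertiesadjoints}, for such a rank-one piece one has $F_{v_jv_j^*}(X^t) = R_j X R_j^*$ and $G_{v_jv_j^*}(X^t) = R_j^t X \overline{R_j}$. Summing over $j$ and using linearity of $\gamma \mapsto F_\gamma$ and $\gamma \mapsto G_\gamma$, we get $F_\gamma(Y) = \sum_j R_j Y^t R_j^*$ and $G_\gamma(Y) = \sum_j R_j^t Y^t \overline{R_j}$; absorbing the transpose (which is itself a congruence-type operation only up to the subtlety that $X \mapsto X^t$ is positive but not completely positive) needs a little care — I would instead phrase $F_\gamma$ directly in the form $\sum_j R_j (\cdot)^t R_j^*$ but note that what we actually want is complete positivity of the composition, so I will be careful to track the transposes through the composition rather than claim each factor is CP in isolation.

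The cleaner route, which I would actually carry out, is: write $F_\gamma(Y) = \sum_j R_j Y^t R_j^*$ and $G_\gamma(Y) = \sum_i R_i^t Y^t \overline{R_i}$ from the decomposition above. Then
\[
(F_\gamma \circ G_\gamma)(X) = \sum_{i,j} R_j \big(R_i^t X^t \overline{R_i}\big)^t R_j^* = \sum_{i,j} R_j R_i^* \overline{X} R_i^t R_j^* = \sum_{i,j} (R_j R_i^*)\, \overline{X}\, (R_j R_i^*)^*.
\]
Hmm — this produces $\overline{X}$ rather than $X$, which reflects that $F_\gamma \circ G_\gamma$ composed with complex conjugation is CP. To get genuine complete positivity I should instead use the Choi-matrix criterion (Theorem~\ref{theoremchoi}): a self-adjoint map $T$ is completely positive iff its Choi matrix is a state, and for $T = F_\gamma \circ G_\gamma$ one can compute the Choi matrix $(Id \otimes T)(u_k u_k^*)$ and recognize it, via the identities in equation~\eqref{eq1} of Remark~\ref{remarkpropertiesadjoints} relating $\gamma$ to $G_\gamma$ and $F_\gamma$, as something manifestly positive semidefinite built from $\gamma$ (essentially $\gamma$ contracted against $\gamma^*=\gamma$ across the $\mathcal{M}_m$ factor). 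This is where the bookkeeping with transposes and conjugates must be done correctly, and I expect it to be the main obstacle: getting the conventions in $G_\gamma, F_\gamma$, the Kronecker identification $\mathcal{M}_k \otimes \mathcal{M}_m \simeq \mathcal{M}_{km}$, and the definition of $u_k u_k^*$ all consistent so that the final expression is literally of the form $\sum_\ell S_\ell X S_\ell^*$.

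The self-adjointness half is short and I would dispatch it first: equation~\eqref{eqadjoints} already states $tr(G_\gamma(X)Y^*) = tr(X F_\gamma(Y)^*)$, i.e. $F_\gamma = G_\gamma^*$ and $G_\gamma = F_\gamma^*$ with respect to the trace inner product. Hence $(F_\gamma \circ G_\gamma)^* = G_\gamma^* \circ F_\gamma^* = F_\gamma \circ G_\gamma$, so $F_\gamma \circ G_\gamma$ is self-adjoint on $\mathcal{M}_k$. For complete positivity, an alternative to the Choi computation — and perhaps the slickest — is simply: $G_\gamma : \mathcal{M}_k \to \mathcal{M}_m$ is completely positive because $(Id \otimes G_\gamma)(u_k u_k^*) = (Id \otimes G_\gamma((\cdot)^t))((\cdot)^t \otimes Id)(u_k u_k^*)$ relates to $\gamma \geq 0$ via \eqref{eq1}, and likewise $F_\gamma$; but since the transpose $X \mapsto X^t$ appears in \eqref{eq1}, what is genuinely CP is $G_\gamma((\cdot)^t)$ and $F_\gamma((\cdot)^t)$. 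Writing $\tau(X) = X^t$, we have $F_\gamma \circ G_\gamma = (F_\gamma \circ \tau) \circ \tau \circ (G_\gamma \circ \tau) \circ \tau^{-1} \circ \tau = (F_\gamma\tau)\circ(\tau G_\gamma \tau^{-1})\circ \tau$, and $\tau G_\gamma \tau^{-1} = \tau G_\gamma \tau$ is CP since $G_\gamma \tau$ is CP and conjugating a CP map by the transpose on both sides keeps it CP (as $\tau$ of a rank-one $vv^*$ is $\bar v \bar v^{\,*}$). The two leftover transposes cancel, leaving $F_\gamma \circ G_\gamma$ as a composition of CP maps, hence CP. I would present this last argument, spelling out the "conjugation by $\tau$" step carefully since that is the only non-obvious point.
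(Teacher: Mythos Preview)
Your strategy is the same as the paper's: deduce from Choi's theorem via equation~\eqref{eq1} that $G_\gamma\circ\tau$ and $F_\gamma\circ\tau$ are completely positive (where $\tau(X)=X^t$), then use that sandwiching a CP map by $\tau$ on both sides preserves complete positivity, and conclude that $F_\gamma\circ G_\gamma$ is a composition of CP maps. The self-adjointness argument via equation~\eqref{eqadjoints} is also exactly the paper's.

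However, the execution contains two algebraic slips you should fix. First, in your Kraus computation you claim $(R_i^t X^t \overline{R_i})^t$ produces a $\overline{X}$; it does not. Using $(ABC)^t=C^tB^tA^t$ and $(\overline{R_i})^t=R_i^*$ you get $(R_i^t X^t \overline{R_i})^t = R_i^* X R_i$, so in fact
\[
(F_\gamma\circ G_\gamma)(X)=\sum_{i,j} R_j R_i^*\, X\, R_i R_j^* = \sum_{i,j}(R_jR_i^*)\,X\,(R_jR_i^*)^*,
\]
which is already a Kraus form and hence manifestly CP. Your first route therefore works without any detour. Second, your displayed $\tau$-manipulation is inconsistent: the string $(F_\gamma\tau)\circ\tau\circ(G_\gamma\tau)\circ\tau^{-1}\circ\tau$ simplifies to $F_\gamma\circ G_\gamma\circ\tau$, not $F_\gamma\circ G_\gamma$, and ``conjugating $G_\gamma\tau$ by $\tau$ on both sides'' yields $\tau(G_\gamma\tau)\tau=\tau G_\gamma$, not $\tau G_\gamma\tau$. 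The clean version (which is what the paper writes) is simply $F_\gamma\circ G_\gamma=(F_\gamma\tau)\circ(\tau G_\gamma)$, where $F_\gamma\tau$ is CP by Choi and $\tau G_\gamma=\tau(G_\gamma\tau)\tau$ is CP because $\tau\Phi\tau$ has Kraus operators $\overline{A_i}$ whenever $\Phi$ has Kraus operators $A_i$.
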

\begin{proof}Let $u_m\in\mathbb{C}^m\otimes\mathbb{C}^m$ and $u_k\in\mathbb{C}^k\otimes\mathbb{C}^k$ be as in theorem \ref{theoremchoi}.\vspace{0,3cm}

As explained in remark \ref{remarkpropertiesadjoints},  $$\gamma=F_{\gamma}((\cdot)^t)\otimes Id\ (u_mu_m^*)=Id\otimes G_{\gamma}((\cdot)^t)\ (u_ku_k^*).$$

By Choi's theorem, since $\gamma$ is posititive semidefinite, the maps $G_{\gamma}((\cdot)^t)$ and  $F_{\gamma}((\cdot)^t)$ are completely positive. Therefore, the inner and outer composition of $F_{\gamma}((\cdot)^t)$ with transposition, i.e.  $F_{\gamma}(\cdot)^t$, is also completely positive. 
\vspace{0,3cm}

Hence $F_{\gamma}(G_{\gamma}((\cdot)^t)^t$ is completely positive. Again, the inner and outer composition of  this map  with transposition gives the following completely positive map $F_{\gamma}(G_{\gamma}(\cdot)).$\vspace{0,3cm}

As discussed earlier, $G_{\gamma}$ and $F_{\gamma}$ are adjoints with respect to the trace inner product by equation \eqref{eqadjoints}. Therefore, $F_{\gamma}(G_{\gamma}(\cdot))$ is a completely positive self-adjoint map.\end{proof}

\vspace{0,5cm}

We shall also need the following well known lemma. 

\begin{lemma}\label{lemmanullcompletelypositive} Let $T:\mathcal{M}_k\rightarrow \mathcal{M}_k$ be a completely positive map and $\gamma\in P_k$ such that $T(\gamma)=0$. Then $T|_{W\mathcal{M}_k+\mathcal{M}_kW}\equiv 0,$ where  $W\in M_k$ is the orthogonal projection onto $\Ima(\gamma)$.
\end{lemma}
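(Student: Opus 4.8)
The plan is to use the Choi-type representation of $T$ together with a rank argument. Write $T(X)=\sum_{i=1}^s R_i X R_i^*$ with $R_i\in\mathcal{M}_k$, which exists since $T$ is completely positive. The hypothesis $T(\gamma)=0$ means $\sum_{i=1}^s R_i\gamma R_i^*=0$, and since each summand $R_i\gamma R_i^*$ is positive semidefinite, a sum of positive semidefinite matrices vanishes only if every term vanishes; hence $R_i\gamma R_i^*=0$ for every $i$. Because $\gamma\geq 0$, we may write $\gamma=\gamma^{1/2}\gamma^{1/2}$, so $R_i\gamma R_i^*=(R_i\gamma^{1/2})(R_i\gamma^{1/2})^*=0$ forces $R_i\gamma^{1/2}=0$ for every $i$. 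Now $\gamma^{1/2}$ has the same image as $\gamma$, namely $\Ima(\gamma)$, and therefore $R_i$ annihilates $\Ima(\gamma)$; equivalently $R_iW=R_i\gamma^{1/2}(\gamma^{1/2})^{+}\cdot W$ — more cleanly, since $W$ is the orthogonal projection onto $\Ima(\gamma)=\Ima(\gamma^{1/2})$, every column of $W$ lies in $\Ima(\gamma^{1/2})$, so $R_iW=0$ for every $i$.

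With this in hand I finish by evaluating $T$ on the two pieces. For $X\in W\mathcal{M}_k$, write $X=WY$ for some $Y$; then $R_iX = R_iWY = 0$ for every $i$, hence $T(X)=\sum_i R_i X R_i^*=0$. For $X\in\mathcal{M}_kW$, write $X=YW$; then $XR_i^*=Y W R_i^* = Y(R_iW)^*=0$ for every $i$, so again $T(X)=0$. Since $W\mathcal{M}_k+\mathcal{M}_kW$ is spanned by such $X$'s and $T$ is linear, $T|_{W\mathcal{M}_k+\mathcal{M}_kW}\equiv 0$.

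The only subtle point — and the step I would be most careful about — is the passage from $R_i\gamma R_i^*=0$ to $R_iW=0$, i.e.\ that $R_i$ kills precisely the image of $\gamma$; this rests on the positive-semidefiniteness of $\gamma$ (so that $\gamma^{1/2}$ exists and shares $\gamma$'s image) and on the fact that $AA^*=0 \Rightarrow A=0$. Everything else is routine linear algebra. An alternative phrasing avoids square roots: since $R_i\gamma R_i^*=0$ and $\gamma\geq 0$, for any vector $v$ we get $0=v^*R_i\gamma R_i^* v=\langle \gamma R_i^*v, R_i^*v\rangle$, which by positivity of $\gamma$ gives $\gamma R_i^*v=0$, so $R_i^*v\in\ker\gamma = \Ima(\gamma)^{\perp}$ for all $v$, i.e.\ $\Ima(R_i^*)\subseteq\Ima(\gamma)^{\perp}$, equivalently $WR_i^*=0$, equivalently $R_iW=0$. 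Either route is fine; I would present the second since it is the most self-contained.
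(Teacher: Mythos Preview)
Your proof is correct and follows essentially the same approach as the paper's: both write $T$ in Kraus form $T(X)=\sum_i R_iXR_i^*$, deduce $R_iW=0$ for every $i$, and conclude. The only cosmetic difference is that the paper first passes from $T(\gamma)=0$ to $T(W)=0$ (using that $\Ima(T(\gamma))=\Ima(T(W))$ for a positive map) and then uses $W=WW^*$ to get $R_iWR_i^*=(R_iW)(R_iW)^*=0$, whereas you work directly with $\gamma^{1/2}$; both routes are equivalent and equally short.
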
 
\begin{proof}Since $T:\mathcal{M}_k\rightarrow \mathcal{M}_k$ is completely positive, there are matrices $R_1,\ldots,R_s\in \mathcal{M}_{k}$ such that $T(X)=\sum_{i=1}^sR_iXR_i^*$. \vspace{0,3cm}

Now, $\Ima(T(\gamma))=\Ima(T(W))$, because $T$ is a positive map and $\Ima(\gamma)=\Ima(W)$. \vspace{0,3cm}

Hence $T(W)=0$ and $R_iWR_i^*=0$ for every $i$.  Notice that $R_iWR_i^*=R_iW(R_iW)^*$, so $R_iW=WR_i^*=0$ for every $i$.   Finally, $R_iWXR_i^*=R_iYWR_i^*=0$ for every $i$.
\end{proof} 

\vspace{0,3cm} 

 For the following sections we shall need the next two lemmas.

\vspace{0,3cm} 

\begin{lemma}\label{lemmasubalgebrainvariant}Let $\gamma\in\mathcal{M}_k\otimes\mathcal{M}_m$ be a state and $W\in M_k$ be an orthogonal projection. Then $F_{\gamma}\circ G_{\gamma}:\mathcal{M}_k\rightarrow \mathcal{M}_k$ leaves the sub-algebra $W\mathcal{M}_kW$ invariant if and only if there is an orthogonal projection $V\in\mathcal{M}_m$ such that $tr(\gamma(W\otimes V^{\perp}))=tr(\gamma(W^{\perp}\otimes V))=0$.\end{lemma}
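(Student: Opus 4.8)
The plan is to unwind both sides of the equivalence in terms of the action of the completely positive maps $G_\gamma$ and $F_\gamma$ on projections, using Remark \ref{remarkpropertiesadjoints} and Lemma \ref{lemmanullcompletelypositive}. First recall from equation \eqref{eqadjoints} that for orthogonal projections $W\in\mathcal{M}_k$ and $V\in\mathcal{M}_m$ one has $tr(\gamma(W\otimes V^\perp))=tr(G_\gamma(W)V^\perp)$ and $tr(\gamma(W^\perp\otimes V))=tr(W^\perp F_\gamma(V))=tr(G_\gamma(W^\perp)V)$; since $G_\gamma$ and $F_\gamma$ are positive maps and $V,V^\perp,W,W^\perp$ are positive semidefinite, each of these traces is nonnegative and vanishes precisely when the relevant supports are orthogonal. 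So the two scalar conditions translate into: $\Ima(G_\gamma(W))\subseteq\Ima(V)$ and $\Ima(F_\gamma(V))\subseteq\Ima(W)$.

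For the forward direction, suppose $F_\gamma\circ G_\gamma$ leaves $W\mathcal{M}_kW$ invariant. The natural candidate is to take $V$ to be the orthogonal projection onto $\Ima(G_\gamma(W))$ (equivalently onto $\overline{\operatorname{span}}\,\Ima(G_\gamma(WXW))$ as $X$ ranges over $\mathcal{M}_k$, using positivity of $G_\gamma$ to reduce to $X=Id$). Then $tr(\gamma(W\otimes V^\perp))=tr(G_\gamma(W)V^\perp)=0$ by construction. It remains to check $tr(\gamma(W^\perp\otimes V))=0$, i.e. $F_\gamma(V)\in W\mathcal{M}_kW$. Here I would use that $F_\gamma$ is a positive map with $\Ima(V)=\Ima(G_\gamma(W))$, so $\Ima(F_\gamma(V))=\Ima(F_\gamma(G_\gamma(W)))=\Ima((F_\gamma\circ G_\gamma)(W))$; invariance of $W\mathcal{M}_kW$ under the completely positive map $F_\gamma\circ G_\gamma$ forces $(F_\gamma\circ G_\gamma)(W)\in W\mathcal{M}_kW$, hence its image lies in $\Ima(W)$, giving $F_\gamma(V)\in W\mathcal{M}_kW$ and therefore $tr(W^\perp F_\gamma(V))=0$.

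For the converse, assume such a $V$ exists, so $\Ima(G_\gamma(W))\subseteq\Ima(V)$ and $\Ima(F_\gamma(V))\subseteq\Ima(W)$. I want $(F_\gamma\circ G_\gamma)(W\mathcal{M}_kW)\subseteq W\mathcal{M}_kW$, i.e. $(F_\gamma\circ G_\gamma)(WXW)=W(F_\gamma\circ G_\gamma)(WXW)W$ for all $X$. Writing $F_\gamma\circ G_\gamma$ in Kraus form and using the trick from Lemma \ref{lemmanullcompletelypositive}: the complementary vanishing conditions $tr(\gamma(W\otimes V^\perp))=0$ and $tr(\gamma(W^\perp\otimes V))=0$ say that $\gamma=(W\otimes Id)\gamma(W\otimes Id)+(W^\perp\otimes V^\perp)\gamma(W^\perp\otimes V^\perp)$-type splittings hold at the level of supports; concretely, $G_\gamma(W^\perp\mathcal{M}_kW^\perp)\perp V$ is false in general, so instead I pass through the Kraus operators. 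If $G_\gamma(X)=\sum_i \tilde R_i X\tilde R_i^*$ then $G_\gamma(W)=0$ on the complement means $\tilde R_i W = V\tilde R_i W$ after using $R_iWR_i^*=R_iW(R_iW)^*$ as in Lemma \ref{lemmanullcompletelypositive}; similarly the Kraus operators of $F_\gamma$ compose so that the image of $F_\gamma(V\cdot V)$ sits in $\Ima(W)$. Chaining these, $(F_\gamma\circ G_\gamma)(WXW)$ has both its image and (by self-adjointness, Lemma \ref{lemmaCPselfadjoint}) its support inside $\Ima(W)$, which is exactly the statement that $W\mathcal{M}_kW$ is invariant.

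The main obstacle I anticipate is the bookkeeping in the converse: one must be careful that the two one-sided support conditions genuinely combine to control \emph{both} the left and right action of $W$ on $(F_\gamma\circ G_\gamma)(WXW)$. The clean way around this is to exploit self-adjointness of $F_\gamma\circ G_\gamma$ (Lemma \ref{lemmaCPselfadjoint}): once we know $\Ima((F_\gamma\circ G_\gamma)(WXW))\subseteq\Ima(W)$ for all $X$, applying the same to $WX^*W$ and taking adjoints yields the two-sided containment $(F_\gamma\circ G_\gamma)(WXW)\in W\mathcal{M}_kW$ for free. So the real content reduces to the single support inclusion, which follows from the Kraus-operator argument of Lemma \ref{lemmanullcompletelypositive} applied in turn to $G_\gamma$ (killing the $W^\perp$/$V^\perp$ cross terms) and to $F_\gamma$.
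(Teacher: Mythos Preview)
Your forward direction is exactly the paper's: choose $V$ to be the range projection of $G_\gamma(W)$, then use positivity of $F_\gamma$ together with the invariance hypothesis to obtain $\Ima(F_\gamma(V)) = \Ima(F_\gamma(G_\gamma(W))) \subset \Ima(W)$.

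For the converse you overcomplicate matters and introduce a genuine error. You write a Kraus decomposition $G_\gamma(X)=\sum_i \tilde R_i X\tilde R_i^*$ and invoke Lemma~\ref{lemmanullcompletelypositive} for $G_\gamma$ and $F_\gamma$ separately, but neither $G_\gamma$ nor $F_\gamma$ is completely positive in general---only $G_\gamma((\cdot)^t)$ and $F_\gamma((\cdot)^t)$ are (see Remark~\ref{remarkpropertiesadjoints} and the proof of Lemma~\ref{lemmaCPselfadjoint}). So the Kraus form you assume for $G_\gamma$ is not available, and Lemma~\ref{lemmanullcompletelypositive} does not apply to $G_\gamma$ or $F_\gamma$ individually.

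The paper's converse avoids all of this and uses only that $G_\gamma$, $F_\gamma$ are \emph{positive} maps. From the two support inclusions you already extracted, $\Ima(G_\gamma(W))\subset\Ima(V)$ and $\Ima(F_\gamma(V))\subset\Ima(W)$, one gets directly
\[
\Ima\bigl(F_\gamma(G_\gamma(W))\bigr)\ \subset\ \Ima\bigl(F_\gamma(V)\bigr)\ \subset\ \Ima(W),
\]
the first inclusion holding because positive maps are monotone on images (if $0\le A\le cB$ then $\Ima(T(A))\subset\Ima(T(B))$). Since $F_\gamma\circ G_\gamma$ is a positive map and $W\mathcal{M}_kW$ is spanned by its positive semidefinite elements (each dominated by a multiple of $W$), the single condition $\Ima\bigl((F_\gamma\circ G_\gamma)(W)\bigr)\subset\Ima(W)$ already forces $(F_\gamma\circ G_\gamma)(W\mathcal{M}_kW)\subset W\mathcal{M}_kW$. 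No Kraus operators and no appeal to self-adjointness are needed.
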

\begin{proof} 
 First, notice that, by equation \eqref{eqadjoints}, \begin{center}
 $tr(\gamma(W\otimes V^{\perp}))=tr(G_{\gamma}(W)V^{\perp})$ and $tr(\gamma(W^{\perp}\otimes V))=tr(W^{\perp}F_{\gamma}(V))$. 
 \end{center}
 
If we assume that $V$ is the orthogonal projection such that $\Ima(V)=\Ima(G_{\gamma}(W))$, where $W$ is the orthogonal projection satisfying $F_{\gamma}\circ G_{\gamma}(W\mathcal{M}_kW)\subset W\mathcal{M}_kW$, then, by the definition of $V$, $\Ima(V^{\perp})=\ker(G_{\gamma}(W))$. Hence 
 $tr(G_{\gamma}(W)V^{\perp})=0$.\vspace{0,3cm}
 
 Next, since $F_{\gamma}:\mathcal{M}_m\rightarrow \mathcal{M}_k$ is a positive map and $\Ima(V)=\Ima(G_{\gamma}(W))$ then $$\Ima(F_{\gamma}(V))=\Ima(F_{\gamma}(G_{\gamma}(W)))\subset \Ima(W).$$ 
 
 Hence $tr(F_{\gamma}(V)W^{\perp})=0$.\vspace{0,3cm}
 
 Now, for the converse,  $tr(G_{\gamma}(W)V^{\perp})=tr(W^{\perp}F_{\gamma}(V))=0$ imply that \begin{center}
 $\Ima(G_{\gamma}(W))\subset\Ima(V)$ and $\Ima(F_{\gamma}(V))\subset\Ima(W)$. 
 \end{center}
\vspace{0,3cm} 
 
 Thus, $\Ima(F_{\gamma}(G_{\gamma}(W)))\subset \Ima(W)$. Finally, since $F_{\gamma}\circ G_{\gamma}:\mathcal{M}_k\rightarrow \mathcal{M}_k$ is a positive map, it implies that $F_{\gamma}\circ G_{\gamma}(W\mathcal{M}_kW)\subset W\mathcal{M}_kW.$
\end{proof}

  \vspace{0,3cm}

\begin{lemma}\label{lemmatensorcompred}
Let $T:\mathcal{M}_k\rightarrow \mathcal{M}_k$ and $S:\mathcal{M}_m\rightarrow\mathcal{M}_m$ be completely reducible self-adjoint positive maps. Then $T\otimes S:\mathcal{M}_k\otimes\mathcal{M}_m\rightarrow \mathcal{M}_k\otimes\mathcal{M}_m$ is completely reducible too. By induction the same is true for the tensor product of any number of completely reducible maps.
\end{lemma}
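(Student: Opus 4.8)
The plan is to reduce, in one clean step, to the case where $T$ and $S$ are \emph{irreducible}, and then to prove that the tensor product of two irreducible pieces is again irreducible. First I would write out the block decompositions from the definition: $\mathcal{M}_k=\bigoplus_{i=1}^{l}W_i\mathcal{M}_kW_i\oplus R$ with $T(W_i\mathcal{M}_kW_i)\subseteq W_i\mathcal{M}_kW_i$, each $T|_{W_i\mathcal{M}_kW_i}$ irreducible, $T|_{R}\equiv 0$, and likewise $\mathcal{M}_m=\bigoplus_{j=1}^{p}U_j\mathcal{M}_mU_j\oplus R'$ for $S$. Since the trace inner product on $\mathcal{M}_k\otimes\mathcal{M}_m\simeq\mathcal{M}_{km}$ is the tensor of those on the factors, distributing $\otimes$ over these orthogonal direct sums gives
\[
\mathcal{M}_k\otimes\mathcal{M}_m=\bigoplus_{i,j}\bigl(W_i\mathcal{M}_kW_i\bigr)\otimes\bigl(U_j\mathcal{M}_mU_j\bigr)\ \oplus\ \widetilde{R},
\]
where $\widetilde{R}$ collects all the remaining summands — each having a tensor factor inside $R$ or inside $R'$ — so that $\widetilde{R}$ is exactly the orthogonal complement of $\bigoplus_{i,j}(W_i\mathcal{M}_kW_i)\otimes(U_j\mathcal{M}_mU_j)$, and $(T\otimes S)|_{\widetilde{R}}\equiv 0$ because $T$ annihilates $R$ and $S$ annihilates $R'$. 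Each $W_i\otimes U_j$ is an orthogonal projection, these projections are mutually orthogonal, $(W_i\otimes U_j)\mathcal{M}_{km}(W_i\otimes U_j)=(W_i\mathcal{M}_kW_i)\otimes(U_j\mathcal{M}_mU_j)$, and $T\otimes S$ preserves each such sub-algebra; so to produce all the data required by the definition it remains only to show that every restriction $(T|_{W_i\mathcal{M}_kW_i})\otimes(S|_{U_j\mathcal{M}_mU_j})$ is irreducible.

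For that I would invoke Perron--Frobenius theory for irreducible positive maps \cite{evans}. Renaming, let $T\colon\mathcal{M}_k\to\mathcal{M}_k$ and $S\colon\mathcal{M}_m\to\mathcal{M}_m$ be irreducible, self-adjoint and positive; then $T$ has a strictly positive Perron eigenvector, $T(\rho)=r\rho$ with $\rho>0$ and $r$ the spectral radius, which is an algebraically simple eigenvalue, and likewise $S(\sigma)=s\sigma$ with $\sigma>0$. One now uses that the self-adjoint positive maps occurring here are moreover positive semidefinite as operators on $(\mathcal{M}_k,\langle\cdot,\cdot\rangle)$ — indeed they are of the form $F_{\gamma}\circ G_{\gamma}=F_{\gamma}\circ F_{\gamma}^{*}$ — so all their eigenvalues are nonnegative, and the same holds on every invariant block; hence the peripheral point spectrum of each irreducible piece is just $\{r\}$ and, $T$ being self-adjoint and thus diagonalizable, the normalized powers converge, $r^{-n}T^{n}\to E$ with $E(X)=\langle X,\rho\rangle\,\rho/\langle\rho,\rho\rangle$, and similarly $s^{-n}S^{n}\to E'$. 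Now $T\otimes S$ is positive, $(T\otimes S)(\rho\otimes\sigma)=rs(\rho\otimes\sigma)$ with $\rho\otimes\sigma>0$, and $(rs)^{-n}(T\otimes S)^{n}=(r^{-n}T^{n})\otimes(s^{-n}S^{n})\to E\otimes E'\colon X\mapsto\langle X,\rho\otimes\sigma\rangle(\rho\otimes\sigma)/\langle\rho\otimes\sigma,\rho\otimes\sigma\rangle$. If $W\mathcal{M}_{km}W$ were an invariant sub-algebra of $T\otimes S$ with $W\neq 0$, then $(T\otimes S)^{n}(W)\in W\mathcal{M}_{km}W$ for all $n$; dividing by $(rs)^{n}$, passing to the limit, and using $\langle W,\rho\otimes\sigma\rangle>0$ (as $W\geq 0$ is nonzero and $\rho\otimes\sigma$ is positive definite) yields $\rho\otimes\sigma\in W\mathcal{M}_{km}W$, which forces $W=Id$. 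So $T\otimes S$ is irreducible, and together with the first paragraph this exhibits the completely reducible structure of the original $T\otimes S$; the statement for an arbitrary number of factors then follows by an immediate induction, pairing the first factor (still positive semidefinite as an operator) against the tensor product of the remaining ones.

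I expect the main obstacle to be precisely this last step, the irreducibility of the tensor product of the irreducible pieces. For a \emph{general} irreducible self-adjoint positive map the peripheral spectrum can be $\{r,-r\}$ — for instance $X\mapsto\operatorname{diag}(X_{22},X_{11})$ on $\mathcal{M}_2$ is irreducible with eigenvalue $-1$ — in which case $r^{-n}T^{n}$ does not converge and the tensor square can genuinely fail to be completely reducible; the argument survives only because the maps in play are positive semidefinite as operators, which is what kills the periodicity. Everything else is routine bookkeeping: that tensor products of orthogonal projections are orthogonal projections and are mutually orthogonal, that $T\otimes S$ vanishes on any summand on which either factor acts by $0$, and that the displayed decomposition is the orthogonal one demanded by the definition of complete reducibility.
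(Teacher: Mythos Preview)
Your skeleton is the paper's: tensor the two block decompositions, observe that every summand with an $R$- or $R'$-factor lies in $\ker(T\otimes S)$ and is orthogonal to the diagonal blocks, and then argue that each restriction $(T|_{W_i\mathcal{M}_kW_i})\otimes(S|_{U_j\mathcal{M}_mU_j})$ is irreducible. For this last step the paper simply quotes the spectral characterization of irreducibility for self-adjoint positive maps (simple spectral radius together with a positive eigenvector of full range, \cite[Lemma~5]{CarielloIEEE}) and asserts that the tensor block inherits both properties; your $r^{-n}T^{n}\to E$ argument is a more explicit route to the same endpoint.

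You are right that this is the only nontrivial step, and your periodic counterexample is on target: for the swap map on $\mathcal{M}_2$ the spectral radius of $T\otimes T$ has multiplicity $2$, so $T\otimes T$ is \emph{not} irreducible on $\mathcal{M}_4$ and the decomposition by the projections $W_i\otimes U_j$ is too coarse. The paper's proof does not confront this either---its implicit claim that the spectral radius of $T\otimes S$ on each tensor block is simple is precisely the step that fails when $-r$ is present. Your fix, assuming each irreducible piece is positive semidefinite as a Hilbert--Schmidt operator, is sound and is exactly what the lemma's sole application needs (Theorem~\ref{theoremshuffle}, where the maps are $F_{\gamma_i}\circ G_{\gamma_i}=G_{\gamma_i}^{*}G_{\gamma_i}\ge 0$). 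Just be aware that you are then proving a slightly weaker statement than the lemma asserts: the positive-semidefiniteness hypothesis is imported from the application, not from the lemma's stated assumptions, so strictly speaking neither your argument nor the paper's establishes the lemma in the full generality in which it is phrased.
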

\begin{proof}
Let  \begin{itemize}
\item $\mathcal{M}_k=\bigoplus_{i=1}^lW_i\mathcal{M}_kW_i\oplus R$ and $\mathcal{M}_m=\bigoplus_{i=1}^sV_i\mathcal{M}_mV_i\oplus R'$
\item $W_iW_j=0$ and $V_iV_j=0$ for $i\neq j$,
\item $\displaystyle R\perp \bigoplus_{i=1}^lW_i\mathcal{M}_kW_i$ and $\displaystyle R'\perp \bigoplus_{i=1}^sV_i\mathcal{M}_kV_i$ with respect to the trace inner product,
\item $T(W_i\mathcal{M}_kW_i)\subset W_i\mathcal{M}_kW_i$ and $S(V_i\mathcal{M}_kV_i)\subset V_i\mathcal{M}_kV_i$ for every $i$,
\item $T|_{W_i\mathcal{M}_kW_i}$ and $S|_{V_i\mathcal{M}_kV_i}$ are irreducible for every $i$ and
\item $T|_{R}\equiv 0$ and $S|_{R'}\equiv 0$. \vspace{0,2cm}
\end{itemize}

Now, since $T$ is self-adjoint and positive,  $T|_{W_i\mathcal{M}_kW_i}$ is irreducible if and only if the multiplicity of the spectral radius of $T|_{W_i\mathcal{M}_kW_i}$ is 1 and associated to the spectral radius there is an eigenvector $\gamma_i$ such that $\Ima(\gamma_i)=\Ima(V_i)$ and $\gamma_i\in P_k$ (See \cite[Lemma 5]{CarielloIEEE}). The same is true for $S|_{V_j\mathcal{M}_mV_j}$ and for the maps $T\otimes S|_{W_i\mathcal{M}_kW_i\otimes V_j\mathcal{M}_mV_j}$, for every $i,j$.
Hence $T\otimes S|_{W_i\mathcal{M}_kW_i\otimes V_j\mathcal{M}_mV_j}$ is irreducible for every $i,j$. \vspace{0,2cm}

Finally, notice that
\begin{itemize}
\item  $R\otimes\mathcal{M}_m+\mathcal{M}_k\otimes R'\subset \ker(T\otimes S)$ 
\item  $\displaystyle R\otimes\mathcal{M}_m+\mathcal{M}_k\otimes R'\perp \bigoplus_{i,j}W_i\mathcal{M}_kW_i\otimes V_j\mathcal{M}_mV_j$.
\item $\mathcal{M}_k\otimes\mathcal{M}_m=  \bigoplus_{i,j}W_i\mathcal{M}_kW_i\otimes V_j\mathcal{M}_mV_j\oplus (R\otimes\mathcal{M}_m+\mathcal{M}_k\otimes R').$  \vspace{0,2cm}
\end{itemize}

Hence 
$T\otimes S:\mathcal{M}_k\otimes\mathcal{M}_m\rightarrow \mathcal{M}_k\otimes\mathcal{M}_m$ is completely reducible. 
\end{proof}
 
\vspace{0,5cm} 
 
\section{Completely reducible states form a convex cone}

\vspace{0,5cm}

In this section we prove that the set of completely reducible states is a convex cone (theorem \ref{theoremconvex}). In the next section, we provide a complete description of the extreme rays of this cone (proposition \ref{propositionextremerays}).\vspace{0,3cm}
 
 In order to prove these results we need first the following one.
This lemma was proved in \cite[Proposition 7]{CarielloIEEE}.\vspace{0,3cm}

\begin{lemma}\label{lemmadecompositionproperty}Let $T:\mathcal{M}_k\rightarrow \mathcal{M}_k$ be a positive map and self-adjoint with respect to the trace inner product. Then $T$ is completely reducible if and only if 
for every  $\gamma\in P_k$ such that $T(\gamma)=\lambda\gamma$ and $\lambda>0$, we have
$T|_{W\mathcal{M}_kW^{\perp}+W^{\perp}\mathcal{M}_kW}\equiv 0,$
where  $W\in M_k$ is the orthogonal projection onto $\Ima(\gamma)$.
\end{lemma}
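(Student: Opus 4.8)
The plan is to prove the two implications separately: the forward one is a direct unwinding of the definition of complete reducibility, while the converse is an induction on $k$ in which a single irreducible block is split off at each step. For the forward implication, let $W_1,\dots,W_l$ and the residual space $R$ be the data witnessing that $T$ is completely reducible, and let $\gamma\in P_k$ with $T(\gamma)=\lambda\gamma$, $\lambda>0$. Writing $\gamma=\sum_{i=1}^lW_i\gamma W_i+r$ along the orthogonal sum $\mathcal{M}_k=\bigoplus_iW_i\mathcal{M}_kW_i\oplus R$ (the orthogonal projection onto $W_i\mathcal{M}_kW_i$ being $X\mapsto W_iXW_i$), applying $T$ and using $T|_R\equiv0$ and $T(W_i\mathcal{M}_kW_i)\subseteq W_i\mathcal{M}_kW_i$, the $R$--components give $\lambda r=0$, hence $r=0$, and the remaining components give $T(W_i\gamma W_i)=\lambda W_i\gamma W_i$ for every $i$. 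Each $W_i\gamma W_i$ is positive semidefinite, and a positive semidefinite eigenvector of an irreducible self-adjoint positive map associated with a positive eigenvalue must be the Perron eigenvector, hence of full rank inside $W_i$ (otherwise its support projection would produce a proper invariant subalgebra; cf.\ \cite[Lemma 5]{CarielloIEEE}). So the support of $\gamma$ is $W=\sum_{i\in I}W_i$ with $I=\{i:W_i\gamma W_i\neq0\}$, and since every block $W_i\mathcal{M}_kW_j$ with $i\neq j$, as well as every block meeting the residual projection $Id-\sum_iW_i$, lies in $R$, we obtain $W\mathcal{M}_kW^{\perp}+W^{\perp}\mathcal{M}_kW\subseteq R$, whence $T|_{W\mathcal{M}_kW^{\perp}+W^{\perp}\mathcal{M}_kW}\equiv0$.

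For the converse I argue by induction on $k$; if $T=0$ take $l=0$ and $R=\mathcal{M}_k$, so assume $T\neq0$. Then $T$ being self-adjoint and positive, its spectral radius $\rho(T)=\|T\|>0$ is an eigenvalue with a positive semidefinite eigenvector (Perron--Frobenius for positive maps, \cite{evans}), so I may pick a positive semidefinite eigenvector $\gamma_0$ with positive eigenvalue $\lambda_0$ of \emph{minimal} rank; let $W_1$ be its support. Feeding $\gamma_0$ into the hypothesis yields $T|_{W_1\mathcal{M}_kW_1^{\perp}+W_1^{\perp}\mathcal{M}_kW_1}\equiv0$; since invariant subspaces of a self-adjoint operator are reducing, this gives $T(W_1\mathcal{M}_kW_1)\subseteq W_1\mathcal{M}_kW_1\oplus W_1^{\perp}\mathcal{M}_kW_1^{\perp}$, and since $\gamma_0$ is positive definite inside $W_1\mathcal{M}_kW_1$, positivity of $T$ kills the second summand; together with self-adjointness this shows $T$ splits as $T|_{W_1\mathcal{M}_kW_1}\oplus0\oplus T|_{W_1^{\perp}\mathcal{M}_kW_1^{\perp}}$ relative to $\mathcal{M}_k=W_1\mathcal{M}_kW_1\oplus(W_1\mathcal{M}_kW_1^{\perp}+W_1^{\perp}\mathcal{M}_kW_1)\oplus W_1^{\perp}\mathcal{M}_kW_1^{\perp}$.

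Next I check that $T|_{W_1\mathcal{M}_kW_1}$ is irreducible: if $W'\mathcal{M}_kW'$ (with $0\neq W'\leq W_1$ and $W'\neq W_1$) were invariant, then $T|_{W'\mathcal{M}_kW'}$ would be self-adjoint and positive; were its spectral radius $0$ it would be the zero map, giving $T(W')=0$ and hence the contradiction $0=\langle T(W'),\gamma_0\rangle=\lambda_0\,tr(W'\gamma_0)>0$; were its spectral radius positive, its Perron eigenvector would be a positive semidefinite eigenvector of $T$ with positive eigenvalue and rank $\leq\rank(W')<\rank(\gamma_0)$, contradicting minimality. Finally, $T|_{W_1^{\perp}\mathcal{M}_kW_1^{\perp}}$ is self-adjoint and positive on a strictly smaller algebra, and it inherits the decomposition property: a positive semidefinite eigenvector $\delta$ there with positive eigenvalue is also one for $T$, and its support $W_\delta\leq W_1^{\perp}$ satisfies $W_\delta\mathcal{M}_k(W_1^{\perp}-W_\delta)+(W_1^{\perp}-W_\delta)\mathcal{M}_kW_\delta\subseteq W_\delta\mathcal{M}_k(Id-W_\delta)+(Id-W_\delta)\mathcal{M}_kW_\delta$, which $T$ annihilates. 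By induction $T|_{W_1^{\perp}\mathcal{M}_kW_1^{\perp}}$ is completely reducible, with data $W_2,\dots,W_l,R'$; then $W_1,\dots,W_l$ together with $R=(W_1\mathcal{M}_kW_1^{\perp}+W_1^{\perp}\mathcal{M}_kW_1)\oplus R'$ are the required data for $T$.

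I expect the main obstacle to be the irreducibility of $T|_{W_1\mathcal{M}_kW_1}$, namely ruling out a proper invariant subalgebra contained in its kernel; this is exactly where the minimal-rank choice of $\gamma_0$ and the orthogonality $\ker T\perp\Ima T$ coming from self-adjointness do the work. The remaining points — the splitting of $T$ into the four blocks determined by $W_1$, and the inheritance of the hypothesis by $T|_{W_1^{\perp}\mathcal{M}_kW_1^{\perp}}$ — should be routine bookkeeping.
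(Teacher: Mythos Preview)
The paper does not actually prove this lemma; it merely cites \cite[Proposition 7]{CarielloIEEE}. Your argument, by contrast, is a complete self-contained proof, and it is correct. The forward direction is a clean unwinding of the definition, using irreducibility to force each nonzero block $W_i\gamma W_i$ to have full support in $W_i$, so that the support $W$ of $\gamma$ is a sum of some of the $W_i$'s and the off-diagonal with respect to $W$ lands in $R$. For the converse, your device of choosing a positive semidefinite eigenvector $\gamma_0$ of \emph{minimal rank} is exactly what is needed: it simultaneously produces an invariant block $W_1\mathcal{M}_kW_1$ (via positivity and the order bounds $-c\gamma_0\le X\le c\gamma_0$), forces irreducibility on that block (a proper invariant subalgebra would either give a smaller-rank eigenvector or, if null, contradict $\langle T(W'),\gamma_0\rangle=\lambda_0\,tr(W'\gamma_0)>0$ via self-adjointness), and leaves the hypothesis intact on $W_1^{\perp}\mathcal{M}_kW_1^{\perp}$ for the induction. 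One small point worth making explicit is the degenerate case $W_1=Id$, where the induction terminates immediately with $l=1$ and $R=0$; otherwise the argument is complete as written.
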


\vspace{0,3cm}

The next proposition is  a crucial step to obtain a simpler characterization of the complete reducibility property (corollary \ref{corollary2}), which is used in  the proof of the convexity of the set of completely reducible states (theorem \ref{theoremconvex}). It says that if $T$ in the previous lemma is completely positive then a stronger result holds.\vspace{0,3cm}

\begin{proposition}\label{propositionnull}Let $T:\mathcal{M}_k\rightarrow \mathcal{M}_k$ be a completely positive map and self-adjoint with respect to the trace inner product. Then $T$ is completely reducible if and only if  for every sub-algebra $W\mathcal{M}_kW$ left invariant by $T$, i.e.
$T(W\mathcal{M}_kW)\subset W\mathcal{M}_kW$, we have  $T|_{W\mathcal{M}_kW^{\perp}+W^{\perp}\mathcal{M}_kW}\equiv 0.$

\end{proposition}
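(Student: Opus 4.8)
The plan is to prove the two implications separately, the direction ``completely reducible $\Rightarrow$ vanishing on the cross blocks'' being the substantial one. For the reverse implication I would verify the hypothesis of \lemref{lemmadecompositionproperty}. Given $\gamma\in P_k$ with $T(\gamma)=\lambda\gamma$, $\lambda>0$, let $W$ be the orthogonal projection onto $\Ima(\gamma)$. Since $T$ is positive and $\Ima(\gamma)=\Ima(W)$, one gets $\Ima(T(W))=\Ima(T(\gamma))=\Ima(\gamma)=\Ima(W)$, and as $T(W)\geq 0$ is Hermitian this forces $T(W)\in W\mathcal{M}_kW$; since the positive semidefinite elements supported on $\Ima(W)$ span $W\mathcal{M}_kW$ (each such $Y$ satisfies $0\le T(Y)\le\|Y\|T(W)$), we obtain $T(W\mathcal{M}_kW)\subset W\mathcal{M}_kW$. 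The assumed property then yields $T|_{W\mathcal{M}_kW^{\perp}+W^{\perp}\mathcal{M}_kW}\equiv 0$, which is exactly what \lemref{lemmadecompositionproperty} needs to conclude that $T$ is completely reducible.

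For the forward direction, fix a completely reducible $T$ together with the projections $W_1,\dots,W_l$ and the decomposition $\mathcal{M}_k=\bigoplus_{i=1}^lW_i\mathcal{M}_kW_i\oplus R$ from the definition, and argue by induction on $\rank(W)$ over the $T$-invariant subalgebras $W\mathcal{M}_kW$. The cases $\rank(W)\in\{0,k\}$ are vacuous. For $0<\rank(W)<k$, note that $T|_{W\mathcal{M}_kW}$ is a completely positive, self-adjoint map on $W\mathcal{M}_kW$. If its spectral radius is $0$, then, being self-adjoint, it is the zero map, so $T(W)=0$; applying \lemref{lemmanullcompletelypositive} with the projection $W$ in the role of $\gamma$ gives $T|_{W\mathcal{M}_k+\mathcal{M}_kW}\equiv 0$, which contains the desired cross blocks. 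This is the single point where complete positivity is genuinely used.

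If the spectral radius $\mu$ of $T|_{W\mathcal{M}_kW}$ is positive, take a Perron eigenvector $\delta\in W\mathcal{M}_kW$ with $\delta\geq 0$ and $T(\delta)=\mu\delta$ (Perron--Frobenius for positive maps, \cite{evans}). Writing $\delta=\sum_i W_i\delta W_i+\rho$ with $\rho\in R$ and using $T|_R\equiv 0$ and $T(W_i\mathcal{M}_kW_i)\subset W_i\mathcal{M}_kW_i$ one gets $\rho=0$ and $T(W_i\delta W_i)=\mu\,W_i\delta W_i$; since $W_i\delta W_i\geq 0$ and $T|_{W_i\mathcal{M}_kW_i}$ is irreducible, each nonzero $W_i\delta W_i$ is a positive multiple of the Perron eigenvector $\gamma_i$ of $T|_{W_i\mathcal{M}_kW_i}$ (so $\mu$ is the spectral radius of $T|_{W_i\mathcal{M}_kW_i}$), cf. \cite[Lemma 5]{CarielloIEEE}. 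Hence $\Ima(\delta)=\bigoplus_{i\in S'}\Ima(W_i)$ with $S'=\{i:W_i\delta W_i\neq 0\}$, and the projection $U$ onto $\Ima(\delta)$ equals $\sum_{i\in S'}W_i\leq W$, $\rank(U)\geq 1$. Now \lemref{lemmadecompositionproperty} applied to $\delta$ gives $U\mathcal{M}_kU^{\perp}+U^{\perp}\mathcal{M}_kU\subset\ker(T)$, and testing $T(A)$ for $A\in U^{\perp}\mathcal{M}_kU^{\perp}$ against elements of $U\mathcal{M}_kU$, $U\mathcal{M}_kU^{\perp}$, $U^{\perp}\mathcal{M}_kU$ (using self-adjointness and the previous inclusion) shows $T(U^{\perp}\mathcal{M}_kU^{\perp})\subset U^{\perp}\mathcal{M}_kU^{\perp}$. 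Therefore $W'':=W-U$ is an orthogonal projection with $W''\le U^{\perp}$ and $\rank(W'')<\rank(W)$, and $W''\mathcal{M}_kW''=W\mathcal{M}_kW\cap U^{\perp}\mathcal{M}_kU^{\perp}$ is $T$-invariant. The induction hypothesis gives $T|_{W''\mathcal{M}_k(W'')^{\perp}+(W'')^{\perp}\mathcal{M}_kW''}\equiv 0$, and since $(W'')^{\perp}=Id-W''\geq W^{\perp}$ this covers $W''\mathcal{M}_kW^{\perp}$ and $W^{\perp}\mathcal{M}_kW''$. Finally $W\mathcal{M}_kW^{\perp}=U\mathcal{M}_kW^{\perp}+W''\mathcal{M}_kW^{\perp}$ with $U\mathcal{M}_kW^{\perp}\subset U\mathcal{M}_kU^{\perp}\subset\ker(T)$, so $T(W\mathcal{M}_kW^{\perp})=0$; applying this to adjoints (completely positive maps satisfy $T(X^*)=T(X)^*$) gives $T(W^{\perp}\mathcal{M}_kW)=0$, closing the induction.

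I expect the main obstacle to be the third paragraph: extracting from a single Perron eigenvector of $T|_{W\mathcal{M}_kW}$ the sub-projection $U$ that is ``aligned'' with the block structure of $T$, and checking that $U^{\perp}\mathcal{M}_kU^{\perp}$ is again $T$-invariant so that the leftover part $W''$ can be treated recursively. The zero-spectral-radius case is of a genuinely different nature and is handled separately via \lemref{lemmanullcompletelypositive}; everything else is bookkeeping of orthogonal block decompositions.
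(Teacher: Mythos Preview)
Your argument is correct, but the forward direction follows a genuinely different route from the paper. The paper does not induct on $\rank(W)$: instead it invokes \cite[Lemma~6, Proposition~7]{CarielloIEEE} to conclude that the restrictions $T|_{W\mathcal{M}_kW}$ and $T|_{W^{\perp}\mathcal{M}_kW^{\perp}}$ are themselves completely reducible, obtains their decompositions $W\mathcal{M}_kW=\bigoplus W'_i\mathcal{M}_kW'_i\oplus R'$ and $W^{\perp}\mathcal{M}_kW^{\perp}=\bigoplus\widetilde W_j\mathcal{M}_k\widetilde W_j\oplus\widetilde R$, and then observes that each $W'_i$ and each $\widetilde W_j$ must equal one of the original $W_r$, so that every cross block $W'_i\mathcal{M}_k\widetilde W_j$ lies in $R\subset\ker(T)$. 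The residual projections $A=W-\sum W'_i$ and $B=W^{\perp}-\sum\widetilde W_j$ lie in $R'$ and $\widetilde R$, hence $T(A)=T(B)=0$, and \lemref{lemmanullcompletelypositive} kills the remaining pieces of $W\mathcal{M}_kW^{\perp}+W^{\perp}\mathcal{M}_kW$. Your inductive approach trades the external input (that restrictions of completely reducible maps to invariant corners are again completely reducible) for a direct Perron--Frobenius extraction; it is more self-contained but needs one extra observation you left implicit, namely that a nonzero $W_i\delta W_i\in P_k$ with $T(W_i\delta W_i)=\mu\,W_i\delta W_i$ forces $\mu$ to be the spectral radius of $T|_{W_i\mathcal{M}_kW_i}$---this follows from self-adjointness by pairing with the Perron eigenvector $\gamma_i$, since $\mu\,tr(W_i\delta W_i\,\gamma_i)=tr(T(W_i\delta W_i)\gamma_i)=tr(W_i\delta W_i\,T(\gamma_i))=r_i\,tr(W_i\delta W_i\,\gamma_i)$ with $tr(W_i\delta W_i\,\gamma_i)>0$. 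The reverse implication in your first paragraph matches the paper's.
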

\begin{proof}
First, let us assume that for every sub-algebra $W\mathcal{M}_kW$ left invariant by $T$, we have  $T|_{W\mathcal{M}_kW^{\perp}+W^{\perp}\mathcal{M}_kW}\equiv 0.$ Our goal is to show that $T$ is completely reducible.

\vspace{0,3cm}

Now,  consider $\gamma\in P_k$  such that $T(\gamma)=\lambda\gamma$ and $\lambda>0$. Let $W\in M_k$ be the orthogonal projection onto $\Ima(\gamma)$.
Since $T$ is a positive map and $T(\gamma)=\lambda\gamma$, we have  $T(W\mathcal{M}_kW)\subset W\mathcal{M}_kW.$

\vspace{0,3cm}

By assumption $T|_{W\mathcal{M}_kW^{\perp}+W^{\perp}\mathcal{M}_kW}\equiv 0$, which shows that $T$ is completely reducible by  lemma \ref{lemmadecompositionproperty}.
Next, let us prove the converse. So assume that $T$ is completely reducible.

\vspace{0,3cm}
Recall that $T$ being completely reducible means that $\displaystyle \mathcal{M}_k=\bigoplus_{i=1}^lW_i\mathcal{M}_kW_i\bigoplus R$, where 
\begin{itemize}
\item $W_iW_j=0$ for $i\neq j$,
\item $\displaystyle R\perp \bigoplus_{i=1}^lW_i\mathcal{M}_kW_i$ with respect to the trace inner product,
\item $T(W_i\mathcal{M}_kW_i)\subset W_i\mathcal{M}_kW_i$ for every $i$,
\item $T|_{W_i\mathcal{M}_kW_i}$ is irreducible and
\item $T|_{R}\equiv 0$ \vspace{0,3cm}
\end{itemize}

Since $T$ is self-adjoint and $T(W\mathcal{M}_kW)\subset W\mathcal{M}_kW$,
$$tr(T(W^{\perp})W)=tr(W^{\perp}T(W))=0.$$

Hence $T(W^{\perp}\mathcal{M}_kW^{\perp})\subset W^{\perp}\mathcal{M}_kW^{\perp}$.\vspace{0,3cm}

Now, by \cite[Lemma 6, Proposition 7]{CarielloIEEE}, $T|_{W\mathcal{M}_kW}$ and  $T|_{W^{\perp}\mathcal{M}_kW^{\perp}}$ are completely reducible. Thus, \begin{center}
$\displaystyle W\mathcal{M}_kW=\bigoplus_{i=1}^sW_i'\mathcal{M}_kW_i'\bigoplus R'$ and $\displaystyle W^{\perp}\mathcal{M}_kW^{\perp}=\bigoplus_{i=1}^t\widetilde{W_i}\mathcal{M}_k\widetilde{W_i}\bigoplus \widetilde{R},$
\end{center} where, for every $i$, 
$W_i'\mathcal{M}_kW_i'$, $\widetilde{W_i}\mathcal{M}_k\widetilde{W_i}$, $R'$ and $\widetilde{R}$ have the same characteristics of  $W_i\mathcal{M}_kW_i$ and $R$ as described in the items above.
\vspace{0,3cm}

Next, notice that $A=W-W_1'-\ldots-W_s'\in R'$ and $B=W^{\perp}-\widetilde{W}_1-\ldots-\widetilde{W}_s\in \widetilde{R}.$ Hence $T(A)=T(B)=0$.\vspace{0,3cm}

By lemma \ref{lemmanullcompletelypositive}, since $T$ is completely positive,  \begin{equation}\label{eq2}
T|_{A\mathcal{M}_k+\mathcal{M}_kA}\equiv 0 \text{ and } T|_{B\mathcal{M}_k+\mathcal{M}_kB}\equiv 0.
\end{equation}

Notice that $$W\mathcal{M}_kW^{\perp}=\bigoplus_{i,j} W_i'\mathcal{M}_k\widetilde{W}_j\bigoplus \left(A\mathcal{M}_kW^{\perp}+ W\mathcal{M}_kB\right),$$
$$W^{\perp}\mathcal{M}_kW=\bigoplus_{j,i} \widetilde{W}_j\mathcal{M}_kW'_i\bigoplus \left(B\mathcal{M}_kW+ W^{\perp}\mathcal{M}_kA\right).$$

Recall that $T|_{W_i'\mathcal{M}_kW_i'}$ and $T|_{\widetilde{W}_i\mathcal{M}_k\widetilde{W}_i}$ are irreducible, but the only sub-algebras left invariant by $T$, where $T$ is irreducible, are the sub-algebras $W_i\mathcal{M}_kW_i$, because $T$ is completely reducible (See \cite[Proposition 7]{CarielloIEEE}).\vspace{0,3cm}

So, every  $W_i'$ and every $\widetilde{W}_j$, must be equal to some $W_r$. Thus, 
every  $W_i'\mathcal{M}_k\widetilde{W}_j$ and $\widetilde{W}_j\mathcal{M}_kW'_i$ are orthogonal to $\bigoplus_{i=1}^lW_i\mathcal{M}_kW_i$.
Consequently, $W_i'\mathcal{M}_k\widetilde{W}_j\subset R$ and $\widetilde{W}_j\mathcal{M}_kW'_i\subset R$. Therefore $$\displaystyle \left(\bigoplus_{i,j} W_i'\mathcal{M}_k\widetilde{W}_j\right)\oplus \left(\bigoplus_{j,i} \widetilde{W}_j\mathcal{M}_kW'_i\right)\subset R\subset \ker(T).$$\vspace{0,3cm}

Recall that $A\mathcal{M}_kW^{\perp}+ W\mathcal{M}_kB+ B\mathcal{M}_kW+ W^{\perp}\mathcal{M}_kA\subset \ker(T)$ 
by equation \eqref{eq2}. So $W\mathcal{M}_kW^{\perp}+W^{\perp}\mathcal{M}_kW\subset \ker(T)$.
\end{proof}

\vspace{0,3cm}

\vspace{0,3cm} 

This last proposition  provides a way to check the complete reducibility of $F_{\gamma}\circ G_{\gamma}:\mathcal{M}_k\rightarrow \mathcal{M}_k$ directly from $\gamma$ as you can see in the next corollary. This is the simplest characterization of the complete reducibility property that we are aware. 

\vspace{0,3cm}

\begin{corollary}\label{corollary2} Let $\gamma\in\mathcal{M}_k\otimes\mathcal{M}_m$ be a state. Then $\gamma\in CR_{k,m}$ if and only if for every pair of orthogonal projections $W\in\mathcal{M}_k$ and $V\in\mathcal{M}_m$, the following two conditions are equivalent
\begin{itemize}
\item[$a)$]  $tr(\gamma(W\otimes V^{\perp}))=tr(\gamma( W^{\perp}\otimes V))=0$,
\item[$b)$]  $\gamma=(W\otimes V)\gamma(W\otimes V)+(W^{\perp}\otimes V^{\perp})\gamma(W^{\perp}\otimes V^{\perp})$.
\end{itemize}
\end{corollary}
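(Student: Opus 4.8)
The plan is to deduce Corollary~\ref{corollary2} from Proposition~\ref{propositionnull} applied to the completely positive self-adjoint map $T=F_\gamma\circ G_\gamma$ (self-adjointness and complete positivity being guaranteed by Lemma~\ref{lemmaCPselfadjoint}). The bridge is Lemma~\ref{lemmasubalgebrainvariant}, which translates ``$W\mathcal M_kW$ is left invariant by $T$'' into the existence of $V\in\mathcal M_m$ with $tr(\gamma(W\otimes V^\perp))=tr(\gamma(W^\perp\otimes V))=0$, i.e.\ exactly condition $a)$. So the proposition tells us: $\gamma\in CR_{k,m}$ iff for every orthogonal projection $W$ admitting some $V$ satisfying $a)$, we have $T|_{W\mathcal M_kW^\perp+W^\perp\mathcal M_kW}\equiv 0$. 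It remains to show this last vanishing statement is equivalent, for the relevant $W,V$, to condition $b)$.

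First I would establish the ``$b)\Rightarrow a)$'' direction of the internal equivalence, valid for every state: if $\gamma=(W\otimes V)\gamma(W\otimes V)+(W^\perp\otimes V^\perp)\gamma(W^\perp\otimes V^\perp)$, then since $(W\otimes V^\perp)$ annihilates both summands on the left (the first because $WW=W$, $V^\perp V=0$; the second because $W W^\perp=0$), we get $(W\otimes V^\perp)\gamma=0$, hence $tr(\gamma(W\otimes V^\perp))=0$, and symmetrically $tr(\gamma(W^\perp\otimes V))=0$. This is the ``easy'' half mentioned in the introduction and needs no hypothesis on $\gamma$.

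The core of the argument is therefore: assuming $\gamma\in CR_{k,m}$ and a pair $W,V$ satisfying $a)$, derive $b)$. From $a)$ and Lemma~\ref{lemmasubalgebrainvariant}, $T(W\mathcal M_kW)\subset W\mathcal M_kW$; by Proposition~\ref{propositionnull}, $T|_{W\mathcal M_kW^\perp+W^\perp\mathcal M_kW}\equiv 0$. I then want to reinterpret this kernel condition at the level of $\gamma$. Using Remark~\ref{remarkpropertiesadjoints} / equation~\eqref{eqadjoints}, the vanishing $F_\gamma\circ G_\gamma(W X W^\perp)=0$ for all $X$, combined with self-adjointness and positivity of $T$, should force $G_\gamma(WXW^\perp)=0$; tracing through the relation $tr(\gamma(X\otimes Y))=tr(G_\gamma(X)Y)$ and the fact that $\Ima(G_\gamma(W))\subset\Ima(V)$, $\Ima(F_\gamma(V))\subset\Ima(W)$ (already extracted in the proof of Lemma~\ref{lemmasubalgebrainvariant} from condition $a)$), I expect to conclude that $\gamma$ has no components in the ``off-diagonal'' blocks $W\otimes V^\perp$, $W^\perp\otimes V$, $W\otimes(\text{stuff})\,W^\perp$ etc.\ — precisely the blocks complementary to $(W\otimes V)\mathcal M(W\otimes V)\oplus(W^\perp\otimes V^\perp)\mathcal M(W^\perp\otimes V^\perp)$ that must vanish for $b)$ to hold. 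The main obstacle I anticipate is the bookkeeping here: identifying exactly which tensor-product matrix blocks of $\gamma$ are killed by ``$T|_{W\mathcal M_kW^\perp+W^\perp\mathcal M_kW}\equiv 0$ together with $a)$,'' and checking that their union is the full orthogonal complement of the two surviving blocks, so that $\gamma$ genuinely decomposes as in $b)$. One must be careful that it is the combination of the off-diagonal vanishing from Proposition~\ref{propositionnull} \emph{and} the image containments from $a)$ (not either alone) that pins down all the required blocks; I would organize this by writing $Id_k=W+W^\perp$, $Id_m=V+V^\perp$, expanding $\gamma=\sum_{i,j\in\{W,W^\perp\},\,p,q\in\{V,V^\perp\}}(i\otimes p)\gamma(j\otimes q)$, and eliminating each unwanted term in turn.
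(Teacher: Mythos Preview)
Your proposal is correct and follows essentially the same route as the paper: both directions are obtained by combining Lemma~\ref{lemmasubalgebrainvariant} with Proposition~\ref{propositionnull} applied to $T=F_\gamma\circ G_\gamma$, and the key reduction $T|_{W\mathcal M_kW^\perp+W^\perp\mathcal M_kW}\equiv 0\Rightarrow G_\gamma|_{W\mathcal M_kW^\perp+W^\perp\mathcal M_kW}\equiv 0$ via the adjointness of $F_\gamma$ and $G_\gamma$ is exactly what the paper does. The only cosmetic difference is in the final bookkeeping: where you plan to expand $\gamma$ into sixteen $(i\otimes p)\gamma(j\otimes q)$ blocks and kill them one by one, the paper first uses $a)$ and positivity to reduce to four blocks and then kills the two cross terms by invoking the representation $\gamma=\mathrm{Id}\otimes G_\gamma((\cdot)^t)(u_ku_k^*)$ from Remark~\ref{remarkpropertiesadjoints}, which packages the same computation more compactly.
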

\begin{proof}It is obvious that if $b)$ holds then $a)$ holds for every state $\gamma\in\mathcal{M}_k\otimes\mathcal{M}_m$.

\vspace{0,2cm}

Now, let us prove the converse  for states in $CR_{k,m}$. So let $\gamma\in CR_{k,m}$ and assume that  $$tr(\gamma(W\otimes V^{\perp}))=tr(\gamma(W^{\perp}\otimes V))=0.$$ 

Since $\gamma\geq 0$, $(W\otimes V^{\perp})\geq 0$ and $(W^{\perp}\otimes V)\geq 0$, from the equation above we get \begin{center}
$\gamma(W\otimes V^{\perp})=(W\otimes V^{\perp})\gamma=0$ and $\gamma( W^{\perp}\otimes V)=(W^{\perp} \otimes V)\gamma=0.$
\end{center}

\vspace{0,2cm}

Therefore, $\gamma=(W+W^{\perp}\otimes V+V^{\perp})\gamma (W+W^{\perp}\otimes V+V^{\perp})=$ \begin{equation*}
(W\otimes V)\gamma(W\otimes V)+(W^{\perp}\otimes V^{\perp})\gamma(W^{\perp}\otimes V^{\perp})+
\end{equation*}
\begin{equation*}
 (W\otimes V)\gamma(W^{\perp}\otimes V^{\perp})+(W^{\perp}\otimes V^{\perp})\gamma(W\otimes V).
\end{equation*}

\vspace{0,2cm}

It remains to show that 
\begin{equation}\label{eq4}
(W\otimes V)\gamma(W^{\perp}\otimes V^{\perp})+(W^{\perp}\otimes V^{\perp})\gamma(W\otimes V)=0.
\end{equation}

\vspace{0,2cm}

By lemma \ref{lemmasubalgebrainvariant}, the sub-algebra $W\mathcal{M}_kW$ is left invariant by $F_{\gamma}\circ G_{\gamma}:\mathcal{M}_k\rightarrow\mathcal{M}_k$.
Since $F_{\gamma}\circ G_{\gamma}:\mathcal{M}_k\rightarrow\mathcal{M}_k$ is completely positive, self-adjoint (lemma \ref{lemmaCPselfadjoint}) and completely reducible by hypothesis, we have $$F_{\gamma}\circ G_{\gamma}|_{W\mathcal{M}_kW^{\perp}+W^{\perp}\mathcal{M}_kW}\equiv 0,$$
by lemma \ref{propositionnull}.

\vspace{0,2cm}

Recall that $F_{\gamma}:\mathcal{M}_m\rightarrow \mathcal{M}_k$ is the adjoint of $G_{\gamma}:\mathcal{M}_k\rightarrow \mathcal{M}_m$, whence  \begin{equation}\label{eq5}
G_{\gamma}|_{W\mathcal{M}_kW^{\perp}+W^{\perp}\mathcal{M}_kW}\equiv 0.
\end{equation}

\vspace{0,2cm}

Let $u_k$ be as in theorem \ref{theoremchoi}. Notice that, $u_k=\sum_{i=1}^kv_i\otimes \overline{v_i}$, for any orthornomal basis $v_1,\ldots,v_k$ of $\mathbb{C}^k$. Hence,  we have $(W\otimes \overline{W}+W^{\perp}\otimes \overline{W^{\perp}})u_k=u_k.$

\vspace{0,2cm}

Thus,  $u_ku_k^*=(W\otimes \overline{W}+W^{\perp}\otimes \overline{W^{\perp}})u_ku_k^*(W\otimes \overline{W}+W^{\perp}\otimes \overline{W^{\perp}})=\sum_{i,j=1}^2(W_i\otimes \overline{W_i})u_ku_k^*(W_j\otimes \overline{W_j})$, where $W_1=W$ and $W_2=W^{\perp}$.

\vspace{0,2cm}

Recall that $\gamma=Id\otimes G_{\gamma}(\cdot)(u_ku_k^*)^{\Gamma}$ by remark \ref{remarkpropertiesadjoints}. So 

$$\gamma=Id\otimes G_{\gamma}(\cdot)(u_ku_k^*)^{\Gamma}=\sum_{i,j=1}^2Id\otimes G_{\gamma}(\cdot)((W_i\otimes W_j)(u_ku_k^*)^{\Gamma}(W_j\otimes W_i)).$$

By equation \eqref{eq5}, whenever $i\neq j$, we have $Id\otimes G_{\gamma}(\cdot)((W_i\otimes W_j)(u_ku_k^*)^{\Gamma}(W_j\otimes W_i))=0$. Therefore 
$$\gamma=\sum_{i=1}^2Id\otimes G_{\gamma}(\cdot)((W_i\otimes W_i)(u_ku_k^*)^{\Gamma}(W_i\otimes W_i)).$$
Moreover, this last equation implies that  $(W_i\otimes R)\gamma(W_j\otimes S)=0$, whenever $i\neq j$, for arbitrary matrices $R,S\in\mathcal{M}_m$, which proves the desired equation \eqref{eq4}.

\vspace{0,2cm}

We have just proved that if $\gamma\in CR_{k,m}$ then the two conditions described in the statement of this corollary are equivalent. It remains to show the converse. Let us assume the equivalence of the two conditions. Consider a sub-algebra $W\mathcal{M}_kW$ left invariant by $F_{\gamma}\circ G_{\gamma}:\mathcal{M}_k\rightarrow\mathcal{M}_k$. 

\vspace{0,2cm}

By lemma \ref{lemmasubalgebrainvariant}, there is an orthogonal projection $V\in \mathcal{M}_m$ such that $tr(\gamma(W\otimes V^{\perp}))=tr(\gamma( W^{\perp}\otimes V))=0$. Since the two conditions are equivalent, we have  $$\gamma=(W\otimes V)\gamma(W\otimes V)+(W^{\perp}\otimes V^{\perp})\gamma(W^{\perp}\otimes V^{\perp}).$$

Therefore, from the definition of $G_{\gamma}:\mathcal{M}_k\rightarrow \mathcal{M}_m$, it follows that $G_{\gamma}|_{W\mathcal{M}_kW^{\perp}+W^{\perp}\mathcal{M}_kW}\equiv 0$, whence $F_{\gamma}\circ G_{\gamma}|_{W\mathcal{M}_kW^{\perp}+W^{\perp}\mathcal{M}_kW}\equiv 0.$ By proposition \ref{propositionnull}, $F_{\gamma}\circ G_{\gamma}:\mathcal{M}_k\rightarrow\mathcal{M}_k$ is completely reducible, which means that $\gamma\in CR_{k,m}$  as desired. \end{proof}

\vspace{0,5cm}
 The previous corollary shows that positive under partial transpose states are completely reducible in a very straightforward way (See next corollary). In addition, it is very useful in the proof of the convexity of the set $CR_{k,m}$ (See theorem \ref{theoremconvex}).

\vspace{0,5cm}
\begin{corollary}
If $\gamma\in\mathcal{M}_k\otimes\mathcal{M}_m$ is a positive under partial transpose state then $\gamma\in CR_{k,m}$.
\end{corollary}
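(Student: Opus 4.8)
The plan is to use the new characterization from Corollary~\ref{corollary2}: it suffices to show that whenever $\gamma\geq 0$ and $\gamma^{\Gamma}\geq 0$, condition $a)$ of Corollary~\ref{corollary2} implies condition $b)$. So fix orthogonal projections $W\in\mathcal{M}_k$ and $V\in\mathcal{M}_m$ with $tr(\gamma(W\otimes V^{\perp}))=tr(\gamma(W^{\perp}\otimes V))=0$, and aim to prove $\gamma=(W\otimes V)\gamma(W\otimes V)+(W^{\perp}\otimes V^{\perp})\gamma(W^{\perp}\otimes V^{\perp})$.

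First I would repeat the opening move of the proof of Corollary~\ref{corollary2}: since $\gamma\geq 0$, $W\otimes V^{\perp}\geq 0$ and $W^{\perp}\otimes V\geq 0$, the trace conditions force $(W\otimes V^{\perp})\gamma=\gamma(W\otimes V^{\perp})=0$ and $(W^{\perp}\otimes V)\gamma=\gamma(W^{\perp}\otimes V)=0$. Writing $Id=W+W^{\perp}$ on the left factor and $Id=V+V^{\perp}$ on the right and expanding $\gamma=(Id\otimes Id)\gamma(Id\otimes Id)$, all cross terms involving $W\otimes V^{\perp}$ or $W^{\perp}\otimes V$ vanish, leaving
\[
\gamma=(W\otimes V)\gamma(W\otimes V)+(W^{\perp}\otimes V^{\perp})\gamma(W^{\perp}\otimes V^{\perp})+(W\otimes V)\gamma(W^{\perp}\otimes V^{\perp})+(W^{\perp}\otimes V^{\perp})\gamma(W\otimes V).
\]
So the whole problem reduces to showing the last two (off-diagonal) terms cancel, i.e. $(W\otimes V)\gamma(W^{\perp}\otimes V^{\perp})=0$.

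To kill this off-diagonal block I would bring in positivity of the partial transpose. Apply the partial transpose $\Gamma$ (transposing the second factor) to the relations $(W\otimes V^{\perp})\gamma=0$ and $\gamma(W\otimes V^{\perp})=0$: using $(X\otimes Y)^{\Gamma}=X\otimes Y^t$ for the sandwiching and bilinearity, one gets $(W\otimes (V^{\perp})^t)\gamma^{\Gamma}=\gamma^{\Gamma}(W\otimes (V^{\perp})^t)=0$ and similarly $(W^{\perp}\otimes V^t)\gamma^{\Gamma}=\gamma^{\Gamma}(W^{\perp}\otimes V^t)=0$; here $V^t$ and $(V^{\perp})^t$ are again orthogonal projections, complementary to each other, with $(V^{\perp})^t=Id-V^t$. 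Now $\gamma^{\Gamma}\geq 0$ lets me run the same expansion argument on $\gamma^{\Gamma}$ with projections $W$ (first factor) and $V^t$ (second factor): the trace/positivity argument gives $\gamma^{\Gamma}=(W\otimes V^t)\gamma^{\Gamma}(W\otimes V^t)+(W^{\perp}\otimes (V^t)^{\perp})\gamma^{\Gamma}(W^{\perp}\otimes (V^t)^{\perp})+\text{cross terms}$, and, because $\gamma^{\Gamma}\geq 0$ while $W\otimes (V^{\perp})^t\geq 0$ and $W^{\perp}\otimes V^t\geq 0$, the relevant cross terms $(W\otimes V^t)\gamma^{\Gamma}(W^{\perp}\otimes(V^t)^{\perp})$ and its adjoint vanish for exactly the reason they would in Corollary~\ref{corollary2}. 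Applying $\Gamma$ once more to recover $\gamma$ then forces $(W\otimes V)\gamma(W^{\perp}\otimes V^{\perp})=0$ and $(W^{\perp}\otimes V^{\perp})\gamma(W\otimes V)=0$, which is \eqref{eq4} and completes the proof.

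The main obstacle I anticipate is purely bookkeeping: keeping straight which block of which matrix ($\gamma$ versus $\gamma^{\Gamma}$) is being annihilated and transporting the vanishing of an off-diagonal block of $\gamma^{\Gamma}$ back to an off-diagonal block of $\gamma$ under $\Gamma$, since partial transpose does not act as a simple sandwich on products. A clean way to sidestep this is to avoid the partial-transpose juggling altogether and instead invoke the machinery already in the paper: by \cite[Theorem 26]{CarielloIEEE} the map $F_{\gamma}\circ G_{\gamma}$ is completely reducible when $\gamma^{\Gamma}\geq 0$, which by definition says $\gamma\in CR_{k,m}$ and is exactly the claim; but presumably the point of stating this corollary here is to re-derive that fact transparently from Corollary~\ref{corollary2}, so I would present the direct argument above, being careful with the identity $((X\otimes Y)Z(X'\otimes Y'))^{\Gamma}=(X\otimes {Y'}^{t})Z^{\Gamma}(X'\otimes Y^{t})$ that makes the two sandwich computations match up.
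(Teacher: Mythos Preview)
Your overall strategy matches the paper's: use Corollary~\ref{corollary2}, reduce $\gamma$ to four blocks via positivity, then invoke PPT to kill the two off-diagonal blocks. The gap is in the step where you claim that the cross terms $(W\otimes V^t)\gamma^{\Gamma}(W^{\perp}\otimes(V^{\perp})^t)$ of the analogous four-block expansion of $\gamma^{\Gamma}$ vanish ``because $\gamma^{\Gamma}\geq 0$ while $W\otimes (V^{\perp})^t\geq 0$ and $W^{\perp}\otimes V^t\geq 0$.'' Positivity together with the annihilation relations $(W\otimes (V^{\perp})^t)\gamma^{\Gamma}=0$ and $(W^{\perp}\otimes V^t)\gamma^{\Gamma}=0$ only gets you \emph{down to} four blocks for $\gamma^{\Gamma}$, exactly as it did for $\gamma$; it does not kill the off-diagonal ones. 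If it did, the identical reasoning applied to $\gamma$ itself would already prove condition $b)$ without ever using PPT. The phrase ``for exactly the reason they would in Corollary~\ref{corollary2}'' does not help: in that corollary the off-diagonal vanishing is obtained from complete reducibility, which is precisely what you are trying to establish here.

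The missing move---which is what the paper actually does and which you gesture at in your last sentence---is to apply the sandwich identity $((X\otimes Y)Z(X'\otimes Y'))^{\Gamma}=(X\otimes {Y'}^{t})Z^{\Gamma}(X'\otimes Y^{t})$ \emph{directly to the off-diagonal block of $\gamma$}. This gives
\[
\bigl((W\otimes V)\gamma(W^{\perp}\otimes V^{\perp})\bigr)^{\Gamma}=(W\otimes (V^{\perp})^{t})\,\gamma^{\Gamma}\,(W^{\perp}\otimes V^{t}),
\]
and the right-hand side is zero because $(W\otimes (V^{\perp})^{t})\gamma^{\Gamma}=0$. The key point is that partial transpose \emph{swaps} the second-factor projections in the sandwich, so the off-diagonal block of $\gamma$ lands on a block of $\gamma^{\Gamma}$ that is annihilated by the relations you already have---not on the off-diagonal block of a fresh four-block decomposition of $\gamma^{\Gamma}$. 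Once you make this one substitution, your write-up becomes the paper's proof.
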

\begin{proof}
Let  $W_1\in\mathcal{M}_k$ and $V_1\in\mathcal{M}_m$ be orthogonal projections and define, $W_2=W_1^{\perp}$ and $V_2=V_1^{\perp}$.

 In addition assume that
 \begin{equation}\label{eq6}
tr(\gamma(W_1\otimes V_2))=tr(\gamma( W_2\otimes V_1))=0.
\end{equation} 

\vspace{0,2cm}

We must show that $\gamma=\sum_{i=1}^2(W_i\otimes V_i)\gamma(W_i\otimes V_i)$ to conclude that  $\gamma\in CR_{k,m}$ by corollary \ref{corollary2}.

\vspace{0,2cm}

By equation \eqref{eq6}, since $\gamma\geq 0$, $W_1\otimes V_2\geq 0$ and $W_2\otimes V_1\geq 0$, we get $\gamma(W_i\otimes V_j)=0$ for $i\neq j$. 

\vspace{0,2cm}

Thus,
\begin{equation}\label{eq7}
 \gamma=\sum_{i,j=1}^2(W_i\otimes V_i)\gamma(W_j\otimes V_j).
 \end{equation}

Now, equation \eqref{eq6} also implies that $tr(\gamma^{\Gamma}(W_1\otimes \overline{V_2}))=tr(\gamma^{\Gamma}(W_2\otimes \overline{V_1}))=0$. 

\vspace{0,2cm}

Since $\gamma^{\Gamma}\geq 0$, $W_1\otimes \overline{V_2}\geq 0$ and $W_2\otimes \overline{V_1}\geq 0$,  we  also get $\gamma^{\Gamma}(W_i\otimes \overline{V_j})=0$, whenever $i\neq j$. 

\vspace{0,2cm}

Thus, by equation \eqref{eq7}, we have 
$$\gamma^{\Gamma}=\sum_{i,j=1}^2(W_i\otimes \overline{V_j})\gamma^{\Gamma}(W_j\otimes \overline{V_i})=\sum_{i=1}^2(W_i\otimes \overline{V_i})\gamma^{\Gamma}(W_i\otimes \overline{V_i}).$$

Therefore $\gamma=\sum_{i=1}^2(W_i\otimes V_i)\gamma(W_i\otimes V_i)$, which completes the proof.
\end{proof}

\vspace{0,3cm}

The following corollary shows that positive definite states are trivially completely reducible. In the next theorem we see that $CR_{k,m}$ is a convex cone, so $CR_{k,m}$ is a dense convex cone inside the cone of all states. Despite being a large cone, we shall see in the next section  that its extreme rays coincide with the extreme rays of the cone of the separable states.
 
 \vspace{0,3cm}  
 
\begin{corollary}\label{corollaryCRdense}Every positive definite state of $\mathcal{M}_k\otimes \mathcal{M}_{m}$ belongs to $CR_{k,m}$. In particular, $CR_{k,m}$ is dense in the set of states and it is not closed.\end{corollary}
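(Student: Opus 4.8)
The plan is to verify the two claims using Corollary~\ref{corollary2}. For the first claim, let $\gamma\in\mathcal{M}_k\otimes\mathcal{M}_m$ be positive definite. I would check the equivalence of conditions $a)$ and $b)$ in Corollary~\ref{corollary2} for an arbitrary pair of orthogonal projections $W\in\mathcal{M}_k$, $V\in\mathcal{M}_m$. Since $b)$ always implies $a)$, only the forward direction needs attention. Suppose $a)$ holds, i.e. $tr(\gamma(W\otimes V^{\perp}))=tr(\gamma(W^{\perp}\otimes V))=0$. Because $\gamma>0$ and $W\otimes V^{\perp}\geq 0$, $W^{\perp}\otimes V\geq 0$, these trace conditions force $W\otimes V^{\perp}=0$ and $W^{\perp}\otimes V=0$ (a positive definite $\gamma$ has $tr(\gamma P)=0\iff P=0$ for $P\geq 0$). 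A tensor product $W\otimes V^{\perp}$ of orthogonal projections is zero precisely when $W=0$ or $V^{\perp}=0$, i.e. $V=Id$; similarly $W^{\perp}\otimes V=0$ means $W=Id$ or $V=0$. Running through the four resulting cases ($V=Id$ or $V=0$ paired with $W=Id$ or $W=0$), one checks directly that in each case $(W\otimes V)\gamma(W\otimes V)+(W^{\perp}\otimes V^{\perp})\gamma(W^{\perp}\otimes V^{\perp})=\gamma$, since one of the two projections $W\otimes V$, $W^{\perp}\otimes V^{\perp}$ equals $Id\otimes Id$ and the other equals $0$ (e.g. if $V=Id$ and $W=0$ then $W\otimes V=0$ and $W^{\perp}\otimes V^{\perp}=Id\otimes 0$... one must be slightly careful: if $V=Id$ and $W=Id$ then $W\otimes V=Id$; if $V=Id$ and $W=0$ then $W^{\perp}\otimes V^{\perp}=Id\otimes 0=0$ and $W\otimes V=0$, so the sum is $0\neq\gamma$ — hence this case cannot occur, which is fine because then $a)$ together with $\gamma>0$ would be contradictory). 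In short, the only pairs satisfying $a)$ are those for which $b)$ is immediate. Thus $\gamma\in CR_{k,m}$.

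For the density statement, the set of positive definite states is dense in the cone of all states: given any state $\rho\geq 0$ and $\varepsilon>0$, $\rho+\varepsilon\,Id$ is positive definite and converges to $\rho$ as $\varepsilon\to 0$. Since every positive definite state lies in $CR_{k,m}$ by the first part, $CR_{k,m}$ contains a dense subset of the cone of states, hence is itself dense in that cone.

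For non-closedness, I would invoke the previously cited facts: the cone of all states is closed, and $CR_{k,m}$ is a proper subset of it — by \cite[Lemmas 29, 30]{CarielloIEEE} there exist states not in $CR_{k,m}$. A dense proper subset of a closed set cannot be closed (if it were, its closure — itself — would equal the whole set). Hence $CR_{k,m}$ is not closed.

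The only mildly delicate point is the case analysis in the first part: one must observe that the hypothesis $a)$ together with positive definiteness of $\gamma$ actually \emph{restricts} which $(W,V)$ are admissible (ruling out, e.g., $W=Id, V=\text{something} \neq Id$ with $W^{\perp}\otimes V = 0$ forcing $V=0$), and that in every admissible case the decomposition $b)$ is the trivial one. This is routine once one writes $W\otimes V^\perp = 0 \iff (W=0 \text{ or } V=Id)$ and the symmetric statement, and then enumerates. No real obstacle is expected; Corollary~\ref{corollary2} does all the heavy lifting.
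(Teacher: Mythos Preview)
Your proposal is correct and follows essentially the same approach as the paper: both use Corollary~\ref{corollary2}, deduce $W\otimes V^{\perp}=W^{\perp}\otimes V=0$ from positive definiteness, and then argue the remaining decomposition is trivial. The paper streamlines your case analysis by observing that $Id\otimes Id=(W+W^{\perp})\otimes(V+V^{\perp})=W\otimes V+W^{\perp}\otimes V^{\perp}$, which forces one summand to be $Id\otimes Id$ and the other $0$; this avoids the slightly tangled enumeration (your ``four cases'' should really be the four conjuncts arising from $(W=0\text{ or }V=Id)\wedge(W=Id\text{ or }V=0)$, two of which are immediately contradictory), but your conclusion is the same.
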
 
 \begin{proof}

Let $\gamma\in  \mathcal{M}_k\otimes \mathcal{M}_{m}$ be a positive definite state.  Let  $W\in\mathcal{M}_k$ and $V\in\mathcal{M}_m$ be orthogonal projections satisfying
$$tr(\gamma(W\otimes V^{\perp}))=tr(\gamma( W^{\perp}\otimes V))=0.$$ 

 \vspace{0,3cm}

Since $\gamma$ is positive definite, $W\otimes V^{\perp}=W^{\perp}\otimes V=0$. 

 \vspace{0,3cm}

Hence $Id\otimes Id= W\otimes V+W^{\perp}\otimes V^{\perp},$ which happens only when \begin{center}
$W\otimes V=Id\otimes Id$ $($and $W^{\perp}\otimes V^{\perp}=0)$\ \  or\ \ $W^{\perp}\otimes V^{\perp}=Id\otimes Id$ $($and $W\otimes V=0)$.
\end{center}

Then $\gamma$ is trivially equal to $$(W\otimes V)\gamma(W\otimes V)+(W^{\perp}\otimes V^{\perp})\gamma(W^{\perp}\otimes V^{\perp}).$$

By corollary \ref{corollary2}, item $a)$ implies item $b)$. Therefore both items are equivalent and $\gamma\in CR_{k,m}$.

 \vspace{0,3cm}

Since the set of positive definite states is dense in the set of states and there are states outside $CR_{k,m}$ (See remark \ref{remarkcounterexample}), the set $CR_{k,m}$ is dense in the set of states and it is not closed.
\end{proof}

 \vspace{0,3cm}

The next theorem proves the convexity of $CR_{k,m}$.

\vspace{0,5cm}

\begin{theorem}\label{theoremconvex}The set $CR_{k,m}$ is a dense convex cone within the cone of all states of $\mathcal{M}_k\otimes\mathcal{M}_m$. \end{theorem}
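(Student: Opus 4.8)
The plan is to establish three things: that $CR_{k,m}$ is a cone (closed under positive scalar multiples), that it is convex (closed under sums, hence under convex combinations), and density, which is already handled by Corollary~\ref{corollaryCRdense}. The cone property is immediate: for $\lambda>0$, the maps satisfy $G_{\lambda\gamma}=\lambda G_\gamma$ and $F_{\lambda\gamma}=\lambda F_\gamma$, so $F_{\lambda\gamma}\circ G_{\lambda\gamma}=\lambda^2\, F_\gamma\circ G_\gamma$, and a positive self-adjoint map is completely reducible if and only if any positive scalar multiple of it is, since all the defining conditions (invariant sub-algebras, irreducibility of restrictions, vanishing on $R$) are invariant under scaling. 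So the real content is convexity, and the main tool will be the new characterization in Corollary~\ref{corollary2}: it suffices to show that for $\gamma_1,\gamma_2\in CR_{k,m}$ the sum $\gamma=\gamma_1+\gamma_2$ satisfies the implication $a)\Rightarrow b)$ for every pair of orthogonal projections $W\in\mathcal{M}_k$, $V\in\mathcal{M}_m$.

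So I would fix $W,V$ and assume $\operatorname{tr}(\gamma(W\otimes V^\perp))=\operatorname{tr}(\gamma(W^\perp\otimes V))=0$. Since $\gamma_1,\gamma_2\geq 0$ and $W\otimes V^\perp$, $W^\perp\otimes V$ are positive semidefinite, the vanishing of the trace of each product forces $\operatorname{tr}(\gamma_i(W\otimes V^\perp))=\operatorname{tr}(\gamma_i(W^\perp\otimes V))=0$ for $i=1,2$ separately (a sum of nonnegative numbers vanishing forces each to vanish). Now each $\gamma_i$ is completely reducible, so by Corollary~\ref{corollary2} (the direction $a)\Rightarrow b)$ applied to $\gamma_i$) we get $\gamma_i=(W\otimes V)\gamma_i(W\otimes V)+(W^\perp\otimes V^\perp)\gamma_i(W^\perp\otimes V^\perp)$ for $i=1,2$. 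Adding these two identities yields exactly condition $b)$ for $\gamma$. Hence $a)\Rightarrow b)$ holds for $\gamma$; since $b)\Rightarrow a)$ is automatic for any state, the two conditions are equivalent for $\gamma$, and Corollary~\ref{corollary2} gives $\gamma\in CR_{k,m}$.

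Combining the cone property with closure under sums gives closure under arbitrary nonnegative linear combinations, so $CR_{k,m}$ is a convex cone; density and non-closedness are Corollary~\ref{corollaryCRdense}. I do not expect a serious obstacle here — the genuine difficulty has already been absorbed into Corollary~\ref{corollary2}, whose proof (via Proposition~\ref{propositionnull}) is where complete positivity is essential. The only point requiring a moment's care is the "splitting" step: that $\operatorname{tr}(\gamma(M))=0$ with $\gamma=\gamma_1+\gamma_2$, all summands positive semidefinite and $M\geq 0$, forces $\operatorname{tr}(\gamma_i M)=0$; this is just nonnegativity of $\operatorname{tr}(\gamma_i M)=\operatorname{tr}((\gamma_i)^{1/2}M(\gamma_i)^{1/2})\geq 0$ together with the fact that two nonnegative reals summing to zero both vanish.
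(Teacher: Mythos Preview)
Your proof is correct and follows essentially the same approach as the paper's own proof: both reduce convexity to Corollary~\ref{corollary2}, split the trace conditions $a)$ to each summand via nonnegativity, apply Corollary~\ref{corollary2} to each summand, and add the resulting decompositions; the cone and density parts are handled the same way. Your write-up simply spells out the splitting step and the scaling argument for the cone property in slightly more detail than the paper does.
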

\begin{proof} It is clear from the definition that $CR_{k,m}$ is a cone and, by the previous corollary, it is dense  within the cone of all states. It remains to show its convexity.

 \vspace{0,3cm}

Let $\gamma,\delta\in CR_{k,m}$ and define $\alpha=\gamma+\delta$.  

 \vspace{0,3cm}

 In order to prove that $\alpha\in CR_{k,m}$, our corollary \ref{corollary2} says that given a pair of orthogonal projections
$W\in\mathcal{M}_k$ and $V\in\mathcal{M}_m$  satisfying 
\begin{equation}\label{eq8}
tr(\alpha(W\otimes V^{\perp}))=tr(\alpha( W^{\perp}\otimes V))=0,
\end{equation}

it is sufficient to show that $$\alpha=(W\otimes V)\alpha(W\otimes V)+(W^{\perp}\otimes V^{\perp})\alpha(W^{\perp}\otimes V^{\perp}).$$

 \vspace{0,3cm}

Now, since $\alpha=\gamma+\delta$, $\gamma\geq 0$ and $\delta\geq 0$, we obtain from equation \eqref{eq8} that  \begin{center}
$tr(\gamma(W\otimes V^{\perp}))=tr(\gamma( W^{\perp}\otimes V))=0$ and 
$tr(\delta(W\otimes V^{\perp}))=tr(\delta( W^{\perp}\otimes V))=0$.
\end{center}

 \vspace{0,3cm}

Next, by corollary \ref{corollary2} and the fact that $\{\gamma,\delta\}\subset CR_{k,m}$, we have\begin{center}
 $\gamma=(W\otimes V)\gamma(W\otimes V)+(W^{\perp}\otimes V^{\perp})\gamma(W^{\perp}\otimes V^{\perp})$, $\delta=(W\otimes V)\delta(W\otimes V)+(W^{\perp}\otimes V^{\perp})\delta(W^{\perp}\otimes V^{\perp}).$
\end{center}

 \vspace{0,3cm}

Finally, since $\alpha=\gamma+\delta$, $$\alpha=(W\otimes V)\alpha(W\otimes V)+(W^{\perp}\otimes V^{\perp})\alpha(W^{\perp}\otimes V^{\perp}).$$
\end{proof}

\vspace{0,5cm}

In section 5, we present a couple of extra consequences of corollary \ref{corollary2}  in the form of operations that can be performed on completely reducible states that preserve this property, but before that let us describe  the extreme rays of $CR_{k,m}$ in section 4.

\vspace{0,5cm}

\section{Extreme Rays of $CR_{k,m}$}

\vspace{0,3cm}

In this section we provide  a complete description of the  extreme rays of $CR_{k,m}$ (See proposition \ref{propositionextremerays}), we show that these rays are the ones generated by the separable pure states (separable rank 1 states).

  \vspace{0,3cm}

This proposition shows once more how connected the separability property is to the complete reducibility property. Despite $CR_{k,m}$ being such a big cone (dense inside the set of states, by corollary \ref{corollaryCRdense}), its extreme rays are the same extreme rays of the cone of separable states.

  \vspace{0,3cm}

Before proving this proposition, we need a description of the completely reducible  pure states and another lemma.  \vspace{0,3cm}

  \vspace{0,3cm}

We begin this section by showing that completely reducible pure states  are separable.

  \vspace{0,3cm} 
 
 \begin{lemma}\label{lemmapureCPisseparable} A pure state (rank 1) is completely reducible if and only if it is separable.
 \end{lemma}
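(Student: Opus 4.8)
The statement is an equivalence, and one direction is trivial: if a pure state $\gamma = vv^*$ with $v = a \otimes b \in \mathbb{C}^k \otimes \mathbb{C}^m$ is separable, then $F_\gamma \circ G_\gamma$ is easily seen to be irreducible on the rank-1 sub-algebra it supports and zero elsewhere, so it is completely reducible; alternatively one can verify the criterion of Corollary~\ref{corollary2} directly. So the content is the forward direction: a pure completely reducible state must be separable. The plan is to prove the contrapositive — assume $\gamma = vv^*$ is \emph{entangled}, i.e. $v$ has Schmidt rank $\geq 2$, and produce a pair of orthogonal projections $W \in \mathcal{M}_k$, $V \in \mathcal{M}_m$ witnessing a failure of the equivalence in Corollary~\ref{corollary2}, namely a pair satisfying condition $a)$ but not condition $b)$.

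First I would write $v = \sum_{i=1}^r \sigma_i\, f_i \otimes g_i$ in its Schmidt decomposition, with $r \geq 2$, $\sigma_i > 0$, and $\{f_i\}$, $\{g_i\}$ orthonormal in $\mathbb{C}^k$, $\mathbb{C}^m$ respectively. The natural choice is to split the Schmidt terms: let $W$ be the orthogonal projection onto $\operatorname{span}\{f_1\}$ and $V$ the orthogonal projection onto $\operatorname{span}\{g_1\}$ (any nontrivial partition of $\{1,\dots,r\}$ works; a $1$ versus $r-1$ split is simplest). Then $W \otimes V^\perp$ kills every term $f_i \otimes g_i$ for $i \ge 2$ and also kills $f_1 \otimes g_1$ since $V^\perp g_1 = 0$; hence $(W\otimes V^\perp) v = 0$, so $\operatorname{tr}(\gamma(W\otimes V^\perp)) = \|(W\otimes V^\perp)v\|^2 = 0$, and symmetrically $(W^\perp \otimes V) v = 0$, giving condition $a)$. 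Now check that condition $b)$ fails: $(W \otimes V)v = \sigma_1 f_1 \otimes g_1$ and $(W^\perp \otimes V^\perp) v = \sum_{i\ge 2} \sigma_i f_i \otimes g_i$, so the right-hand side of $b)$ equals $\sigma_1^2 (f_1\otimes g_1)(f_1 \otimes g_1)^* + \big(\sum_{i\ge2}\sigma_i f_i\otimes g_i\big)\big(\sum_{i\ge2}\sigma_i f_i\otimes g_i\big)^*$, whereas $\gamma = vv^*$ has the cross terms $\sigma_1\sigma_j (f_1\otimes g_1)(f_j\otimes g_j)^*$ for $j \ge 2$, which are nonzero because $\sigma_1\sigma_j > 0$. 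Hence $\gamma$ does not equal the claimed sum, condition $b)$ fails, and by Corollary~\ref{corollary2} $\gamma \notin CR_{k,m}$.

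For the converse direction I would take $v = a \otimes b$ (Schmidt rank $1$), normalize so $\|a\| = \|b\| = 1$, and verify Corollary~\ref{corollary2}: suppose $W, V$ satisfy $a)$, i.e. $(W\otimes V^\perp)v = 0$ and $(W^\perp\otimes V)v = 0$. The first says $(Wa)\otimes(V^\perp b) = 0$, so $Wa = 0$ or $V^\perp b = 0$; the second says $Wa = a$ or $Vb = 0$. Going through the (few) cases, one finds that in each case either $(W\otimes V)v = v$ and $(W^\perp \otimes V^\perp)v = 0$, or vice versa; either way $\gamma = vv^*$ trivially satisfies $b)$. So $a) \Rightarrow b)$ always holds and $\gamma \in CR_{k,m}$.

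The step to watch is making sure the witnessing projections in the forward direction genuinely satisfy $a)$ \emph{and} genuinely violate $b)$ — the violation hinges on the Schmidt coefficients being strictly positive and on the orthogonality of the Schmidt vectors guaranteeing the cross terms are linearly independent of (hence not absorbed into) the diagonal blocks. This is the only place any care is needed; everything else is bookkeeping. I expect no serious obstacle, since Corollary~\ref{corollary2} has already reduced complete reducibility to exactly this kind of combinatorial check on projections.
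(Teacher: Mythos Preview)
Your argument is correct. Both directions go through exactly as you sketch: in the entangled case the rank-$1$-vs-rest split of the Schmidt decomposition gives a pair $(W,V)$ satisfying condition $a)$ of Corollary~\ref{corollary2} while the surviving cross terms $\sigma_1\sigma_j\,(f_1\otimes g_1)(f_j\otimes g_j)^*$ (or simply the rank jump from $1$ to $2$) witness the failure of $b)$; in the separable case the four-case analysis cleanly forces $b)$.

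Your route differs from the paper's. The paper works at the \emph{map} level: it writes $\gamma=(R\otimes Id)u_mu_m^*(R^*\otimes Id)$, computes $F_\gamma\circ G_\gamma(X)=RR^*XRR^*$ explicitly, and then, taking two orthonormal eigenvectors $r_1,r_2$ of $RR^*$ with positive eigenvalues, exhibits $r_1r_2^*\in V_1\mathcal{M}_kV_1^{\perp}$ (with $V_1=r_1r_1^*$) on which the map does not vanish, violating Proposition~\ref{propositionnull}. For the converse it simply invokes separable $\Rightarrow$ PPT $\Rightarrow$ completely reducible via \cite[Theorem~26]{CarielloIEEE}. Your argument instead stays at the \emph{state} level throughout, using the Schmidt decomposition and Corollary~\ref{corollary2}. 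The two are closely linked---your $W=f_1f_1^*$ is exactly the paper's $V_1=r_1r_1^*$, since the eigenvectors of $RR^*$ are the left Schmidt vectors---but your version has the minor advantage of being self-contained for the converse (no appeal to the PPT result), while the paper's version makes the structure of $F_\gamma\circ G_\gamma$ for pure states explicit, which is informative in its own right.
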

 \begin{proof}  Let $\gamma\in\mathcal{M}_k\otimes\mathcal{M}_m$ be a pure state. Then there is complex matrix $R_{k\times m}$ such that 
$$\gamma=(R\otimes Id)u_{m}u_m^*(R^*\otimes Id)=(Id\otimes R^t)u_{k}u_k^*(Id\otimes \overline{R}),$$

 where $u_m\in\mathbb{C}^m\otimes\mathbb{C}^m$ and $u_k\in\mathbb{C}^k\otimes\mathbb{C}^k$ are as in theorem \ref{theoremchoi}.\vspace{0,3cm}

As explained in remark \ref{remarkpropertiesadjoints},
  $F_{\gamma}(Z^t)=RZR^*$ and $G_{\gamma}(X^t)=R^tX\overline{R}$. Hence $$F_{\gamma}\circ G_{\gamma}(X)=RR^*XRR^*.$$

\vspace{0,3cm}

If $\rank(R)>1$ then there are two orthonormal eigenvectors $r_1,r_2$ of $RR^*$ associated to positive eigenvalues $a_1,a_2$, which might be equal.

\vspace{0,3cm}

Notice that $F_{\gamma}\circ G_{\gamma}(r_1r_1^*)=a_1^2r_1r_1^*$ and  $F_{\gamma}\circ G_{\gamma}(r_1r_2^*)=(a_1a_2)r_1r_2^*\neq 0$.

\vspace{0,3cm}

Consider the orthogonal projection $V_1=r_1r_1^*$ and the sub-algebra $V_1\mathcal{M}_kV_1=\{\lambda r_1r_1^*,\lambda\in\mathbb{C}\}$. Notice that $V_1\mathcal{M}_kV_1$ is left invariant by $F_{\gamma}\circ G_{\gamma}: \mathcal{M}_k\rightarrow \mathcal{M}_k$, but there is a matrix $r_1r_2^*\in V_1\mathcal{M}_k V_1^{\perp}+V_1^{\perp}\mathcal{M}_k V_1$  such that $F_{\gamma}\circ G_{\gamma}(r_1r_2^*)\neq 0$. Hence $F_{\gamma}\circ G_{\gamma}: \mathcal{M}_k\rightarrow \mathcal{M}_k$ does not have the complete reducibility property by proposition \ref{propositionnull}.
 
\vspace{0,3cm}
So in order to be completely reducible a pure state must be separable. Finally, since every state that is positive under partial transpose is completely reducible \cite[Theorem 26]{CarielloIEEE}, so is every separable state.
 \end{proof}
 
  \vspace{0,3cm}
 
 \begin{lemma}\label{lemmairreduciblebyimage} Let $\gamma,A\in\mathcal{M}_k\otimes\mathcal{M}_m$ be states such that $\Ima(\gamma)=\Ima(A)$. If there is an orthogonal projection $V_1\in\mathcal{M}_k$ such that\begin{itemize}
 \item  $F_{\gamma}\circ G_{\gamma}(V_1\mathcal{M}_k V_1)\subset V_1\mathcal{M}_k V_1$,
 \item $F_{\gamma}\circ G_{\gamma}|_{V_1\mathcal{M}_kV_1}$ is irreducible,
 \item $F_{\gamma}\circ G_{\gamma}|_{V_1^{\perp}\mathcal{M}_k+\mathcal{M}_kV_1^{\perp}}\equiv 0$,
 \end{itemize}  
 then the same three conditions hold for $F_{A}\circ G_{A}:\mathcal{M}_k\rightarrow \mathcal{M}_k$. In particular, $\gamma$ and $A$ belong to $CR_{k,m}$. 
 \end{lemma}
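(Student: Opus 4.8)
\emph{Strategy.} The three conditions appearing in the hypothesis (and the three desired conditions on $F_A\circ G_A$) can each be rephrased as a condition that depends on the state only through its kernel. Since $\Ima(\gamma)=\Ima(A)$ forces $\ker(\gamma)=\ker(A)$, every such condition then transfers automatically from $F_\gamma\circ G_\gamma$ to $F_A\circ G_A$. Recall that both $F_\gamma\circ G_\gamma$ and $F_A\circ G_A$ are completely positive and self-adjoint for the trace inner product by Lemma~\ref{lemmaCPselfadjoint}.

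\emph{Two kernel criteria.} I would first record, for an arbitrary state $\beta\in\mathcal{M}_k\otimes\mathcal{M}_m$ and an arbitrary orthogonal projection $P\in\mathcal{M}_k$, the following. (Invariance.) By Lemma~\ref{lemmasubalgebrainvariant}, $F_\beta\circ G_\beta$ leaves $P\mathcal{M}_kP$ invariant iff there is an orthogonal projection $Q\in\mathcal{M}_m$ with $tr(\beta(P\otimes Q^{\perp}))=tr(\beta(P^{\perp}\otimes Q))=0$; writing $\beta=\sum_\ell v_\ell v_\ell^{*}$ one has $tr(\beta(P\otimes Q^{\perp}))=\sum_\ell\|(P\otimes Q^{\perp})v_\ell\|^{2}$, so this vanishes exactly when $\Ima(P\otimes Q^{\perp})\subseteq\ker(\beta)$, and likewise for $P^{\perp}\otimes Q$; hence invariance of $P\mathcal{M}_kP$ is equivalent to the existence of a projection $Q$ with $\Ima(P\otimes Q^{\perp})\subseteq\ker(\beta)$ and $\Ima(P^{\perp}\otimes Q)\subseteq\ker(\beta)$. (Vanishing.) $F_\beta\circ G_\beta|_{P^{\perp}\mathcal{M}_k+\mathcal{M}_kP^{\perp}}\equiv 0$ iff $\Ima(P^{\perp}\otimes Id)\subseteq\ker(\beta)$: indeed, if the restriction vanishes then $F_\beta\circ G_\beta(P^{\perp})=0$, so $\|G_\beta(P^{\perp})\|^{2}=tr\big(P^{\perp}\,F_\beta\circ G_\beta(P^{\perp})\big)=0$ by the adjointness of \eqref{eqadjoints}, giving $G_\beta(P^{\perp})=0$, which by a short computation with partial traces is equivalent to $(P^{\perp}\otimes Id)\beta=0$, i.e. $\Ima(P^{\perp}\otimes Id)\subseteq\ker(\beta)$; conversely $(P^{\perp}\otimes Id)\beta=0$ makes $G_\beta$, hence $F_\beta\circ G_\beta$, vanish on $P^{\perp}\mathcal{M}_k+\mathcal{M}_kP^{\perp}$ (this is also immediate from Lemma~\ref{lemmanullcompletelypositive} applied to $F_\beta\circ G_\beta$ at $P^{\perp}$). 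In both criteria the right-hand side is a statement about $\ker(\beta)$ alone.

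\emph{Transferring the conditions.} Applying both criteria to $\beta=\gamma$ and $\beta=A$ and using $\ker(\gamma)=\ker(A)$: for every orthogonal projection $P\in\mathcal{M}_k$, $F_\gamma\circ G_\gamma$ leaves $P\mathcal{M}_kP$ invariant iff $F_A\circ G_A$ does, and $F_\gamma\circ G_\gamma|_{P^{\perp}\mathcal{M}_k+\mathcal{M}_kP^{\perp}}\equiv 0$ iff $F_A\circ G_A|_{P^{\perp}\mathcal{M}_k+\mathcal{M}_kP^{\perp}}\equiv 0$. Taking $P=V_1$ gives the first and third conditions for $F_A\circ G_A$. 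For the second, once $F_A\circ G_A(V_1\mathcal{M}_kV_1)\subseteq V_1\mathcal{M}_kV_1$ is known, $F_A\circ G_A|_{V_1\mathcal{M}_kV_1}$ is irreducible precisely when no sub-algebra $P\mathcal{M}_kP$ with $0\neq P\leq V_1$ and $P\neq V_1$ is left invariant by $F_A\circ G_A$; by the invariance criterion this happens iff no such $P\mathcal{M}_kP$ is left invariant by $F_\gamma\circ G_\gamma$, i.e. iff $F_\gamma\circ G_\gamma|_{V_1\mathcal{M}_kV_1}$ is irreducible, which is a hypothesis. So $F_A\circ G_A$ satisfies all three conditions.

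\emph{Complete reducibility and the main obstacle.} Finally, a completely positive self-adjoint map $T:\mathcal{M}_k\rightarrow\mathcal{M}_k$ satisfying the three conditions with projection $V_1$ is completely reducible: take $l=1$, $W_1=V_1$, and $R=V_1\mathcal{M}_kV_1^{\perp}\oplus V_1^{\perp}\mathcal{M}_kV_1\oplus V_1^{\perp}\mathcal{M}_kV_1^{\perp}$, the trace-orthogonal complement of $V_1\mathcal{M}_kV_1$ in $\mathcal{M}_k$; since $R\subseteq V_1^{\perp}\mathcal{M}_k+\mathcal{M}_kV_1^{\perp}$, the third condition gives $T|_R\equiv 0$, while the first two conditions are exactly the remaining requirements in the definition. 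Applying this to $T=F_\gamma\circ G_\gamma$ and $T=F_A\circ G_A$ shows $\gamma,A\in CR_{k,m}$. The step that needs the most care is the Vanishing criterion --- reducing $F_\beta\circ G_\beta|_{P^{\perp}\mathcal{M}_k+\mathcal{M}_kP^{\perp}}\equiv 0$ first to $F_\beta\circ G_\beta(P^{\perp})=0$ and then to $(P^{\perp}\otimes Id)\beta=0$ --- which uses Lemma~\ref{lemmanullcompletelypositive}, the $G_\beta$--$F_\beta$ adjointness, and the fact that a positive semidefinite matrix with vanishing partial trace is zero; everything else is a direct application of Lemmas~\ref{lemmasubalgebrainvariant} and \ref{lemmanullcompletelypositive} together with the definitions in the introduction.
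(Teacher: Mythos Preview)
Your argument is correct and is a genuinely different route from the paper's. The paper never passes through the kernel of the state: instead it uses the $\epsilon$--sandwich observation that $\Ima(\gamma)=\Ima(A)$ is equivalent to $A-\epsilon\gamma\geq 0$ and $\gamma-\epsilon A\geq 0$ for some $\epsilon>0$, deduces from positivity of $G_{A-\epsilon\gamma}$ and $G_{\gamma-\epsilon A}$ (and similarly for $F$) that $\Ima\big(F_\gamma\circ G_\gamma(X)\big)=\Ima\big(F_A\circ G_A(X)\big)$ for every $X\in P_k$, and then reads off invariance, irreducibility, and vanishing directly from this equality of images applied to $X=V_1$, to $X=V_2\in V_1\mathcal{M}_kV_1$, and to $X=V_1^{\perp}$. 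Your approach instead encodes each of the three conditions as a statement about $\ker(\beta)$ (via Lemma~\ref{lemmasubalgebrainvariant} for invariance and irreducibility, and via $G_\beta(P^{\perp})=0\Leftrightarrow (P^{\perp}\otimes Id)\beta=0$ for vanishing), and then transfers everything using $\ker(\gamma)=\ker(A)$. Both proofs are short; yours has the virtue of making explicit that the three hypotheses are purely range/kernel conditions on the state (in the spirit of Corollary~\ref{corollary2}), while the paper's has the advantage of isolating the stronger intermediate fact that the images of $F_\gamma\circ G_\gamma$ and $F_A\circ G_A$ agree on all of $P_k$, which could be reused elsewhere.
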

 \begin{proof}The condition $\Ima(\gamma)=\Ima(A)$ is equivalent to the existence
 of a positive $\epsilon$ such that $A-\epsilon \gamma$ and $\gamma-\epsilon A$ are positive semidefinite. Hence \begin{center}
 $G_{A-\epsilon\gamma}(X)=G_A(X)-\epsilon G_{\gamma}(X)$ and $G_{\gamma-\epsilon A}(X)=G_{\gamma}(X)-\epsilon G_{A}(X)$
 \end{center} are positive semidefinite for every $X\in P_k$. Thus, $\Ima(G_{\gamma}(X))= \Ima(G_{A}(X))$, for every $X\in P_k$.\vspace{0,3cm}

Of course, the same is true for $F_A,F_{\gamma}$. Hence, for every $X\in P_k$, 
\begin{equation}\label{eq3}
\Ima(F_{\gamma}\circ G_{\gamma}(X))= \Ima(F_{A}\circ G_{A}(X)).
\end{equation}

The first and the second items of the statement of this theorem say that  $\Ima(F_{\gamma}\circ G_{\gamma}(V_1))\subset \Ima(V_1)$ and there is no other orthogonal projection $V_2\in V_1\mathcal{M}_k V_1\setminus\{0\}$ such that $\Ima(F_{\gamma}\circ G_{\gamma}(V_2))\subset \Ima(V_2)$. By equation \eqref{eq3}, the same is valid for $F_{A}\circ G_{A}:\mathcal{M}_k\rightarrow \mathcal{M}_k$.
 \vspace{0,3cm}
 
Notice that item 3 of the statement of this lemma corresponds to $F_{\gamma}\circ G_{\gamma}(V_1^{\perp})=0$, since $F_{\gamma}\circ G_{\gamma}:\mathcal{M}_k\rightarrow \mathcal{M}_k$ is completely positive by lemma \ref{lemmaCPselfadjoint}. Again, by equation \eqref{eq3}, the same is true with $F_{A}\circ G_{A}:\mathcal{M}_k\rightarrow \mathcal{M}_k$.
\vspace{0,3cm} 
 
Notice that $F_{\gamma}\circ G_{\gamma}:\mathcal{M}_k\rightarrow \mathcal{M}_k$ and $F_{A}\circ G_{A}:\mathcal{M}_k\rightarrow \mathcal{M}_k$ are completely reducible with $s=1$ and $R= V_1^{\perp}\mathcal{M}_k+\mathcal{M}_kV_1^{\perp}$ in the description of a completely reducible map given in the introduction.
 \end{proof}
 
  \vspace{0,3cm} 
  
\begin{remark}\label{remarkcounterexample}
The last lemma says that the property of  $F_{\gamma}\circ G_{\gamma}:\mathcal{M}_k\rightarrow \mathcal{M}_k$ being irreducible on a unique sub-algebra depends only on the range of $\gamma$. This is not the case with the complete reducibility property in general. For example, let $V_1,V_2$ be orthogonal projections of $\mathcal{M}_k$ satisfying $V_1+V_2=Id$. Let $\gamma=V_1\otimes V_1^t+V_2\otimes V_2^t$. Since $\gamma$ is positive under partial transpose, $\gamma\in CR_{k,k}$. In addition, $u_k\in \Ima(\gamma)\ (u_k$ as defined in theorem \ref{theoremchoi}$)$. Now let $\delta=u_ku_k^*+\epsilon \gamma$, where $\epsilon>0$.
Notice that, by remark \ref{remarkpropertiesadjoints}, $G_{\delta}(X)=X^t+\epsilon G_{\gamma}(X)$. Therefore $G_{\delta}:\mathcal{M}_k\rightarrow \mathcal{M}_k$ is an invertible map for sufficiently small $\epsilon$.  Next, since $tr(\gamma(V_1\otimes V_2^t))=tr(\gamma(V_2\otimes V_1^t))=0$ and $\Ima(\gamma)=\Ima(\delta)$, we also have $tr(\delta(V_1\otimes V_2^t))=tr(\delta(V_2\otimes V_1^t))=0$. By lemma \ref{lemmasubalgebrainvariant}, $V_1\mathcal{M}_kV_1$ is an invariant sub-algebra of $F_{\delta}\circ G_{\delta}:\mathcal{M}_k\rightarrow \mathcal{M}_k$, but $F_{\delta}\circ G_{\delta}|_{V_1\mathcal{M}_kV_2+V_2\mathcal{M}_kV_1}\not\equiv 0$, since $F_{\delta}$ and $G_{\delta}$ are adjoints and they are invertible maps. By proposition \ref{propositionnull}, $\delta$ is not completely reducible despite sharing the same range with $\gamma$.
  \end{remark}

 \vspace{0,3cm}

\begin{proposition}\label{propositionextremerays}The extreme rays of $CR_{k,m}$ are the rays generated by  separable pure states of $\mathcal{M}_k\otimes \mathcal{M}_m$.\end{proposition}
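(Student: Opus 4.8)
The plan is to show two things: first, every separable pure state $\gamma = vv^*\otimes ww^*$ spans an extreme ray of $CR_{k,m}$; second, every extreme ray of $CR_{k,m}$ is generated by such a state. The first part is easy: a separable pure state is in particular a rank-one positive semidefinite matrix, hence it spans an extreme ray of the (larger) cone $P_{km}$ of all states; since $CR_{k,m}\subset P_{km}$ and a separable pure state lies in $CR_{k,m}$ (by Lemma \ref{lemmapureCPisseparable}, or directly since it is positive under partial transpose), it a fortiori spans an extreme ray of the subcone $CR_{k,m}$. The work is all in the converse.

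For the converse, let $\gamma$ generate an extreme ray of $CR_{k,m}$. I want to conclude $\gamma$ has rank one, for then by Lemma \ref{lemmapureCPisseparable} it is separable. Suppose $\rank(\gamma)>1$. The idea is to produce a nontrivial decomposition $\gamma = \gamma_1+\gamma_2$ with $\gamma_1,\gamma_2\in CR_{k,m}$ not proportional to $\gamma$, contradicting extremality. The natural source of such a decomposition is Corollary \ref{corollary2}: since $\gamma\in CR_{k,m}$, whenever there is a pair of orthogonal projections $W\in\mathcal{M}_k$, $V\in\mathcal{M}_m$ with $tr(\gamma(W\otimes V^\perp))=tr(\gamma(W^\perp\otimes V))=0$, we get $\gamma = (W\otimes V)\gamma(W\otimes V) + (W^\perp\otimes V^\perp)\gamma(W^\perp\otimes V^\perp)$, a genuine splitting into two orthogonal (hence non-proportional, unless one piece is zero) positive pieces. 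Each piece is itself a state; I must check each piece is again in $CR_{k,m}$ — this should follow because the piece $(W\otimes V)\gamma(W\otimes V)$ has a block structure making its $F\circ G$ map a restriction/compression of $F_\gamma\circ G_\gamma$, which inherits complete reducibility (alternatively, Lemma \ref{lemmairreduciblebyimage}-type reasoning identifies each block with an irreducible summand already appearing in the completely reducible decomposition of $F_\gamma\circ G_\gamma$). So the only way to avoid a contradiction is that \emph{no} such nontrivial pair $(W,V)$ exists for $\gamma$.

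So the crux reduces to: if $\gamma\in CR_{k,m}$ admits no nontrivial pair $(W,V)$ as above, then $\rank(\gamma)=1$. Here is where I use the structure of complete reducibility of $F_\gamma\circ G_\gamma$ directly. By definition $\mathcal{M}_k = \bigoplus_{i=1}^l W_i\mathcal{M}_kW_i \oplus R$ with $F_\gamma\circ G_\gamma|_R\equiv 0$ and $F_\gamma\circ G_\gamma|_{W_i\mathcal{M}_kW_i}$ irreducible. If $l\geq 2$, then taking $W = W_1$ and $V$ the projection onto $\Ima(G_\gamma(W_1))$, Lemma \ref{lemmasubalgebrainvariant} gives $tr(\gamma(W\otimes V^\perp))=tr(\gamma(W^\perp\otimes V))=0$, and one checks this pair is nontrivial (neither $W\otimes V$ nor $W^\perp\otimes V^\perp$ is $0$ or $Id$, using $l\geq 2$ and that the summands are nonzero) — contradiction. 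Hence $l\le 1$. If $l=0$ then $F_\gamma\circ G_\gamma\equiv 0$, forcing $\gamma=0$, excluded. So $l=1$: $F_\gamma\circ G_\gamma$ is irreducible on $W_1\mathcal{M}_kW_1$ and zero elsewhere. Then the Perron eigenvector $\gamma_1\in P_k$ of $F_\gamma\circ G_\gamma$ with $\Ima(\gamma_1)=\Ima(W_1)$ is \emph{one-dimensional up to scalar}; combined with $\Ima(G_\gamma(W_1)) = \Ima(V_1)$ for a single projection $V_1$, I claim the support of $\gamma$ itself is forced into $\Ima(W_1)\otimes\Ima(V_1)$ and, by irreducibility (uniqueness of the invariant subalgebra), that $\rank(W_1)=\rank(V_1)=1$ — otherwise a further nontrivial pair could be manufactured from a proper subprojection of $W_1$ whose image under $G_\gamma$ is a proper subprojection of $V_1$. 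The technical heart — and the step I expect to be the main obstacle — is precisely this last claim: ruling out $\rank(W_1)>1$ by showing irreducibility of $F_\gamma\circ G_\gamma$ on a block of rank $>1$ still permits splitting $\gamma$ (as a bipartite matrix, not the map) into orthogonal PPT-like pieces lying in $CR_{k,m}$. I would handle it by passing to the Choi/Kronecker picture (Remark \ref{remarkpropertiesadjoints}), writing $\gamma$ via a Kraus-type family adapted to $W_1\otimes V_1$, and exhibiting an explicit sub-pair $(W',V')$ with $W'<W_1$, $V'<V_1$ satisfying condition $a)$ — or, failing that, showing directly that $\gamma$ restricted to this block is proportional to a maximally entangled–type state whose $F\circ G$ is a multiple of identity, which would contradict the hypothesis that $\gamma$ admits no nontrivial pair (take $W'$ any rank-one subprojection). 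Once $\rank(W_1)=\rank(V_1)=1$, $\gamma$ is supported on a product of lines, so $\gamma$ is a rank-one product state times a scalar, i.e.\ a separable pure state, completing the proof.
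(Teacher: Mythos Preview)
Your argument is fine up to the reduction $l=1$: extremality forces the completely reducible decomposition of $F_\gamma\circ G_\gamma$ to have a single irreducible block $W_1\mathcal{M}_kW_1$. The gap is in the final step, where you try to force $\rank(W_1)=1$ by producing a proper sub-pair $(W',V')$ with $W'<W_1$ satisfying condition~$a)$ of Corollary~\ref{corollary2}. This cannot work: by Lemma~\ref{lemmasubalgebrainvariant}, such a pair would make $W'\mathcal{M}_kW'$ an invariant subalgebra of $F_\gamma\circ G_\gamma$, directly contradicting the irreducibility of $F_\gamma\circ G_\gamma|_{W_1\mathcal{M}_kW_1}$. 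Your fallback suggestion (that irreducibility forces $F_\gamma\circ G_\gamma$ to be a multiple of the identity on the block) is also false; irreducibility says nothing of the sort, and if it were a multiple of the identity on a block of rank $>1$ it would in fact \emph{fail} to be irreducible there.

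The paper closes this gap with a completely different idea, which you actually cite earlier but for the wrong purpose. Once $l=1$, suppose $\rank(\gamma)>1$ and pick any decomposition $\gamma=A+B$ into states with $\Ima(A)=\Ima(B)=\Ima(\gamma)$ and neither a scalar multiple of $\gamma$ (always possible when the rank exceeds one). Lemma~\ref{lemmairreduciblebyimage} then applies: the three properties ``$W_1\mathcal{M}_kW_1$ invariant'', ``irreducible there'', and ``zero on $W_1^\perp\mathcal{M}_k+\mathcal{M}_kW_1^\perp$'' depend only on the range of the state, so $A$ and $B$ inherit them from $\gamma$ and hence lie in $CR_{k,m}$. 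This contradicts extremality, so $\gamma$ is pure, and Lemma~\ref{lemmapureCPisseparable} finishes. The key insight you are missing is that in the single-irreducible-block situation, membership in $CR_{k,m}$ is a range property, so \emph{any} splitting with the same range stays inside the cone.
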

\begin{proof} Let us assume that $\{\lambda \gamma, \lambda>0\}$ is an extreme ray of $CR_{k,m}$.\vspace{0,3cm}

As proved in \cite[Proposition 12]{CarielloIEEE}, if $\gamma\in  \mathcal{M}_k\otimes \mathcal{M}_m$ is completely reducible then $\gamma=\sum_{i=1}^s\gamma_i$, where each $\gamma_i=(V_i\otimes W_i)\gamma (V_i\otimes W_i)$ is non-nulll, $F_{\gamma_i}\circ G_{\gamma_i}:V_i\mathcal{M}_kV_i\rightarrow V_i\mathcal{M}_kV_i$ is irreducible, $F_{\gamma_i}\circ G_{\gamma_i}|_{V_i^{\perp}\mathcal{M}_k+\mathcal{M}_kV_i^{\perp}}\equiv 0$ and $V_iV_j=0$, $W_iW_j=0$ for $i\neq j$. Hence $\gamma_i\in CR_{k,m}$ for every $i$.\vspace{0,3cm}

 Thus, a necessary condition for $\{\lambda \gamma, \lambda>0\}$ to be an extreme ray is $s=1$, i.e., $\gamma=(V_1\otimes W_1)\gamma (V_1\otimes W_1)$, where $F_{\gamma}\circ G_{\gamma}:V_1\mathcal{M}_kV_1\rightarrow V_1\mathcal{M}_kV_1$ is irreducible and $F_{\gamma}\circ G_{\gamma}|_{V_1^{\perp}\mathcal{M}_k+\mathcal{M}_kV_1^{\perp}}\equiv 0$.\vspace{0,3cm}

Now, suppose $\gamma$ is not pure. Then $\gamma=A+B$, where $A,B$ are states such that $\Ima(\gamma)=\Ima(A)=\Ima(B)$ and neither $A$ nor $B$ is a multiple of $\gamma$.\vspace{0,3cm}

By lemma \ref{lemmairreduciblebyimage}, $A,B\in CR_{k,m}$. Hence $\{\lambda \gamma, \lambda>0\}$ is not an extreme ray. Thus, $\gamma$ must be pure. By lemma \ref{lemmapureCPisseparable}, $\gamma$ is separable.\vspace{0,3cm}

Of course, if $\gamma$ is a separable pure state and $\gamma=A+B$, where $A,B$ are states, then $A,B$ are multiples of $\gamma$.  Hence $\{\lambda \gamma, \lambda>0\}$ is an extreme ray of $CR_{k,m}$.
\end{proof}

\vspace{0,5cm}

\section{Operations preserving the complete reducibility property }

\vspace{0,5cm}

In this section we present  operations that can be performed on completely reducible states that preserve this property. The first is taking powers and roots of completely reducible states. The second is performing  partial traces on them. The third is a shuffle of bipartite states resulting in a bipartite state. We also show that this resulting bipartite state  is completely reducible if and only if each state used in the shuffle is completely reducible, which allows the construction of completely reducible states without the properties (i-iii) described  in the introduction.

 \vspace{0,3cm}

 \begin{proposition}\label{prop_power_roots_compred} If $\gamma\in CR_{k,m}$ and $n\in\mathbb{N}$  then $\gamma^n\in CR_{k,m}$ and $\gamma^{\frac{1}{n}}\in CR_{k,m}$. Moreover, if $\delta$ is the orthogonal projection onto the image of $\gamma$ then $\delta\in CR_{k,m}$.
 \end{proposition}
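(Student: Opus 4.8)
The plan is to run everything through the new criterion, Corollary \ref{corollary2}. Recall that for any state the condition $b)$ there always implies $a)$, so to certify that a state belongs to $CR_{k,m}$ it suffices to verify the implication $a)\Rightarrow b)$ for every pair of orthogonal projections $W\in\mathcal{M}_k$, $V\in\mathcal{M}_m$. The three states in question, $\gamma^n$, $\gamma^{\frac{1}{n}}$ and the orthogonal projection $\delta$ onto $\Ima(\gamma)$, are all positive semidefinite; moreover each of them is in fact a polynomial in $\gamma$ (for $\gamma^{\frac{1}{n}}$ and for $\delta$ one uses Lagrange interpolation of the appropriate function on the finite spectrum of $\gamma$). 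I would first record the only two features of these states that the argument uses: each has the same image (equivalently the same kernel) as $\gamma$, and each commutes with every matrix that commutes with $\gamma$.

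Next, fix one of these three states, call it $\rho$, and suppose $a)$ holds for $\rho$, i.e.\ $tr(\rho(W\otimes V^{\perp}))=tr(\rho(W^{\perp}\otimes V))=0$. Since $\rho\geq 0$ while $W\otimes V^{\perp}\geq 0$ and $W^{\perp}\otimes V\geq 0$, this forces $\rho(W\otimes V^{\perp})=\rho(W^{\perp}\otimes V)=0$, hence $\Ima(W\otimes V^{\perp})$ and $\Ima(W^{\perp}\otimes V)$ lie in $\ker(\rho)=\ker(\gamma)$; consequently $tr(\gamma(W\otimes V^{\perp}))=tr(\gamma(W^{\perp}\otimes V))=0$, so $a)$ holds for $\gamma$ too. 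Because $\gamma\in CR_{k,m}$, Corollary \ref{corollary2} gives $b)$ for $\gamma$, namely $\gamma=(W\otimes V)\gamma(W\otimes V)+(W^{\perp}\otimes V^{\perp})\gamma(W^{\perp}\otimes V^{\perp})$.

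The key step is then to read this identity as a commutation statement. Multiplying it on the left, and then on the right, by $W\otimes V$ and using $(W\otimes V)(W^{\perp}\otimes V^{\perp})=0$ shows $(W\otimes V)\gamma=\gamma(W\otimes V)$; likewise $W^{\perp}\otimes V^{\perp}$ commutes with $\gamma$. Setting $P=(W\otimes V)+(W^{\perp}\otimes V^{\perp})$, the same identity also reads $P\gamma=\gamma P=\gamma$, so $P$ acts as the identity on $\Ima(\gamma)=\Ima(\rho)$ and therefore $P\rho=\rho P=\rho$. By the second recorded feature, $W\otimes V$ and $W^{\perp}\otimes V^{\perp}$ commute with $\rho$ as well, so expanding $\rho=P\rho P$ the cross terms $(W\otimes V)\rho(W^{\perp}\otimes V^{\perp})$ and $(W^{\perp}\otimes V^{\perp})\rho(W\otimes V)$ vanish, leaving $\rho=(W\otimes V)\rho(W\otimes V)+(W^{\perp}\otimes V^{\perp})\rho(W^{\perp}\otimes V^{\perp})$, which is exactly $b)$ for $\rho$. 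Corollary \ref{corollary2} then yields $\rho\in CR_{k,m}$; taking $\rho$ to be $\gamma^n$, $\gamma^{\frac{1}{n}}$ and $\delta$ in turn finishes the proof.

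I do not expect a genuine obstacle here. The only point that needs care is the commutation step, i.e.\ extracting from the decomposition $b)$ for $\gamma$ that $W\otimes V$ commutes with $\gamma$; once that is in hand everything transfers verbatim to any polynomial in $\gamma$, which is precisely why powers, roots and the support projection can all be handled at once.
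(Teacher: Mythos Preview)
Your proof is correct and follows essentially the same route as the paper: both invoke Corollary~\ref{corollary2}, transfer condition $a)$ from $\rho=\gamma^a$ to $\gamma$ via the equality $\ker(\gamma^a)=\ker(\gamma)$, apply the complete reducibility of $\gamma$ to obtain the block decomposition $b)$ for $\gamma$, and then push this decomposition back to $\gamma^a$. The only cosmetic difference is in this last step: the paper takes the $a$-th power of the block-diagonal identity directly, obtaining $\gamma^a=[(W\otimes V)\gamma(W\otimes V)]^a+[(W^{\perp}\otimes V^{\perp})\gamma(W^{\perp}\otimes V^{\perp})]^a$ and then identifies each summand as the corresponding compression of $\gamma^a$, whereas you rephrase the block structure as the commutation relations $[W\otimes V,\gamma]=[W^{\perp}\otimes V^{\perp},\gamma]=0$ together with $P\gamma=\gamma$, and then transfer these to any polynomial in $\gamma$. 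Your packaging has the small advantage of making explicit that the argument works uniformly for any state that is a polynomial in $\gamma$ with the same support, which is why $\gamma^n$, $\gamma^{1/n}$ and $\delta$ can all be handled in one stroke.
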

\begin{proof}
First, denote by $A^0$ the orthogonal projection onto the image of $A$. 
 \vspace{0,3cm}

Second, recall that $\gamma^a$ is positive semidefinite, when  $a\in \{0,n,\frac{1}{n}\}$, and the fact that $\ker(\gamma)=\ker(\gamma^a)$.

 \vspace{0,3cm}

Now, let $W\in\mathcal{M}_{k},V\in\mathcal{M}_m$ be orthogonal projections such that 
$$tr(\gamma^a(W\otimes V^{\perp}))=tr(\gamma^a( W^{\perp}\otimes V))=0.$$

 \vspace{0,3cm}

Since $\ker(\gamma)=\ker(\gamma^a)$, we get $tr(\gamma(W\otimes V^{\perp}))=tr(\gamma( W^{\perp}\otimes V))=0.$

 \vspace{0,3cm}

By the complete reducibility property of $\gamma$, we obtain
 $$\gamma=(W\otimes V)\gamma(W\otimes V)+(W^{\perp}\otimes V^{\perp})\gamma(W^{\perp}\otimes V^{\perp}).$$
 
  \vspace{0,3cm}
 
 Hence, \begin{equation}\label{eq9}
 \gamma^a=[(W\otimes V)\gamma(W\otimes V)]^a+[(W^{\perp}\otimes V^{\perp})\gamma(W^{\perp}\otimes V^{\perp})]^a.
 \end{equation}

  \vspace{0,3cm}
 
 Since $W$ and $V$ are orthogonal projections and \begin{itemize}
 \item $\text{Im}([(W\otimes V)\gamma(W\otimes V)]^a)\subset \text{Im}(W\otimes V)$
 \item $\text{Im}([(W^{\perp}\otimes V^{\perp})\gamma(W^{\perp}\otimes V^{\perp})]^a)\subset \text{Im}(W^{\perp}\otimes V^{\perp}),$
 \end{itemize}
 
  \vspace{0,3cm} 
 
we get from equation \ref{eq9} that

  \vspace{0,3cm}

 \begin{itemize}
\item $(W\otimes V)\gamma^a(W\otimes V)=[(W\otimes V)\gamma(W\otimes V)]^a$,
\item $(W^{\perp}\otimes V^{\perp})\gamma^a(W^{\perp}\otimes V^{\perp})=[(W^{\perp}\otimes V^{\perp})\gamma(W^{\perp}\otimes V^{\perp})]^a$.
\end{itemize}.

By adding these last two equations, we obtain from equation \ref{eq9} that $$(W\otimes V)\gamma^a(W\otimes V)+(W^{\perp}\otimes V^{\perp})\gamma^a(W^{\perp}\otimes V^{\perp})=\gamma^a.$$

  \vspace{0,3cm}

By corollary \ref{corollary2}, we have just proved that $\gamma^a\in CR_{k,m}$, when $a\in\{0,n,\frac{1}{n}\}$.
\end{proof}

 \vspace{0,3cm}
 
 The next result concerns partial traces.

 \vspace{0,3cm}

\begin{proposition}\label{proppartialtraceiscompred}Let $\gamma\in \mathcal{M}_{k_1\ldots k_s}$ be a state and suppose that as a bipartite state $\gamma\in \mathcal{M}_{k_{i_1}\ldots k_{i_m}}\otimes \mathcal{M}_{k_{i_{m+1}}\ldots k_{i_s}}$ is completely reducible. Then  the partial trace of $\gamma\in \mathcal{M}_{k_{i_1}\ldots k_{i_m}}\otimes \mathcal{M}_{k_{i_{m+1}}\ldots k_{i_s}}$ on the site $\mathcal{M}_{k_{i_j}}$ $($resulting in a bipartite state$)$ still is completely reducible. \end{proposition}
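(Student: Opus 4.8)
The plan is to reduce the statement to an application of Corollary~\ref{corollary2}, exactly as was done for powers and roots in Proposition~\ref{prop_power_roots_compred}. Write $\gamma\in\mathcal{A}\otimes\mathcal{M}_{k_{i_j}}\otimes\mathcal{B}$, where $\mathcal{A}\otimes\mathcal{M}_{k_{i_j}}$ plays the role of $\mathcal{M}_k$ in the bipartite splitting for which $\gamma$ is completely reducible and $\mathcal{B}=\mathcal{M}_{k_{i_{m+1}}\ldots k_{i_s}}$ plays the role of $\mathcal{M}_m$ (the case in which the traced-out site lies on the second factor is symmetric, using that $\gamma\in CR$ iff $\gamma^{\Gamma\text{-free}}$ viewed with factors swapped is, which follows from the symmetry of the characterization in Corollary~\ref{corollary2}). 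Let $\tau=\operatorname{tr}_{k_{i_j}}(\gamma)\in\mathcal{A}\otimes\mathcal{B}$. I must show: for orthogonal projections $W\in\mathcal{A}$ and $V\in\mathcal{B}$ with $\operatorname{tr}(\tau(W\otimes V^{\perp}))=\operatorname{tr}(\tau(W^{\perp}\otimes V))=0$, one has $\tau=(W\otimes V)\tau(W\otimes V)+(W^{\perp}\otimes V^{\perp})\tau(W^{\perp}\otimes V^{\perp})$.

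The key step is to lift $W$ to the projection $\widetilde W=W\otimes Id_{k_{i_j}}\in\mathcal{A}\otimes\mathcal{M}_{k_{i_j}}$, keeping $V$ as is. Since partial trace is trace-preserving and $\widetilde W\otimes V^{\perp}$ acts as $W\otimes Id\otimes V^{\perp}$, we get $\operatorname{tr}(\gamma(\widetilde W\otimes V^{\perp}))=\operatorname{tr}(\tau(W\otimes V^{\perp}))=0$, and similarly $\operatorname{tr}(\gamma(\widetilde W^{\perp}\otimes V))=\operatorname{tr}(\tau(W^{\perp}\otimes V))=0$; here $\widetilde W^{\perp}=W^{\perp}\otimes Id_{k_{i_j}}$. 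Now the complete reducibility of $\gamma$ in the bipartite splitting $(\mathcal{A}\otimes\mathcal{M}_{k_{i_j}})\otimes\mathcal{B}$, via Corollary~\ref{corollary2}, yields
\[
\gamma=(\widetilde W\otimes V)\gamma(\widetilde W\otimes V)+(\widetilde W^{\perp}\otimes V^{\perp})\gamma(\widetilde W^{\perp}\otimes V^{\perp}).
\]
Equivalently, $(\widetilde W\otimes R)\gamma(\widetilde W^{\perp}\otimes S)=0$ for all $R,S\in\mathcal{B}$ (using positivity to kill the cross terms as in the proof of Corollary~\ref{corollary2}), i.e.\ $\gamma$ is block-diagonal with respect to $\widetilde W\otimes Id_{\mathcal{B}}$ and $\widetilde W^{\perp}\otimes Id_{\mathcal{B}}$ and moreover the $\widetilde W$-block is supported on $V$ and the $\widetilde W^{\perp}$-block on $V^{\perp}$.

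The final step is to push this decomposition through the partial trace. Since $\widetilde W$ commutes with the partial trace over $\mathcal{M}_{k_{i_j}}$ (it is $W\otimes Id_{k_{i_j}}$, acting as the identity on the traced factor), we have $\operatorname{tr}_{k_{i_j}}((\widetilde W\otimes V)\gamma(\widetilde W\otimes V))=(W\otimes V)\,\tau\,(W\otimes V)$ and likewise for the complementary term; applying $\operatorname{tr}_{k_{i_j}}$ to the displayed identity for $\gamma$ gives exactly $\tau=(W\otimes V)\tau(W\otimes V)+(W^{\perp}\otimes V^{\perp})\tau(W^{\perp}\otimes V^{\perp})$. By Corollary~\ref{corollary2}, $\tau\in CR$. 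I expect the main subtlety to be purely bookkeeping: verifying that the chosen lift $\widetilde W=W\otimes Id_{k_{i_j}}$ genuinely intertwines $\operatorname{tr}_{k_{i_j}}$ with the pre/post-multiplication, and handling the symmetric case where the traced site sits on the $\mathcal{B}$-factor (there one lifts $V$ instead of $W$, and uses the left--right symmetry of the criterion in Corollary~\ref{corollary2}); there is no analytic difficulty, only careful tracking of which tensor slot is which.
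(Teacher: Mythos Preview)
Your proposal is correct and follows essentially the same route as the paper: lift the projection on the reduced factor by tensoring with the identity on the traced site, use the partial-trace identity $\operatorname{tr}(\gamma((Id\otimes W)\otimes V^{\perp}))=\operatorname{tr}(\tau(W\otimes V^{\perp}))$ to transfer condition~$a)$ of Corollary~\ref{corollary2} from $\tau$ to $\gamma$, apply complete reducibility of $\gamma$ to obtain the block decomposition, and then trace out to recover the decomposition for $\tau$. The paper carries this out for the first site and declares the other sites analogous; your version is slightly more explicit about the symmetric case and the bookkeeping, but there is no substantive difference.
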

\begin{proof}
Let $\gamma_1\in \mathcal{M}_{k_{i_2}\ldots k_{i_m}}\otimes \mathcal{M}_{k_{i_{m+1}}\ldots k_{i_s}}$ be the partial trace of $\gamma$  on the site $\mathcal{M}_{k_{i_1}}$. For the other sites $(\mathcal{M}_{k_{i_j}})$ the result is analogous. \vspace{0,3cm}

Let $W\in \mathcal{M}_{k_{i_2}\ldots k_{i_{m}}}$ and  $V\in \mathcal{M}_{k_{i_{m+1}}\ldots k_{i_s}}$ be orthogonal projections such that $$tr(\gamma_1(W\otimes V^{\perp}))=tr(\gamma_1(W^{\perp}\otimes V))=0.$$

 \vspace{0,3cm}

Notice that \begin{center}
$tr(\gamma_1(W\otimes V^{\perp}))=tr(\gamma((Id\otimes W)\otimes V^{\perp}))$ and $tr(\gamma_1(W^{\perp}\otimes V))=tr(\gamma((Id\otimes W^{\perp})\otimes V))$.
\end{center}

 \vspace{0,3cm}

By corollary \ref{corollary2}, $$\gamma=((Id\otimes W)\otimes V)\gamma((Id\otimes W)\otimes V)+((Id\otimes W^{\perp})\otimes V^{\perp})\gamma((Id\otimes W^{\perp})\otimes V^{\perp}).$$

 \vspace{0,3cm}

Now, take the partial trace on the site $\mathcal{M}_{k_{i_1}}$ in this last equation to obtain 
 $$\gamma_1=(W\otimes V)\gamma_1(W\otimes V)+(W^{\perp}\otimes V^{\perp})\gamma_1(W^{\perp}\otimes V^{\perp}).$$
 
  \vspace{0,3cm}
 
 By corollary \ref{corollary2}, it follows that $\gamma_1\in CR_{k_{i_2}\ldots k_{i_m},k_{i_{m+1}}\ldots k_{i_s}}.$
\end{proof}

 \vspace{0,3cm}
 
 Now let us define a shuffle of bipartite states resulting in  bipartite states.

 \vspace{0,3cm}

\begin{definition}\label{definitionshuffle}
Let $\gamma_i=\sum_{j=1}^{n_i}A_j^i\otimes B_j^i\in\mathcal{M}_{k_j}\otimes\mathcal{M}_{m_j}$ for $i=1,\ldots,s$. Define the shuffle of $\gamma_1,\ldots,\gamma_s$ by 
$$S(\gamma_1,\ldots,\gamma_s)=\sum_{j_1,\ldots,j_s}A_{j_1}^{1}\otimes \ldots \otimes A_{j_s}^{s}\otimes B_{j_1}^{1}\otimes \ldots \otimes B_{j_s}^{s}\in \mathcal{M}_{k_1\ldots k_s}\otimes\mathcal{M}_{m_1\ldots m_s}. $$

\end{definition}

 \vspace{0,3cm}

\begin{remark}\label{remarkpropertyshuffle}
We shall need in this work three properties of this shuffle that are not hard to notice. Here are the properties
\begin{itemize}
\item[$1)$]  $S(\gamma_1,\ldots,\gamma_s)=S(\delta_1,\ldots,\delta_s)$ if and only if $\gamma_1\otimes\dots\otimes\gamma_s=\delta_1\otimes\dots\otimes\delta_s$,
\item[$2)$]   $S(\gamma_1,\ldots,\gamma_s)$ is positive semidefinite if and only if $\gamma_1\otimes\ldots\otimes\gamma_s$ is positive semidefinite,
\item[$3)$] 
 $S(\gamma_1,\ldots,\gamma_s)S(\delta_1,\ldots,\delta_s)=S(\gamma_1\delta_1,\ldots,\gamma_s\delta_s)$.
\end{itemize}

\end{remark}

 \vspace{0,3cm}

\begin{theorem}\label{theoremshuffle} Let $\gamma_i\in\mathcal{M}_{k_i}\otimes\mathcal{M}_{m_i}$, for $i=1,\ldots,s$, be states. Then the shuffle $S(\gamma_1,\ldots,\gamma_s)$ is completely reducible as a bipartite state of $ \mathcal{M}_{k_1\ldots k_s}\otimes\mathcal{M}_{m_1\ldots m_s}$  if and only if  each $\gamma_i\in\mathcal{M}_{k_j}\otimes\mathcal{M}_{m_j}$, for $i=1,\ldots,s$, is completely reducible.
\end{theorem}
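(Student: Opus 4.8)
The plan is to prove both directions using the clean characterization of complete reducibility from Corollary~\ref{corollary2}, together with the three properties of the shuffle listed in Remark~\ref{remarkpropertyshuffle}. Throughout I write $\gamma = S(\gamma_1,\dots,\gamma_s) = \gamma_1 \otimes \cdots \otimes \gamma_s$ \emph{after} a reshuffle of the tensor legs; the key point from property~$1)$ and $2)$ is that $\gamma$ is a state iff $\gamma_1 \otimes \cdots \otimes \gamma_s$ is (which, since each $\gamma_i \geq 0$, it is), and from property~$3)$ that multiplying two shuffles is the shuffle of the products. So the shuffle essentially reduces questions about $S(\gamma_1,\dots,\gamma_s)$ to questions about the tensor product $\gamma_1 \otimes \cdots \otimes \gamma_s$, after permuting coordinates.

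For the \emph{if} direction, suppose each $\gamma_i$ is completely reducible. The natural strategy is to connect $F_\gamma \circ G_\gamma$ for the shuffle to the tensor product of the maps $F_{\gamma_i} \circ G_{\gamma_i}$. Using Remark~\ref{remarkpropertiesadjoints} (or directly checking on Kronecker factors), one sees that under the leg-permutation identifying $\mathcal{M}_{k_1 \cdots k_s} \otimes \mathcal{M}_{m_1 \cdots m_s}$ with $\bigotimes_i (\mathcal{M}_{k_i} \otimes \mathcal{M}_{m_i})$, the map $F_{S(\gamma_1,\dots,\gamma_s)} \circ G_{S(\gamma_1,\dots,\gamma_s)}$ corresponds (up to this permutation, which is an algebra isomorphism preserving the trace inner product and orthogonal projections) to $(F_{\gamma_1} \circ G_{\gamma_1}) \otimes \cdots \otimes (F_{\gamma_s} \circ G_{\gamma_s})$. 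Each factor is completely reducible (by hypothesis and Lemma~\ref{lemmaCPselfadjoint} each is a self-adjoint positive map), so by Lemma~\ref{lemmatensorcompred} the tensor product is completely reducible, and complete reducibility is clearly invariant under conjugating by a coordinate permutation. Hence $S(\gamma_1,\dots,\gamma_s) \in CR$. This direction should be routine once the identification $F_\gamma \circ G_\gamma \cong \bigotimes_i (F_{\gamma_i} \circ G_{\gamma_i})$ is nailed down.

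For the \emph{only if} direction, suppose $S(\gamma_1,\dots,\gamma_s)$ is completely reducible; I want to show $\gamma_1$ is (the others are symmetric). Let $W \in \mathcal{M}_{k_1}$, $V \in \mathcal{M}_{m_1}$ be orthogonal projections with $tr(\gamma_1(W \otimes V^{\perp})) = tr(\gamma_1(W^{\perp} \otimes V)) = 0$. The idea is to build orthogonal projections on the big space that witness condition $a)$ of Corollary~\ref{corollary2} for $S(\gamma_1,\dots,\gamma_s)$: on the leg-permuted space take $\widetilde W = W \otimes Id_{k_2 \cdots k_s}$ and $\widetilde V = V \otimes Id_{m_2 \cdots m_s}$. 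Since $\gamma_2,\dots,\gamma_s$ are states we may normalize so that, after the shuffle, $tr\big(S(\gamma_1,\dots,\gamma_s)(\widetilde W \otimes \widetilde V^{\perp})\big)$ equals $tr(\gamma_1(W \otimes V^{\perp})) \cdot \prod_{i \geq 2} tr(\gamma_i) = 0$, and similarly for the other term; the product structure of the shuffle is exactly what makes these traces factor. (Here one must handle the harmless degenerate case where some $tr(\gamma_i) = 0$, i.e.\ $\gamma_i = 0$, separately — then $S(\gamma_1,\dots,\gamma_s) = 0$ and complete reducibility is vacuous.) By complete reducibility of the shuffle and Corollary~\ref{corollary2}, $S(\gamma_1,\dots,\gamma_s) = (\widetilde W \otimes \widetilde V)\,S(\cdots)\,(\widetilde W \otimes \widetilde V) + (\widetilde W^{\perp} \otimes \widetilde V^{\perp})\,S(\cdots)\,(\widetilde W^{\perp} \otimes \widetilde V^{\perp})$.

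The final step is to push this decomposition back down to $\gamma_1$. Unravelling the shuffle, $(\widetilde W \otimes \widetilde V) S(\gamma_1,\dots,\gamma_s)(\widetilde W \otimes \widetilde V)$ equals $S\big((W \otimes V)\gamma_1(W \otimes V),\,\gamma_2,\dots,\gamma_s\big)$ by property~$3)$ of the shuffle (since $\widetilde W = W \otimes Id \otimes \cdots$ and $\widetilde V = V \otimes Id \otimes \cdots$ are themselves shuffles of projections), and similarly for the $\perp$ term. So the displayed equation reads
\[
S(\gamma_1,\dots,\gamma_s) = S\big((W\otimes V)\gamma_1(W\otimes V),\gamma_2,\dots,\gamma_s\big) + S\big((W^{\perp}\otimes V^{\perp})\gamma_1(W^{\perp}\otimes V^{\perp}),\gamma_2,\dots,\gamma_s\big),
\]
hence by property~$1)$ (injectivity of the shuffle on tensor products), after cancelling the common nonzero tensor factor $\gamma_2 \otimes \cdots \otimes \gamma_s$, we get $\gamma_1 = (W\otimes V)\gamma_1(W\otimes V) + (W^{\perp}\otimes V^{\perp})\gamma_1(W^{\perp}\otimes V^{\perp})$. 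By Corollary~\ref{corollary2} this shows $\gamma_1 \in CR_{k_1,m_1}$. The main obstacle I anticipate is the bookkeeping in the first direction: carefully verifying that the leg-permutation really does intertwine $F_{S(\gamma_1,\dots,\gamma_s)} \circ G_{S(\gamma_1,\dots,\gamma_s)}$ with $\bigotimes_i (F_{\gamma_i} \circ G_{\gamma_i})$, and in the second direction, cleanly handling the cancellation of the common tensor factor (which rests on property~$1)$ and requires the other $\gamma_i$ to be nonzero, i.e.\ isolating the trivial case).
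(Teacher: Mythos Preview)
Your proposal is correct. The \emph{if} direction is identical to the paper's: both identify $F_{S(\gamma_1,\dots,\gamma_s)}\circ G_{S(\gamma_1,\dots,\gamma_s)}$ with $\bigotimes_i (F_{\gamma_i}\circ G_{\gamma_i})$ and invoke Lemma~\ref{lemmatensorcompred}.

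For the \emph{only if} direction you take a slightly different route. The paper simply observes that each $\gamma_i$ (up to a positive scalar) is obtained from $S(\gamma_1,\dots,\gamma_s)$ by successive partial traces and then cites Proposition~\ref{proppartialtraceiscompred}. You instead work directly with Corollary~\ref{corollary2}: lift projections $W,V$ for $\gamma_1$ to $\widetilde W = W\otimes Id$, $\widetilde V = V\otimes Id$ on the big space, verify condition~$a)$ for the shuffle using the multiplicativity of the trace, apply Corollary~\ref{corollary2} to get condition~$b)$, and then cancel the common tensor factor $\gamma_2\otimes\cdots\otimes\gamma_s$ using property~$1)$ of Remark~\ref{remarkpropertyshuffle} (together with the evident linearity of the shuffle in each slot). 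This is essentially an inlined version of the proof of Proposition~\ref{proppartialtraceiscompred}, specialized to the shuffle structure; the paper's route is more modular, yours is more self-contained. Both are valid, and both share the same harmless degenerate-case caveat when some $\gamma_i=0$ (in which case the shuffle is zero and the biconditional is trivially problematic on one side; the paper's partial-trace argument faces the identical issue, so this is not a defect peculiar to your approach).
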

\begin{proof}
Notice that  a positive multiple of $\gamma_i$ is obtainable from $S(\gamma_1,\ldots,\gamma_s)$ by a successive number of partial traces on the sites $\mathcal{M}_{k_j}$ and $\mathcal{M}_{m_j}$. So by proposition \ref{proppartialtraceiscompred}, every $\gamma_i$ must be completely reducible if $S(\gamma_1,\ldots,\gamma_s)$ is completely reducible. \vspace{0,1cm}

Now, for the converse, assume that each $\gamma_i$ is completely reducible. It is not difficult to see that 
\begin{itemize}
\item
$G_{S(\gamma_1,\ldots,\gamma_s)}=G_{\gamma_1}\otimes\ldots \otimes G_{\gamma_s}:\mathcal{M}_{k_1}\otimes \ldots \otimes\mathcal{M}_{k_s}\rightarrow \mathcal{M}_{m_1}\otimes\ldots \otimes\mathcal{M}_{m_s}$ and
\item 
$F_{S(\gamma_1,\ldots,\gamma_s)}=F_{\gamma_1}\otimes\ldots \otimes F_{\gamma_s}:\mathcal{M}_{m_1}\otimes\ldots \otimes\mathcal{M}_{m_s}\rightarrow \mathcal{M}_{k_1}\otimes \ldots \otimes\mathcal{M}_{k_s}$. 
\end{itemize}

 \vspace{0,3cm}

Therefore, $F_{S(\gamma_1,\ldots,\gamma_s)}\circ G_{S(\gamma_1,\ldots,\gamma_s)}=(F_{\gamma_1}\circ G_{\gamma_1})\otimes\ldots \otimes (F_{\gamma_s}\circ G_{\gamma_s}).$

 \vspace{0,3cm}

By lemma \ref{lemmatensorcompred}, $F_{S(\gamma_1,\ldots,\gamma_s)}\circ G_{S(\gamma_1,\ldots,\gamma_s)}:\mathcal{M}_{k_1\ldots k_s}\rightarrow \mathcal{M}_{m_1\ldots m_s}$ is completely reducible, since we have $F_{\gamma_i}\circ G_{\gamma_i}:\mathcal{M}_{k_i}\rightarrow \mathcal{M}_{k_i}$ completely reducible for every $i$ by the current hypothesis.
\end{proof}

\vspace{0,5cm}

The next lemma describes more properties of the shuffle. 

\vspace{0,5cm}

 \begin{lemma}\label{lemma_properties_shuffle}Let $\gamma_i\in\mathcal{M}_{k_i}\otimes\mathcal{M}_{k_i}$, for $i=1,\ldots,s$, be states. Then the shuffle $S(\gamma_1,\ldots,\gamma_s)\in  \mathcal{M}_{k_1\ldots k_s}\otimes\mathcal{M}_{k_1\ldots k_s}$ satisfies
 \begin{itemize}
 \item[$a)$] $S(\gamma_1,\ldots,\gamma_s)^{\Gamma}=S(\gamma_1^{\Gamma},\ldots,\gamma_s^{\Gamma})$.
 \item[$b)$]  $S(\gamma_1,\ldots,\gamma_s)F=S(\gamma_1 F_1,\ldots,\gamma_s F_s)$, where $F\in \mathcal{M}_{k_1\ldots k_s}\otimes\mathcal{M}_{k_1\ldots k_s}$ is the flip operator  and for each $i$, $F_i\in\mathcal{M}_{k_i}\otimes\mathcal{M}_{k_i}$ is the flip operator. 
 \item[$c)$] $\mathcal{R}(S(\gamma_1,\ldots,\gamma_s))=S(\mathcal{R}(\gamma_1),\ldots,\mathcal{R}(\gamma_s)).$
 \end{itemize}
 \end{lemma}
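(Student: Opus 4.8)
All three identities are "bilinear/structural" facts about the shuffle that follow by tracking the Kronecker decompositions of the states through each operation. The plan is to fix decompositions $\gamma_i=\sum_{j=1}^{n_i}A_j^i\otimes B_j^i\in\mathcal{M}_{k_i}\otimes\mathcal{M}_{k_i}$ so that, by Definition \ref{definitionshuffle},
$$S(\gamma_1,\ldots,\gamma_s)=\sum_{j_1,\ldots,j_s}A_{j_1}^1\otimes\cdots\otimes A_{j_s}^s\otimes B_{j_1}^1\otimes\cdots\otimes B_{j_s}^s,$$
and then to apply each of the three maps ($\Gamma$, right multiplication by the flip, realignment $\mathcal{R}$) term by term, observing in each case that the map acts "locally" on each summand in a way compatible with the shuffle reindexing.

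For part $a)$, the partial transpose $\Gamma$ transposes only the second tensor factor; on a product state $A\otimes B$ it sends it to $A\otimes B^t$. Applied to the shuffle, it transposes the block $B_{j_1}^1\otimes\cdots\otimes B_{j_s}^s$, which equals $(B_{j_1}^1)^t\otimes\cdots\otimes(B_{j_s}^s)^t$ since transposition is multiplicative across Kronecker products; reassembling via the shuffle on the states $\gamma_i^\Gamma=\sum_j A_j^i\otimes(B_j^i)^t$ gives exactly $S(\gamma_1^\Gamma,\ldots,\gamma_s^\Gamma)$. Part $c)$ is entirely analogous: realignment $\mathcal{R}$ acts on a product state $A\otimes B$ by $\mathcal{R}(A\otimes B)=\mathrm{vec}(A)\,\mathrm{vec}(B)^{?}$ (or whichever convention the paper uses — I would cite the convention and note that $\mathcal{R}$ is determined by its action $A\otimes B\mapsto$ the rank-one piece built from the "vectorizations" of $A$ and $B$), and the key point is again that this action is compatible with Kronecker products in each slot, so $\mathcal{R}$ distributes over the shuffle. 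For part $b)$, I would use property $3)$ of Remark \ref{remarkpropertyshuffle}, namely $S(\gamma_1,\ldots,\gamma_s)S(\delta_1,\ldots,\delta_s)=S(\gamma_1\delta_1,\ldots,\gamma_s\delta_s)$, together with the observation that the global flip operator $F$ on $\mathbb{C}^{k_1\cdots k_s}\otimes\mathbb{C}^{k_1\cdots k_s}$ is itself a shuffle: $F=S(F_1,\ldots,F_s)$ where $F_i$ is the flip on $\mathbb{C}^{k_i}\otimes\mathbb{C}^{k_i}$. Granting $F=S(F_1,\ldots,F_s)$, part $b)$ is immediate from property $3)$.

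The only nonroutine point — and hence the main obstacle — is verifying the identity $F=S(F_1,\ldots,F_s)$, i.e. that the flip on the big space factors as a shuffle of the small flips. I would prove this by writing $F_i=\sum_{a,b}e_{ab}^{(i)}\otimes e_{ba}^{(i)}$ in matrix units, so that $S(F_1,\ldots,F_s)=\sum e_{a_1b_1}^{(1)}\otimes\cdots\otimes e_{a_sb_s}^{(s)}\otimes e_{b_1a_1}^{(1)}\otimes\cdots\otimes e_{b_sa_s}^{(s)}$, and then recognizing that $e_{a_1b_1}^{(1)}\otimes\cdots\otimes e_{a_sb_s}^{(s)}$ ranges over a basis of matrix units of $\mathcal{M}_{k_1\cdots k_s}$ with the first index the multi-index $(a_1,\ldots,a_s)$ and the second the multi-index $(b_1,\ldots,b_s)$; thus the sum is precisely $\sum_{\mathbf a,\mathbf b}E_{\mathbf a\mathbf b}\otimes E_{\mathbf b\mathbf a}=F$. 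One must be slightly careful that the ordering of tensor slots in the shuffle (all the "$A$-slots" first, then all the "$B$-slots") matches the bipartition $\mathcal{M}_{k_1\cdots k_s}\otimes\mathcal{M}_{k_1\cdots k_s}$ under which $F$ is the flip, but this is exactly how the codomain of the shuffle was set up in Definition \ref{definitionshuffle}. With $F=S(F_1,\ldots,F_s)$ in hand, $a)$, $b)$, $c)$ all reduce to the elementary "local compatibility" computations above, and I would present them briefly rather than in full detail.
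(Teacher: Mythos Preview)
Your proposal is correct, and for parts $a)$ and $b)$ it matches the paper's proof exactly: the paper also writes out the shuffle in tensor components for $a)$, and for $b)$ it verifies $F=S(F_1,\ldots,F_s)$ via matrix units and then invokes property $3)$ of Remark~\ref{remarkpropertyshuffle}.

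The difference is in part $c)$. You propose a direct argument based on how $\mathcal{R}$ acts on elementary tensors $A\otimes B$ and its compatibility with Kronecker products in each slot; this does work (one checks it on matrix units, just as you did for the flip), but your sketch is a bit vague on the convention and leaves the verification implicit. The paper instead takes a shortcut: it cites the identity $\mathcal{R}(\gamma^{\Gamma})^{\Gamma}=\gamma F$, equivalently $\mathcal{R}(\gamma)=(\gamma^{\Gamma}F)^{\Gamma}$, and then item $c)$ follows immediately from items $a)$ and $b)$ already proved. The paper's route is tidier because it avoids unpacking $\mathcal{R}$ at all and makes $c)$ a one-line corollary of the first two parts; your route is more self-contained but requires a second matrix-unit computation parallel to the one you did for $F$.
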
 
  \begin{proof}Let $\gamma_i=\sum_{j=1}^{n_i}A_j^i\otimes B_j^i\in\mathcal{M}_{k_j}\otimes\mathcal{M}_{k_j}$ for $i=1,\ldots,s$. Then 
  $$S(\gamma_1,\ldots,\gamma_s)^{\Gamma}=\sum_{j_1,\ldots,j_s}A_{j_1}^{1}\otimes \ldots \otimes A_{j_s}^{s}\otimes (B_{j_1}^{1})^t\otimes \ldots \otimes (B_{j_s}^{s})^t=S(\gamma_1^{\Gamma},\ldots,\gamma_s^{\Gamma}), $$
  which proves item $a)$.\\
  
For item $b)$,   let $e_{1}^i,\ldots,e_{k_i}^i$ be the canonical basis of $\mathbb{C}^{k_i}$ for $i=1,\ldots,s$. Then $$F_{i}=\sum_{p_i,q_i=1}^{k_i} e_{p_i}^i(e_{q_i}^i)^t\otimes (e_{q_i}^i)(e_{p_i}^i)^t.$$

  Moreover, $F=\sum_{p_1,\ldots,p_s,q_1,\ldots,q_s}e_{p_1^1}e_{q_1^1}^t\otimes\ldots e_{p_s^s}e_{q_s^s}^t\otimes e_{q_1^1}e_{p_1^1}^t\otimes\ldots e_{q_s^s}e_{p_s^s}^t=S(F_1,\ldots,F_s)$.\\

Notice that   $S(\gamma_1,\ldots,\gamma_s)F=S(\gamma_1,\ldots,\gamma_s)S(F_1,\ldots,F_s)=S(\gamma_1F_1,\ldots,\gamma_sF_s),$ by remark \ref{remarkpropertyshuffle}.\\

For item $c)$, there is a result proved in item 7) of \cite[Lemma 2.3]{CarielloArxiv} that says $\mathcal{R}(\gamma^{\Gamma})^{\Gamma}=\gamma F$. Thus,  $\mathcal{R}(\gamma)=(\gamma^{\Gamma} F)^{\Gamma}$. The desired result follows from items $a)$ and $b)$.
  \end{proof}
  
\vspace{0,5cm}  
  
 Now we can produce completely reducible states avoiding the three known requirements described in (i),(ii), (iii) in the introduction.

\vspace{0,5cm}

\section{Completely reducible states avoiding the known conditions}

As discussed in the introduction and shown in corollary \ref{corollaryCRdense}, positive definite states are trivial examples of completely reducible states, which we decide to not exclude them from $CR_{k,m}$ in order to maintain the convexity property of $CR_{k,m}$.

\vspace{0,5cm} 

We also know that there are three types of states that possess this property. So in this section the goal is to construct non trivial examples of completely reducible states  avoiding these three types, which are

\begin{itemize}
\item[$(1)$] positive under partial transpose, i.e., $\delta\geq 0$ and $\delta^{\Gamma}\geq 0$.
\item[$(2)$]  positive under partial transpose composed with realignment, i.e., $\delta\geq 0$ and $\mathcal{R}(\delta^{\Gamma})\geq 0$.
\item[$(3)$]   invariant under realignment, i.e, $\delta\geq 0$ and  $\mathcal{R}(\delta)= \delta$.
\end{itemize}

\vspace{0,5cm}  

We accomplish this goal by proving the following result: if $\gamma_j\in\mathcal{M}_k\otimes \mathcal{M}_k$ is a state that satisfies condition $j\in\{1,2,3\}$ above, but not the others, then $S(\gamma_1,\gamma_2,\gamma_3)\in \mathcal{M}_{k^3}\otimes\mathcal{M}_{k^3}$ is completely reducible, but it does not satisfy any of the conditions $(1),(2)$ and $(3)$ above (See corollary \ref{corollary_new_types}).

\vspace{0,5cm}

We need some preliminary results before presenting this result.

   \vspace{0,5cm} 
   \begin{proposition} Let $\gamma_i\in\mathcal{M}_k\otimes \mathcal{M}_k$ be a state whose image is not contained in the anti-symmetric subspace of $\mathbb{C}^k\otimes\mathbb{C}^k$ for $i=1,\ldots,n$. Then the  following results hold.
\begin{itemize}
\item[$a)$] $\gamma_1^{\Gamma}\otimes\ldots\otimes\gamma_n^{\Gamma}\geq 0$ if and only if  $\gamma_i^{\Gamma}\geq 0$ for $i=1,
\ldots,n$. 
\item[$b)$]$\mathcal{R}(\gamma_1^{\Gamma})\otimes\ldots\otimes\mathcal{R}(\gamma_n^{\Gamma})\geq 0$ if and only if  $\mathcal{R}(\gamma_i^{\Gamma})\geq 0$ for $i=1,\ldots,n$. 
\item[$c)$]$\mathcal{R}(\gamma_1)\otimes\ldots\otimes\mathcal{R}(\gamma_n)=\gamma_1\otimes\ldots\otimes\gamma_n$ if and only if $\mathcal{R}(\gamma_i)=\gamma_i$ for $i=1,\ldots,n$. 
\end{itemize}   
   \end{proposition}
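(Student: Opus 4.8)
The three ``if'' implications are immediate: a Kronecker product of positive semidefinite matrices is positive semidefinite, which settles the backward direction of $(a)$ and $(b)$, while $\mathcal{R}(\gamma_i)=\gamma_i$ for every $i$ trivially gives $\gamma_1\otimes\cdots\otimes\gamma_n=\mathcal{R}(\gamma_1)\otimes\cdots\otimes\mathcal{R}(\gamma_n)$, settling $(c)$. So the content lies in the three forward directions, which I would treat uniformly. First I would record that the hypothesis ``$\mathrm{Im}(\gamma_i)$ is not contained in the anti-symmetric subspace'' in particular forces $\gamma_i\neq 0$, hence $\gamma_i^\Gamma\neq 0$, $\mathcal{R}(\gamma_i^\Gamma)\neq 0$ and $\mathcal{R}(\gamma_i)\neq 0$, since partial transposition and realignment are invertible linear maps.

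The main elementary tool I would use is: if $A_1\otimes\cdots\otimes A_n$ is positive semidefinite and every $A_i$ is a nonzero Hermitian matrix, then each $A_i$ is semidefinite and the number of negative semidefinite factors is even; equivalently, writing $A_i=\epsilon_i|A_i|$ with $\epsilon_i\in\{1,-1\}$, one has $A_1\otimes\cdots\otimes A_n=\bigl(\textstyle\prod_i\epsilon_i\bigr)\,|A_1|\otimes\cdots\otimes|A_n|$ with $\prod_i\epsilon_i=1$. (If some iterated tensor sub-factor were indefinite, tensoring eigenvectors of oppositely signed eigenvalues would contradict positivity, forcing semidefiniteness of each factor.) I would also use the companion fact on uniqueness of Kronecker factorizations: if $A_1\otimes\cdots\otimes A_n=B_1\otimes\cdots\otimes B_n\neq 0$ then $A_i=c_iB_i$ for scalars $c_i$ with $\prod_ic_i=1$.

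For $(a)$, apply the tool with $A_i=\gamma_i^\Gamma$ (Hermitian, since $\gamma_i$ is): each $\gamma_i^\Gamma$ is semidefinite, and $\gamma_i^\Gamma\leq 0$ is impossible because $\mathrm{tr}(\gamma_i^\Gamma)=\mathrm{tr}(\gamma_i)>0$; hence every $\gamma_i^\Gamma\geq 0$. For $(b)$, apply the tool with $A_i=\mathcal{R}(\gamma_i^\Gamma)$, which is Hermitian whenever $\gamma_i$ is: each $\mathcal{R}(\gamma_i^\Gamma)$ is semidefinite, and to exclude the negative case I would invoke the identity $\mathcal{R}(\gamma^\Gamma)^\Gamma=\gamma F$ from \cite[Lemma 2.3]{CarielloArxiv}, together with $F^\Gamma=u_ku_k^*$ and $F^2=\mathrm{Id}$, computing $\langle u_k\mid\mathcal{R}(\gamma_i^\Gamma)\mid u_k\rangle=\mathrm{tr}\bigl(\mathcal{R}(\gamma_i^\Gamma)^\Gamma(u_ku_k^*)^\Gamma\bigr)=\mathrm{tr}(\gamma_iF^2)=\mathrm{tr}(\gamma_i)>0$, so $\mathcal{R}(\gamma_i^\Gamma)\not\leq 0$; hence every $\mathcal{R}(\gamma_i^\Gamma)\geq 0$. (If one prefers an argument that visibly uses the anti-symmetric subspace: $\mathcal{R}(\gamma_i^\Gamma)\leq 0$ would force the diagonal entries $\langle e_a\otimes e_b\mid\gamma_i\mid e_b\otimes e_a\rangle\leq 0$, and then pairing $\gamma_i$ against the symmetric vectors $e_a\otimes e_b+e_b\otimes e_a$ would force $\gamma_i$ supported on the anti-symmetric subspace, contradicting the hypothesis.) For $(c)$, the uniqueness statement gives $\mathcal{R}(\gamma_i)=c_i\gamma_i$; applying $\mathcal{R}$ once more and using $\mathcal{R}^2=\mathrm{Id}$ gives $c_i^2=1$, and $\mathrm{tr}(\mathcal{R}(\gamma_i))=\langle u_k\mid\gamma_i\mid u_k\rangle\geq 0$ together with $\mathrm{tr}(\gamma_i)>0$ forces $c_i=1$, i.e.\ $\mathcal{R}(\gamma_i)=\gamma_i$.

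The step I expect to be the main obstacle is $(b)$: one must make sure the tensor factors $\mathcal{R}(\gamma_i^\Gamma)$ are genuinely Hermitian so that the elementary tool applies, and then correctly pin down why a negative semidefinite factor is incompatible with $\mathrm{Im}(\gamma_i)$ meeting the symmetric subspace — this is exactly where the hypothesis on the image is used, and it requires careful handling of the realignment/partial-transpose identities recorded in Lemma \ref{lemma_properties_shuffle}. Parts $(a)$ and $(c)$ should be comparatively routine once the two tensor-algebra facts and the trace identities are in place.
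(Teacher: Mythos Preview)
Your arguments for $(a)$ and $(c)$ are fine, but $(b)$ contains a genuine error: the assertion that $\mathcal{R}(\gamma_i^{\Gamma})$ is Hermitian whenever $\gamma_i$ is Hermitian is false. From the identity $\mathcal{R}(\gamma^{\Gamma})^{\Gamma}=\gamma F$ you quote, one has $\mathcal{R}(\gamma^{\Gamma})=(\gamma F)^{\Gamma}$, and since partial transposition preserves Hermiticity, $\mathcal{R}(\gamma^{\Gamma})$ is Hermitian if and only if $\gamma F=F\gamma$, i.e.\ if and only if $\gamma$ commutes with the flip. A concrete counterexample inside the hypotheses of the proposition is $\gamma=e_0e_0^{*}\otimes e_1e_1^{*}$: its image is spanned by $e_0\otimes e_1$, which is not anti-symmetric, yet $\mathcal{R}(\gamma^{\Gamma})=e_0e_1^{*}\otimes e_0e_1^{*}$ is not Hermitian. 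Consequently your ``elementary tool'' (stated for Hermitian factors) does not apply, and neither does the alternative argument you sketch, since it already presupposes that $\mathcal{R}(\gamma_i^{\Gamma})$ is semidefinite.

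The gap is repairable: one can upgrade the tool to ``if $A_1\otimes\cdots\otimes A_n\geq 0$ with each $A_i\neq 0$, then each $A_i$ is a complex scalar multiple of a positive semidefinite matrix'', and then your computation $\langle u_k\,|\,\mathcal{R}(\gamma_i^{\Gamma})\,|\,u_k\rangle=\mathrm{tr}(\gamma_i)>0$ forces that scalar to be a positive real. The paper, however, avoids the issue altogether by a different and more uniform device: it observes that $P:=Id+F_k+u_ku_k^{*}$ is positive semidefinite with image the symmetric subspace and is \emph{simultaneously} fixed by $(\cdot)^{\Gamma}$, $\mathcal{R}$ and $\mathcal{R}\circ(\cdot)^{\Gamma}$; since these are isometries, $\mathrm{tr}(\gamma_i^{\Gamma}P)=\mathrm{tr}(\mathcal{R}(\gamma_i^{\Gamma})P)=\mathrm{tr}(\mathcal{R}(\gamma_i)P)=\mathrm{tr}(\gamma_iP)>0$, the strict inequality coming precisely from the anti-symmetric hypothesis. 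Then one simply applies the completely positive ``contract against $P$'' map on all but one tensor slot to obtain each factor (times a positive constant), bypassing any discussion of Hermiticity of individual factors.
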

   \begin{proof} Let $u_k\in \mathbb{C}^k\otimes\mathbb{C}^k$ be as in theorem \ref{theoremchoi}, $Id\in\mathcal{M}_k\otimes \mathcal{M}_k$ and $F_k\in\mathcal{M}_k\otimes \mathcal{M}_k$ be the flip operator.
   
    Notice that
   \begin{center}
   $ \mathcal{R}(u_ku_k^*)=Id,\ \mathcal{R}(Id)=u_ku_k^*,\ \mathcal{R}(F_k)=F_k,\ F_{k}^{\Gamma}=u_ku_k^*$ and $(u_ku_k^* )^{\Gamma}=F_k$.
   \end{center}
   
Therefore
    $$(Id+F_k+u_ku_k^*)^{\Gamma}=\mathcal{R}((Id+F_k+u_ku_k^*)^{\Gamma})=\mathcal{R}(Id+F_k+u_ku_k^*)=Id+F_k+u_ku_k^*.$$
   
   \vspace{0,3cm}
   
 Moreover,  $(\cdot)^{\Gamma}$, $\mathcal{R}((\cdot)^{\Gamma})$  and $\mathcal{R}(\cdot)$ are isometries acting on $\mathcal{M}_k\otimes \mathcal{M}_k$ (See the introduction of \cite{CarielloArxiv}). Therefore,
 
 \vspace{0,2cm}
 
\begin{itemize}
\item[(i)] $tr(\gamma_i^{\Gamma}(Id+F_k+u_ku_k^*))=tr(\gamma_i^{\Gamma}((Id+F_k+u_ku_k^*)^{\Gamma})^*)=tr(\gamma_i(Id+F_k+u_ku_k^*)),$\vspace{0,2cm}
\item [(ii)] $tr(\mathcal{R}(\gamma_i^{\Gamma})(Id+F_k+u_ku_k^*))=tr(\mathcal{R}(\gamma_i^{\Gamma})(\mathcal{R}((Id+F_k+u_ku_k^*)^{\Gamma}))^*)=tr(\gamma_i(Id+F_k+u_ku_k^*)),$\vspace{0,2cm}
\item[(iii)]  $tr(\mathcal{R}(\gamma_i)(Id+F_k+u_ku_k^*))=tr(\mathcal{R}(\gamma_i)(\mathcal{R}(Id+F_k+u_ku_k^*))^*)=tr(\gamma_i(Id+F_k+u_ku_k^*)).$\\
\end{itemize}  
 
Next, since   $Id+F_k+u_ku_k^*\geq 0$ and its image is the symmetric subspace of  $\mathbb{C}^k\otimes\mathbb{C}^k$, if $tr(\gamma_i (Id+F_k+u_ku_k^*))=0$ then the image of $\gamma_i$ should be contained in the anti-symmetric subspace of  $\mathbb{C}^k\otimes\mathbb{C}^k$, which is not the case by hypothesis. Hence in items (i), (ii) and (iii) above, we get \begin{center}
$tr(\gamma_i^{\Gamma}(Id+F_k+u_ku_k^*))>0$,  $tr(\mathcal{R}(\gamma_i^{\Gamma})(Id+F_k+u_ku_k^*))>0$ and $tr(\mathcal{R}(\gamma_i)(Id+F_k+u_ku_k^*))>0$.
\end{center}

    \vspace{0,5cm}
   
Now, if  $\gamma_1^{\Gamma}\otimes\ldots\otimes\gamma_n^{\Gamma}\geq 0$ then $$\gamma_1^{\Gamma}\stackrel{>0}{\overbrace{tr(\gamma_2^{\Gamma}(Id+F_k+u_ku_k^*))}}\ldots\ \stackrel{>0}{\overbrace{tr(\gamma_n^{\Gamma}(Id+F_k+u_ku_k^*))}}\geq 0,$$ which implies $\gamma_1^{\Gamma} \geq 0$. Of course the same is valid for every $\gamma_i^{\Gamma}$.\\

Next,  if  $\mathcal{R}(\gamma_1^{\Gamma})\otimes\ldots\otimes\mathcal{R}(\gamma_n^{\Gamma})\geq 0$ then $$\mathcal{R}(\gamma_1^{\Gamma})\stackrel{>0}{\overbrace{tr(\mathcal{R}(\gamma_2^{\Gamma})(Id+F_k+u_ku_k^*))}}\ldots  \stackrel{>0}{\overbrace{tr(\mathcal{R}(\gamma_n^{\Gamma})(Id+F_k+u_ku_k^*))}}\geq 0,$$ which implies $\mathcal{R}(\gamma_1^{\Gamma})\geq 0$. Of course the same is valid for  every $\mathcal{R}(\gamma_i^{\Gamma})$.\\

Finally, if $\mathcal{R}(\gamma_1)\otimes\ldots\otimes\mathcal{R}(\gamma_n)=\gamma_1\otimes\ldots\otimes\gamma_n$, then 
$$\mathcal{R}(\gamma_1)tr(\mathcal{R}(\gamma_2)(Id+F_k+u_ku_k^*))\ldots tr(\mathcal{R}(\gamma_n)(Id+F_k+u_ku_k^*))$$
$$=\gamma_1tr(\gamma_2(Id+F_k+u_ku_k^*))\ldots tr(\gamma_n(Id+F_k+u_ku_k^*)),$$

\vspace{0,2cm}

which implies that $\mathcal{R}(\gamma_1)=\gamma_1$, since $ tr(\mathcal{R}(\gamma_i)(Id+F_k+u_ku_k^*))= tr(\gamma_i(Id+F_k+u_ku_k^*))>0$ for every $i$. Of course $\mathcal{R}(\gamma_i)=\gamma_i$ is also valid for every $i$.

\vspace{0,2cm}

It is obvious that the converse of $a)$, $b)$ and $c)$ are valid. 
   \end{proof}
   
   \vspace{0,5cm}
   
The next theorem says that the three properties discussed at the beginning of this section behave with respect to the shuffle in a similar way to the complete reducibility property  (Compare the next theorem with theorem \ref{theoremshuffle}).
 
    \vspace{0,5cm}
    \begin{theorem}\label{theoremshuffle2} Let  $\gamma_i\in\mathcal{M}_{k_i}\otimes \mathcal{M}_{k_i}$ be a state whose image is not contained in  the anti-symmetric subspace of $\mathbb{C}^k\otimes\mathbb{C}^k$ for $i=1,\ldots,n$. Consider $S(\gamma_1,\ldots,\gamma_n)\in \mathcal{M}_{k_1\ldots k_n}\otimes\mathcal{M}_{k_1\ldots k_n}$. Then 
   \begin{enumerate}
   \item$S(\gamma_1,\ldots,\gamma_n)^{\Gamma}\geq 0$ if and only if  $\gamma_i^{\Gamma}\geq 0$ for every $i$.
   \item  $\mathcal{R}(S(\gamma_1,\ldots,\gamma_n)^{\Gamma})\geq 0$ if and only if  $\mathcal{R}(\gamma_i^{\Gamma})\geq 0$ for every $i$.
      \item  $\mathcal{R}(S(\gamma_1,\ldots,\gamma_n))=S(\gamma_1,\ldots,\gamma_n)$ if and only if  $\mathcal{R}(\gamma_i)=\gamma_i$ for every $i$.
   \end{enumerate}
    \end{theorem}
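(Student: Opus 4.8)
The plan is to reduce all three statements to the two results immediately preceding the theorem: Lemma~\ref{lemma_properties_shuffle}, which commutes the operations $(\cdot)^{\Gamma}$ and $\mathcal{R}(\cdot)$ past the shuffle, and the preceding proposition, which establishes exactly these three conditions for an ordinary tensor product $\gamma_1\otimes\cdots\otimes\gamma_n$. The bridge between a shuffle and a tensor product is furnished by Remark~\ref{remarkpropertyshuffle}: item~$2)$ says $S(\delta_1,\ldots,\delta_n)\geq 0$ if and only if $\delta_1\otimes\cdots\otimes\delta_n\geq 0$ (the shuffle is a conjugation of the tensor product by a permutation of the canonical basis, hence preserves positivity in both directions, even for Hermitian matrices that are not positive), and item~$1)$ says $S(\delta_1,\ldots,\delta_n)=S(\eta_1,\ldots,\eta_n)$ if and only if $\delta_1\otimes\cdots\otimes\delta_n=\eta_1\otimes\cdots\otimes\eta_n$.

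For item~$(1)$: by Lemma~\ref{lemma_properties_shuffle}$a)$ one has $S(\gamma_1,\ldots,\gamma_n)^{\Gamma}=S(\gamma_1^{\Gamma},\ldots,\gamma_n^{\Gamma})$; by Remark~\ref{remarkpropertyshuffle}$2)$ this is positive semidefinite if and only if $\gamma_1^{\Gamma}\otimes\cdots\otimes\gamma_n^{\Gamma}\geq 0$; and part~$a)$ of the preceding proposition (whose hypothesis is precisely that no $\Ima(\gamma_i)$ lies in the anti-symmetric subspace) turns this into $\gamma_i^{\Gamma}\geq 0$ for every $i$. For item~$(2)$: apply Lemma~\ref{lemma_properties_shuffle}$a)$ and then $c)$ to obtain $\mathcal{R}(S(\gamma_1,\ldots,\gamma_n)^{\Gamma})=\mathcal{R}(S(\gamma_1^{\Gamma},\ldots,\gamma_n^{\Gamma}))=S(\mathcal{R}(\gamma_1^{\Gamma}),\ldots,\mathcal{R}(\gamma_n^{\Gamma}))$, use Remark~\ref{remarkpropertyshuffle}$2)$ to see this is $\geq 0$ iff $\mathcal{R}(\gamma_1^{\Gamma})\otimes\cdots\otimes\mathcal{R}(\gamma_n^{\Gamma})\geq 0$, and finish with part~$b)$ of the preceding proposition. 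For item~$(3)$: by Lemma~\ref{lemma_properties_shuffle}$c)$ the equality $\mathcal{R}(S(\gamma_1,\ldots,\gamma_n))=S(\gamma_1,\ldots,\gamma_n)$ becomes $S(\mathcal{R}(\gamma_1),\ldots,\mathcal{R}(\gamma_n))=S(\gamma_1,\ldots,\gamma_n)$, which by Remark~\ref{remarkpropertyshuffle}$1)$ is equivalent to $\mathcal{R}(\gamma_1)\otimes\cdots\otimes\mathcal{R}(\gamma_n)=\gamma_1\otimes\cdots\otimes\gamma_n$, and part~$c)$ of the preceding proposition closes the argument.

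I do not anticipate a genuine obstacle: once Lemma~\ref{lemma_properties_shuffle}, Remark~\ref{remarkpropertyshuffle}, and the preceding proposition are available, each of the three statements is a short chain of equivalences. The only point deserving a moment of care is the propagation of the anti-symmetric-image hypothesis — it is needed solely to license the invocation of the preceding proposition, and since it is imposed directly on the $\gamma_i$ there is nothing to track. A secondary detail, should one wish to be fully self-contained, is the ``only if'' direction of Remark~\ref{remarkpropertyshuffle}$2)$ applied to the possibly non-positive Hermitian matrices $\gamma_i^{\Gamma}$ and $\mathcal{R}(\gamma_i)$; this is immediate because the shuffle is conjugation by a permutation matrix, hence preserves the full spectrum.
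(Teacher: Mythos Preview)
Your proposal is correct and follows essentially the same route as the paper's own proof: Lemma~\ref{lemma_properties_shuffle} to commute $(\cdot)^{\Gamma}$ and $\mathcal{R}(\cdot)$ past the shuffle, Remark~\ref{remarkpropertyshuffle} to pass between the shuffle and the tensor product, and the preceding proposition to conclude. Your extra remark that Remark~\ref{remarkpropertyshuffle}$2)$ applies because the shuffle is conjugation by a permutation matrix is a welcome clarification the paper leaves implicit.
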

    \begin{proof}
     First, notice that $S(\gamma_1,\ldots,\gamma_n)^{\Gamma}=S(\gamma_1^{\Gamma},\ldots,\gamma_n^{\Gamma})$ by lemma \ref{lemma_properties_shuffle}.

\vspace{0,3cm}

By remark \ref{remarkpropertyshuffle},  $S(\gamma_1^{\Gamma},\ldots,\gamma_n^{\Gamma})\geq 0$ if and only if $\gamma_1^{\Gamma}\otimes \ldots\otimes\gamma_n^{\Gamma}\geq 0$, which is true if and only if $\gamma_i^{\Gamma}\geq 0$, for $i=1,\ldots,n$, by the previous proposition.
 
\vspace{0,3cm} 
  
  Next,   by lemma \ref{lemma_properties_shuffle}, we have $$\mathcal{R}(S(\gamma_1,\ldots,\gamma_n)^{\Gamma})=S(\mathcal{R}(\gamma_1^{\Gamma}),\ldots,\mathcal{R}(\gamma_n^{\Gamma})).$$
  
\vspace{0,3cm}   
  
  By remark \ref{remarkpropertyshuffle}, $S(\mathcal{R}(\gamma_1^{\Gamma}),\ldots,\mathcal{R}(\gamma_n^{\Gamma}))\geq 0$ if and only if $\mathcal{R}(\gamma_1^{\Gamma})\otimes \ldots\otimes \mathcal{R}(\gamma_n^{\Gamma})\geq 0$, which is  true if and only if 
  $\mathcal{R}(\gamma_i^{\Gamma})\geq 0$, for $i=1,\ldots,n$, by the previous proposition.

  \vspace{0,3cm}
  
  Finally, by lemma \ref{lemma_properties_shuffle}, we have $$\mathcal{R}(S(\gamma_1,\ldots,\gamma_n))=S(\mathcal{R}(\gamma_1),\ldots,\mathcal{R}(\gamma_n)).$$
    
  \vspace{0,3cm}    
    
By remark \ref{remarkpropertyshuffle},     $S(\mathcal{R}(\gamma_1),\ldots,\mathcal{R}(\gamma_n))=S(\gamma_1,\ldots,\gamma_n)$ if and only if $\mathcal{R}(\gamma_1)\otimes\ldots\otimes\mathcal{R}(\gamma_n)=\gamma_1\otimes\ldots\otimes\gamma_n,$ which is  true if and only if    $\mathcal{R}(\gamma_i)= \gamma_i$, for $i=1,\ldots,n$, by the previous proposition.
    \end{proof}

\vspace{0,5cm}  

The next lemma says that the matrices that satisfy conditions (1),(2) and (3) described at the beginning of this section cannot be supported on the anti-symmetric subspace of $\mathbb{C}^k\otimes\mathbb{C}^k$.

\vspace{0,5cm}

\begin{lemma}\label{lemma_image_not_contained}If $\gamma\in \mathcal{M}_k\otimes \mathcal{M}_k$ is a state whose image is contained in the anti-symmetric subspace of $\mathbb{C}^k\otimes\mathbb{C}^k$ then $\gamma$ does not satisfy any of the conditions $(1),(2)$ and $(3)$ described at the beginning of this section.
 \end{lemma}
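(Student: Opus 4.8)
The plan is to exploit the orthogonal splitting $\mathbb{C}^k\otimes\mathbb{C}^k=\mathcal{S}\oplus\mathcal{A}$ into the symmetric and antisymmetric subspaces, using only the identities already recorded just before this lemma: $F_k^{\Gamma}=u_ku_k^*$, $(u_ku_k^*)^{\Gamma}=F_k$, $\mathcal{R}(Id)=u_ku_k^*$, $\mathcal{R}(u_ku_k^*)=Id$, $\mathcal{R}(F_k)=F_k$; the fact that $(\cdot)^{\Gamma}$ and $\mathcal{R}$ preserve the trace inner product, being isometries of $\mathcal{M}_k\otimes\mathcal{M}_k$; and the standard identity $tr(M^{\Gamma}N)=tr(MN^{\Gamma})$ used repeatedly above. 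The key observation is that if $\Ima(\gamma)\subseteq\mathcal{A}$ then $F_k$ acts as $-Id$ on $\Ima(\gamma)$ while $u_k\in\mathcal{S}$ is orthogonal to $\Ima(\gamma)$; hence the only two elementary facts I will need are
\[
tr(\gamma F_k)=-tr(\gamma)\qquad\text{and}\qquad tr(\gamma u_ku_k^*)=0,
\]
together with $tr(\gamma)>0$ (so I assume $\gamma\neq0$). Everything reduces to these.

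For condition $(1)$ I would pair $\gamma^{\Gamma}$ against the positive semidefinite matrix $u_ku_k^*$: by $tr(M^{\Gamma}N)=tr(MN^{\Gamma})$ and $(u_ku_k^*)^{\Gamma}=F_k$ one gets $tr(\gamma^{\Gamma}u_ku_k^*)=tr(\gamma F_k)=-tr(\gamma)<0$, which rules out $\gamma^{\Gamma}\geq0$. For condition $(2)$ I would instead just take a trace: since $\mathcal{R}(u_ku_k^*)=Id$ and $\mathcal{R}$ preserves the trace inner product, $tr(\mathcal{R}(\gamma^{\Gamma}))=tr\!\big(\mathcal{R}(\gamma^{\Gamma})\,\mathcal{R}(u_ku_k^*)^*\big)=tr(\gamma^{\Gamma}u_ku_k^*)=-tr(\gamma)<0$, so $\mathcal{R}(\gamma^{\Gamma})$ cannot be positive semidefinite either.

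For condition $(3)$ I would argue by contradiction: assume $\mathcal{R}(\gamma)=\gamma$ and let $P=\tfrac12(Id-F_k)$ be the orthogonal projection onto $\mathcal{A}$, so $\gamma P=\gamma$ and $\mathcal{R}(P)=\tfrac12(u_ku_k^*-F_k)$, a Hermitian matrix. On one side $tr(\gamma P)=tr(\gamma)$; on the other side, using that $\mathcal{R}$ preserves the trace inner product and $\mathcal{R}(\gamma)=\gamma$,
\[
tr(\gamma P)=tr\!\big(\mathcal{R}(\gamma)\,\mathcal{R}(P)^*\big)=tr\!\big(\gamma\,\mathcal{R}(P)\big)=\tfrac12 tr(\gamma u_ku_k^*)-\tfrac12 tr(\gamma F_k)=\tfrac12 tr(\gamma).
\]
Comparing the two expressions forces $tr(\gamma)=0$, hence $\gamma=0$, contradicting $\gamma\neq0$. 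This settles all three conditions.

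The argument is short, so there is no serious obstacle; the only points that need care are the bookkeeping of adjoints when invoking the isometry relations — it is convenient that $\mathcal{R}(u_ku_k^*)=Id$ and $\mathcal{R}(P)$ are Hermitian, so the stars disappear — and, for $(3)$, realising that the right witness is the antisymmetric projection $P$ itself, whose realignment $\tfrac12(u_ku_k^*-F_k)$ is exactly calibrated to bring out the sign mismatch $tr(\gamma F_k)=-tr(\gamma)$. One should also note that $\gamma=0$ must be excluded (as in the companion proposition), since it trivially satisfies all three conditions.
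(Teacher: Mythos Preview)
Your proof is correct and, for conditions $(1)$ and $(2)$, essentially identical to the paper's: both pair $\gamma^{\Gamma}$ against $u_ku_k^*$ and then $\mathcal{R}(\gamma^{\Gamma})$ against $Id=\mathcal{R}(u_ku_k^*)$, using $F_k^{\Gamma}=u_ku_k^*$ and $tr(\gamma F_k)=-tr(\gamma)$.

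For condition $(3)$ your argument is valid but slightly more elaborate than the paper's. The paper simply compares traces: it pairs $\mathcal{R}(\gamma)$ against $Id$, obtaining $tr(\mathcal{R}(\gamma))=tr(\gamma\,u_ku_k^*)$, and then observes $tr(\gamma\,u_ku_k^*)=tr(\gamma F_k u_ku_k^*)=-tr(\gamma\,u_ku_k^*)$ via $F_ku_k=u_k$ and $\gamma F_k=-\gamma$, so $tr(\mathcal{R}(\gamma))=0\neq tr(\gamma)$. Your choice of the antisymmetric projection $P$ as witness works just as well and yields the same contradiction $tr(\gamma)=\tfrac12 tr(\gamma)$; the paper's version just skips the $F_k$ term by pairing against $Id$ alone. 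Both routes rest on the same ingredients you identified.
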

  \begin{proof}
In this case we have $\gamma F_k=-\gamma$, where $F_k$ is the flip operator in $\mathcal{M}_k\otimes \mathcal{M}_k$.   

 \vspace{0,5cm}  

First, since $F_k^{\Gamma}=u_ku_k^*$,
$$tr(\gamma^{\Gamma}u_ku_k^*)=tr(\gamma^{\Gamma}F_k^{\Gamma})=tr(\gamma F_k)=tr(-\gamma)<0.$$

 Therefore $\gamma^{\Gamma}\ngeq 0$.
 
 \vspace{0,5cm}  

Next, since  $Id=\mathcal{R}(u_ku_k^*)^*$ and the realignment map is an isometry,  $$tr(\mathcal{R}(\gamma^{\Gamma})Id)=tr(\mathcal{R}(\gamma^{\Gamma})\mathcal{R}(u_ku_k^*)^*)=tr(\gamma^{\Gamma}u_ku_k^*)=tr(\gamma F_k)=tr(-\gamma)<0.$$
 
 Therefore $\mathcal{R}(\gamma^{\Gamma})\ngeq 0$.
 
 \vspace{0,5cm}  
 
 Finally, since  $Id=\mathcal{R}(u_ku_k^*)^*$, the realignment map is an isometry and $F_ku_k=u_k$,  $$tr(\mathcal{R}(\gamma) Id)=tr(\mathcal{R}(\gamma)\mathcal{R}(u_ku_k^*)^*)=tr(\gamma u_ku_k^*)=tr(\gamma F_k u_ku_k^*)=tr(-\gamma u_ku_k^*)\leq 0.$$
 
  Therefore $\mathcal{R}(\gamma)\neq \gamma$.
  \end{proof}
  
        \vspace{0,5cm} 

    The next corollary shows how to construct states with the complete reducibility property, but not satisfying any of the conditions  (1),(2) and (3) described at the beginning of this section. This is an interesting result, since these conditions used to be the only known conditions that implied this property.
    
      \vspace{0,5cm}  
 
 \begin{corollary}\label{corollary_new_types} Let $k>2$ and $\gamma_j\in\mathcal{M}_k\otimes \mathcal{M}_k$ be a state for $j=1,\ldots,n$. Let us assume that for each of the three  
 conditions described at the beginning of this section, there is at least one $\gamma_j$ that does not satisfy it, but each $\gamma_j$ satisfies at least one of the three conditions.
 Then $S(\gamma_1,\ldots,\gamma_n)\in \mathcal{M}_{k^n}\otimes\mathcal{M}_{k^n}$ is completely reducible, but it does not satisfy any of the conditions  described at the beginning of this section.
 \end{corollary}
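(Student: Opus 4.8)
The plan is to combine Theorem~\ref{theoremshuffle} with Theorem~\ref{theoremshuffle2}, using Lemma~\ref{lemma_image_not_contained} as the bridge that lets us actually invoke Theorem~\ref{theoremshuffle2}. First I would check the hypotheses of Theorem~\ref{theoremshuffle2}: each $\gamma_j$ satisfies at least one of the conditions (1), (2) or (3), so by Lemma~\ref{lemma_image_not_contained} the image of each $\gamma_j$ is \emph{not} contained in the anti-symmetric subspace of $\mathbb{C}^k\otimes\mathbb{C}^k$. This is exactly the standing hypothesis needed for Theorem~\ref{theoremshuffle2} to apply to the family $\gamma_1,\ldots,\gamma_n$.

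Next, for the complete reducibility of $S(\gamma_1,\ldots,\gamma_n)$: each $\gamma_j$ satisfies one of (1), (2), (3), and each of these three conditions implies that $F_{\gamma_j}\circ G_{\gamma_j}$ is completely reducible by results (i), (ii), (iii) quoted in the introduction (i.e.\ \cite[Theorems 26, 27, 28]{CarielloIEEE}), so every $\gamma_j$ is a completely reducible state. By Theorem~\ref{theoremshuffle}, $S(\gamma_1,\ldots,\gamma_n)$ is then completely reducible as a bipartite state of $\mathcal{M}_{k^n}\otimes\mathcal{M}_{k^n}$.

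It remains to show $S(\gamma_1,\ldots,\gamma_n)$ fails all three conditions. I would argue condition by condition. For condition (1): by hypothesis there is some $\gamma_{j_1}$ with $\gamma_{j_1}^{\Gamma}\not\geq 0$; by item (1) of Theorem~\ref{theoremshuffle2}, $S(\gamma_1,\ldots,\gamma_n)^{\Gamma}\geq 0$ would force $\gamma_i^{\Gamma}\geq 0$ for every $i$, a contradiction, so $S(\gamma_1,\ldots,\gamma_n)^{\Gamma}\not\geq 0$. The same reasoning, applied to item (2) of Theorem~\ref{theoremshuffle2} with the $\gamma_{j_2}$ for which $\mathcal{R}(\gamma_{j_2}^{\Gamma})\not\geq 0$, gives $\mathcal{R}(S(\gamma_1,\ldots,\gamma_n)^{\Gamma})\not\geq 0$; and item (3) with the $\gamma_{j_3}$ for which $\mathcal{R}(\gamma_{j_3})\neq\gamma_{j_3}$ gives $\mathcal{R}(S(\gamma_1,\ldots,\gamma_n))\neq S(\gamma_1,\ldots,\gamma_n)$. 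Hence $S(\gamma_1,\ldots,\gamma_n)$ is completely reducible but satisfies none of (1), (2), (3).

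The only genuinely delicate point is making sure Theorem~\ref{theoremshuffle2} is applicable at all, i.e.\ that no $\gamma_j$ is supported on the anti-symmetric subspace; this is handled cleanly by Lemma~\ref{lemma_image_not_contained} precisely because each $\gamma_j$ meets one of the three conditions. The role of the assumption $k>2$ is to guarantee that such states $\gamma_j$ (satisfying exactly one of the conditions and not the others) actually exist, so that the corollary is non-vacuous; the argument itself does not otherwise use it. Everything else is a direct bookkeeping application of the two shuffle theorems, so there is no real obstacle beyond assembling the pieces in the right order.
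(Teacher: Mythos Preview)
Your proposal is correct and follows essentially the same approach as the paper: use Lemma~\ref{lemma_image_not_contained} to verify the anti-symmetric hypothesis, then apply Theorem~\ref{theoremshuffle} for complete reducibility and Theorem~\ref{theoremshuffle2} for the failure of conditions (1)--(3). The paper's proof differs only in the order of presentation and does not include your (correct) side remark on the role of $k>2$, which the paper instead places in a separate remark following the corollary.
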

 \begin{proof}
First of all, $S(\gamma_1,\ldots,\gamma_n)$ is a completely reducible state, since $\gamma_1,\ldots,\gamma_n$ are all completely reducible states by \cite[Theorems 26, 27, 28]{CarielloIEEE} and by theorem \ref{theoremshuffle}. 
 
\vspace{0,3cm}

Next, by the previous lemma the images of $\gamma_1,\ldots,\gamma_n$ cannot be subsets of the anti-symmetric subspace of $\mathbb{C}^k\otimes \mathbb{C}^k$.

\vspace{0,3cm}

Therefore, by theorem \ref{theoremshuffle2}, we have  
\begin{itemize}
\item $S(\gamma_1,\ldots,\gamma_n)^{\Gamma}\ngeq 0$, since there is $\gamma_j$ such that $\gamma_j^{\Gamma}\ngeq 0$;
\item $\mathcal{R}(S(\gamma_1,\ldots,\gamma_n)^{\Gamma})\ngeq 0$,  since there is $\gamma_j$ such that $\mathcal{R}(\gamma_j^{\Gamma})\ngeq 0$;
  \item $\mathcal{R}(S(\gamma_1,\ldots,\gamma_n))\neq S(\gamma_1,\ldots,\gamma_n)$,,  since there is $\gamma_j$ such that $\mathcal{R}(\gamma_j)\neq \gamma_j$.
\end{itemize}
 \end{proof}
 
  \vspace{0,5cm}     
 
 \begin{remark}
The reason for asking $k>2$ in this theorem is the  \cite[Lemma 3.34]{Cariello_thesis}. It says that 
every state in $\mathcal{M}_2 \otimes \mathcal{M}_2$ satisfying condition 2 or condition 3 at the beginning of this section also satisfies condition 1. Examples of states in $\mathcal{M}_3 \otimes \mathcal{M}_3$ that satisfy condition 2 or condition 3, but not condition 1, were provided in \cite[Examples 3.36]{Cariello_thesis}. 

It is fairly easy to obtain states satisfying condition 1, but not conditions  2 or 3, since these last two conditions imply a certain symmetry in the Schmidt decomposition of a state $($See \cite[Corollary 25]{CarielloIEEE}$)$.
 \end{remark}

As explained in the end of the introduction, our next and final result is connected to the distillability problem.

\begin{corollary}\label{corollary_distillability}Let $\gamma_i\in \mathcal{M}_k\otimes \mathcal{M}_k$ be a state such that $\mathcal{R}(\gamma_i)=\gamma_i$, for $i=1,\ldots,n$. There isn't $v=a\otimes b+c\otimes d\in \mathbb{C}^{k^n}\otimes \mathbb{C}^{k^n}$
 such that $tr(S(\gamma_1^{\Gamma},\ldots,\gamma_n^{\Gamma})vv^*)<0$ and  $\dim(\text{span}\{a,b,c,d\})=2$. In particular, letting $\gamma_1=\ldots=\gamma_n$ gives the result described in the end of the introduction. \end{corollary}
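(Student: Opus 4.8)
The plan is to fold the whole shuffle into a single realignment--invariant state and then prove the required positivity by passing from $v$ to its ``symmetric square''. Set $\delta=S(\gamma_1,\ldots,\gamma_n)\in\mathcal{M}_{k^n}\otimes\mathcal{M}_{k^n}$, so that $S(\gamma_1^{\Gamma},\ldots,\gamma_n^{\Gamma})=\delta^{\Gamma}$ by lemma \ref{lemma_properties_shuffle} $(a)$. Since each $\gamma_i$ satisfies $\mathcal{R}(\gamma_i)=\gamma_i$, lemma \ref{lemma_image_not_contained} prevents the image of any $\gamma_i$ from lying in the anti-symmetric subspace, so theorem \ref{theoremshuffle2} $(3)$ applies and gives $\mathcal{R}(\delta)=\delta$. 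Applying $(\cdot)^{\Gamma}$ to the identity $\mathcal{R}(\gamma^{\Gamma})^{\Gamma}=\gamma F_{k^n}$ from the proof of lemma \ref{lemma_properties_shuffle} $(c)$ and substituting $\gamma=\delta^{\Gamma}$ shows that $\mathcal{R}(\delta)=\delta$ is equivalent to $\delta^{\Gamma}F_{k^n}=\delta^{\Gamma}$; taking adjoints also gives $F_{k^n}\delta^{\Gamma}=\delta^{\Gamma}$, so $\delta^{\Gamma}$ is supported on the symmetric subspace of $\mathbb{C}^{k^n}\otimes\mathbb{C}^{k^n}$.

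\textbf{Reduction to a symmetric vector.} Fix $v=a\otimes b+c\otimes d$ with $\dim(\operatorname{span}\{a,b,c,d\})\le 2$; all four vectors then lie in a subspace $U\subseteq\mathbb{C}^{k^n}$ with $\dim U\le 2$, so $v$ and $F_{k^n}v$ lie in $U\otimes U$. Put $w=v+F_{k^n}v$. Using $F_{k^n}\delta^{\Gamma}=\delta^{\Gamma}F_{k^n}=\delta^{\Gamma}$ and cyclicity of the trace one gets $tr(\delta^{\Gamma}vv^{*})=\frac14\,tr(\delta^{\Gamma}ww^{*})$; the vector $w$ is symmetric and lives in $U\otimes U\cong\mathbb{C}^2\otimes\mathbb{C}^2$, so by the Autonne--Takagi factorization $w=\sigma_1\,z_1\otimes z_1+\sigma_2\,z_2\otimes z_2$ with $\{z_1,z_2\}$ orthonormal in $U$ and $\sigma_1,\sigma_2\ge 0$. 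Hence $tr(\delta^{\Gamma}ww^{*})=\sigma_1^{2}A+\sigma_2^{2}B+2\sigma_1\sigma_2\operatorname{Re}C$, where $A=(z_1\otimes z_1)^{*}\delta^{\Gamma}(z_1\otimes z_1)$, $B=(z_2\otimes z_2)^{*}\delta^{\Gamma}(z_2\otimes z_2)$ and $C=(z_2\otimes z_2)^{*}\delta^{\Gamma}(z_1\otimes z_1)$. Using $tr(\delta^{\Gamma}X)=tr(\delta X^{\Gamma})$ and $(xy^{*})^{t}=\overline{yx^{*}}$ I would rewrite $A=(z_1\otimes\overline{z_1})^{*}\delta(z_1\otimes\overline{z_1})\ge 0$, likewise $B\ge 0$, and $C=(z_2\otimes\overline{z_1})^{*}\delta(z_1\otimes\overline{z_2})$. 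Cauchy--Schwarz for the positive semidefinite form $\delta$ then gives
\[
|C|^{2}\le \bigl((z_1\otimes\overline{z_2})^{*}\delta(z_1\otimes\overline{z_2})\bigr)\bigl((z_2\otimes\overline{z_1})^{*}\delta(z_2\otimes\overline{z_1})\bigr).
\]

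\textbf{The key estimate $|C|^{2}\le AB$ and conclusion.} It is here that $\mathcal{R}(\delta)=\delta$ itself---not merely the symmetric support of $\delta^{\Gamma}$---is used. Write $\delta=\sum_m z_m z_m^{*}$ with $z_m\in\mathbb{C}^{k^n}\otimes\mathbb{C}^{k^n}$, and let $Z_m\in\mathcal{M}_{k^n}$ be the matrix with $z_m=\sum_{a,b}(Z_m)_{ab}\,e_a\otimes e_b$. A direct computation gives $tr\bigl(\delta(xx^{*}\otimes yy^{*})\bigr)=\sum_m|x^{*}Z_m\overline{y}|^{2}$ for all $x,y\in\mathbb{C}^{k^n}$, whereas $\mathcal{R}(\delta)=\delta$ together with $\mathcal{R}(xx^{*}\otimes yy^{*})=(xy^{t})\otimes\overline{xy^{t}}$ and the fact that $\mathcal{R}$ preserves the trace inner product yield
\[
tr\bigl(\delta(xx^{*}\otimes yy^{*})\bigr)=\sum_m\overline{x^{*}Z_m x}\;\overline{y}^{*}Z_m\overline{y}.
\]
Feeding $x=z_1,\ \overline{y}=z_2$ into both identities (and then $x=z_2,\ \overline{y}=z_1$) shows that $(z_1\otimes\overline{z_2})^{*}\delta(z_1\otimes\overline{z_2})=(z_2\otimes\overline{z_1})^{*}\delta(z_2\otimes\overline{z_1})=\sum_m\overline{z_1^{*}Z_m z_1}\;z_2^{*}Z_m z_2=:s$, a nonnegative real number, while $A=\sum_m|z_1^{*}Z_m z_1|^{2}$ and $B=\sum_m|z_2^{*}Z_m z_2|^{2}$; hence by Cauchy--Schwarz $|C|^{2}\le s^{2}=\bigl|\sum_m\overline{z_1^{*}Z_m z_1}\;z_2^{*}Z_m z_2\bigr|^{2}\le AB$. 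Consequently $tr(\delta^{\Gamma}ww^{*})\ge\sigma_1^{2}A+\sigma_2^{2}B-2\sigma_1\sigma_2\sqrt{AB}=(\sigma_1\sqrt{A}-\sigma_2\sqrt{B})^{2}\ge 0$, so $tr\bigl(S(\gamma_1^{\Gamma},\ldots,\gamma_n^{\Gamma})vv^{*}\bigr)=tr(\delta^{\Gamma}vv^{*})=\frac14\,tr(\delta^{\Gamma}ww^{*})\ge 0$; no admissible $v$ can make this negative, and the last assertion of the statement follows by taking $\gamma_1=\cdots=\gamma_n$.

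\textbf{Main obstacle.} I expect the decisive difficulty to be precisely the estimate $|C|^{2}\le AB$: the naive Cauchy--Schwarz applied to $\delta$ produces the ``crossed'' vectors $z_1\otimes\overline{z_2},\ z_2\otimes\overline{z_1}$ rather than the ``diagonal'' vectors $z_1\otimes\overline{z_1},\ z_2\otimes\overline{z_2}$ entering $A$ and $B$, and closing that gap is exactly what forces one to invoke realignment invariance of $\delta$ through the second trace identity. Verifying that identity---keeping the conjugations and the matrix/vector dictionary consistent in $tr(\delta(xx^{*}\otimes yy^{*}))=\sum_m\overline{x^{*}Z_m x}\;\overline{y}^{*}Z_m\overline{y}$, and confirming $\mathcal{R}(xx^{*}\otimes yy^{*})=(xy^{t})\otimes\overline{xy^{t}}$---is the one genuinely delicate computation; the rest is routine algebra of the partial transpose and the flip.
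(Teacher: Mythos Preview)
Your argument is correct and genuinely different from the paper's. The paper compresses locally: it picks $A\in\mathcal{M}_{k^n\times 2}$ whose columns span $\operatorname{span}\{a,b,c,d\}$, writes $v=(A\otimes A)w$, and reduces everything to the $2\times2$ state $(A^*\otimes A^t)\,\delta\,(A\otimes\overline{A})$; it then quotes two external facts---the covariance identity $\mathcal{R}\bigl((A^*\otimes A^t)X(A\otimes\overline{A})\bigr)=(A^*\otimes A^t)\mathcal{R}(X)(A\otimes\overline{A})$ from \cite[Lemma 23]{CarielloIEEE} and the lemma from \cite{Cariello_thesis} that every realignment--invariant state in $\mathcal{M}_2\otimes\mathcal{M}_2$ is PPT---to conclude. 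Your route instead stays in $\mathcal{M}_{k^n}\otimes\mathcal{M}_{k^n}$: you observe that $\mathcal{R}(\delta)=\delta$ forces $\delta^{\Gamma}$ to be supported on the symmetric subspace, symmetrise $v$ to $w$, Takagi--factor $w=\sigma_1 z_1\otimes z_1+\sigma_2 z_2\otimes z_2$, and then prove the $2\times2$ positivity inequality $|C|^2\le AB$ by hand, using the spectral decomposition $\delta=\sum_m \zeta_m\zeta_m^*$ together with the realignment identity $tr(\delta(xx^*\otimes yy^*))=\sum_m\overline{x^*Z_m x}\;\overline{y}^{\,*}Z_m\overline{y}$. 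In effect you have reproved, in a self--contained way, exactly the piece of the $2\times2$ PPT lemma that is needed here. The trade--off is clear: the paper's proof is shorter but leans on cited results; yours is longer but exposes precisely where realignment invariance enters (the identity that turns the ``crossed'' Cauchy--Schwarz bound into the ``diagonal'' one). One cosmetic point: you use $z_m$ both for the Takagi vectors $z_1,z_2$ and for the spectral vectors of $\delta$, which is confusing; rename the latter.
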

\begin{proof} The idea is to reduce the problem to $\mathcal{M}_{2}\otimes \mathcal{M}_2$. 

Since $\dim(\text{span}\{a,b,c,d\})=2$, there are $A\in\mathcal{M}_{k^n\times 2}$ and $w\in\mathbb{C}^{2}\otimes \mathbb{C}^2$ such that $(A\otimes A)w=v$.

We shall see that $(A^*\otimes A^t)S(\gamma_1,\ldots,\gamma_n)(A\otimes \overline{A})\in\mathcal{M}_{2}\otimes \mathcal{M}_2$ is positive under partial transpose. 

Then the desired result  follows from lemma \ref{lemma_properties_shuffle}, which says $S(\gamma_1^{\Gamma},\ldots,\gamma_n^{\Gamma})=S(\gamma_1,\ldots,\gamma_n)^{\Gamma}$, together with the following observation:   

 $$tr(S(\gamma_1^{\Gamma},\ldots,\gamma_n^{\Gamma})vv^*)=tr((A^*\otimes A^*)S(\gamma_1,\ldots,\gamma_n)^{\Gamma}(A\otimes A)ww^*)$$
 $$=tr(((A^*\otimes A^t)S(\gamma_1,\ldots,\gamma_n)(A\otimes \overline{A}))^{\Gamma}ww^*)\geq 0.$$
 
\vspace{0,3cm}

Now, in order to complete the proof, we must show that $((A^*\otimes A^t)S(\gamma_1,\ldots,\gamma_n)(A\otimes \overline{A}))^{\Gamma}\geq 0$. It is enough to prove that  $(A^*\otimes A^t)S(\gamma_1,\ldots,\gamma_n)(A\otimes \overline{A})$ is invariant under realignment, since every such state in $\mathcal{M}_{2}\otimes \mathcal{M}_2$ is positive under partial transpose by \cite[Lemma 3.34]{Cariello_thesis}.

By lemma \ref{lemma_image_not_contained}, the image of each $\gamma_i$ is not contained in the anti-symmetric subspace of $\mathbb{C}^{k}\otimes \mathbb{C}^{k}$. Then, by theorem \ref{theoremshuffle2}, $\mathcal{R}(S(\gamma_1,\ldots,\gamma_n))=S(\gamma_1,\ldots,\gamma_n)$.

Next, by item 4 of  \cite[Lemma 23]{CarielloIEEE}, $$\mathcal{R}((A^*\otimes A^t)S(\gamma_1,\ldots,\gamma_n)(A\otimes \overline{A}))=(A^*\otimes A^t)\mathcal{R}(S(\gamma_1,\ldots,\gamma_n))(A\otimes \overline{A}).$$

Finally, since $\mathcal{R}(S(\gamma_1,\ldots,\gamma_n))=S(\gamma_1,\ldots,\gamma_n)$, we obtain  $$\mathcal{R}((A^*\otimes A^t)S(\gamma_1,\ldots,\gamma_n)(A\otimes \overline{A}))=(A^*\otimes A^t)S(\gamma_1,\ldots,\gamma_n)(A\otimes \overline{A}).$$

\end{proof}

  \section*{Summary and Conclusion}
  
    \vspace{0,5cm} 
  
  In this work we studied the set of completely reducible states $(CR_{k,m})$ from different perspectives. First, we rewrote the complete reducibility property in a very straightforward way that allowed us to prove the convexity of $CR_{k,m}$  and to prove that taking powers, roots and partial traces  of completely reducible states result in completely reducible states. Then we showed that a certain Shuffle of bipartite states, which results in a bipartite state, is completely reducible if and only if the  shuffled states are too. This result implied a quite general construction that shows that completely reducible states do not need to be positive under partial transpose, positive under partial transpose composed with realignment or invariant under realignment. These last three conditions were previously the only known conditions that guaranteed  the complete reducibility property for a state, but now we have many ways to produce states with such important property. We completed this paper by showing that  that our results are connected to the distillability problem.

  \section*{Declarations}
  No potential conflict of interest was reported by the author. No funding was received for conducting this study. The author has no relevant financial or non-financial interests to disclose.

   \vspace{0,2cm} 
   
   \section*{Acknowledgement}

The author would like to thank the referee for providing constructive comments and helping in the improvement of this manuscript. 

  \vspace{0,2cm}

\begin{bibdiv}
\begin{biblist}

\bib{Meyer}{book}{
author={Meyer, Carl D.},
title={Matrix analysis and applied linear algebra}, 
volume={71} ,
publisher={Siam},
year={2000}
}

\bib{evans}{article}{
  title={Spectral properties of positive maps on C*-algebras},
  author={Evans, David E.},
  author={H{\o}egh-Krohn, Raphael},
  journal={Journal of the London Mathematical Society},
  volume={2},
  number={2},
  pages={345--355},
  year={1978},
  publisher={Oxford University Press}
}

\bib{Sinkhorn}{article}{
  title={Concerning nonnegative matrices and doubly stochastic matrices},
  author={Sinkhorn, Richard}
  author={Knopp, Paul},
  journal={Pacific Journal of Mathematics},
  volume={21},
  number={2},
  pages={343--348},
  year={1967},
  publisher={Oxford University Press}
}

\bib{gurvits2004}{article}{
  title={Classical complexity and quantum entanglement},
  author={Gurvits, Leonid},
  journal={Journal of Computer and System Sciences},
  volume={69},
  number={3},
  pages={448--484},
  year={2004},
  publisher={Elsevier}
}

\bib{CarielloLAMA}{article}{
    title={Sinkhorn-Knopp theorem for rectangular positive maps},
  author={Cariello, D.},
  journal={Linear and Multilinear Algebra},
  volume={67},
  pages={2345-2365},
  year={2019}
}

\bib{Git}{article}{
   author={Gittsovich, O.},
   author={G\"uhne, O.}
   author={Hyllus, P.}
   author={Eisert, J.}
   title={Unifying several separability conditions using the covariance matrix criterion},
   journal={Phys. Rev. A},
   volume={78},
   year={2008},
   pages={052319},
}

\bib{schaefer}{book}{
  title={Banach lattices and positive operators},
  author={Schaefer, Helmut H},
  volume={215},
  year={1974},
  publisher={Springer}
}

\bib{CarielloIEEE}{article}{
   author={Cariello, D.},
   title={Completely Reducible Maps in Quantum Information Theory},
   journal={IEEE Transactions on Information Theory},
   volume={62},
   date={2016},
   number={4},
   pages={1721-1732},  
}

\bib{peres}{article}{
    title={Separability criterion for density matrices},
  author={Peres, Asher},
  journal={Physical Review Letters},
  volume={77},
  number={8},
  pages={1413},
  year={1996},
  publisher={APS}
}

\bib{horodeckifamily}{article}{
  title={Separability of mixed states: necessary and sufficient conditions},
  author={Horodecki, M.},
  author={Horodecki, P.},
  author={Horodecki, R.},
  journal={Phys. Lett. A.},
  volume={223},
  pages={1--8},
  year={1996},
  publisher={Elsevier}
}

\bib{rudolph}{article}{
   author={Rudolph, O.},
   title={Computable Cross-norm Criterion for Separability},
   journal={Lett. Math. Phys.},
   volume={70},
   date={2005},
   pages={57--64}
}

\bib{rudolph2}{article}{
   author={Rudolph, Oliver},
   title={Further results on the cross norm criterion for separability},
   journal={Quantum Inf. Proc.},
   volume={4},
   date={2005},
   pages={219--239}
}

\bib{CarielloArxiv}{article}{
 title={A triality pattern in entanglement theory},
  author={Cariello, D.},
  journal={	Quantum Inf.  Comp.},
   volume={24},
  pages={40-57},
  year={2024}
}

\bib{CarielloLMP}{article}{
    title={Sinkhorn-Knopp theorem for PPT states},
  author={Cariello, Daniel},
  journal={Lett Math Phys},
  volume={109},
  pages={2013-2034},
  year={2019}
}

\bib{weiner}{article}{
   author={Weiner, M.},
   title={A gap for the maximum number of mutually unbiased bases},
   journal={Proceedings of the American Mathematical Society},
   volume={141},
   date={2013},
   number={6},
   pages={1963-1969},
  
}

\bib{hanson}{article}{
  title={Eventually Entanglement Breaking Markovian Dynamics: Structure and Characteristic Times},
  author={Hanson, E.P.},
    author={Rouzé, C.},
      author={Stilck Fran\c ca, D.},
  journal={Ann. Henri Poincar\'e},
  volume={21},
  pages={1517–1571},
  year={2020},
}

\bib{CarielloQIC3}{article}{
  title={A reduction of the separability problem to SPC states in the filter normal form},
  author={Cariello, Daniel},
  journal={Quantum Inf.  Comp.},
  volume={24},
  number={1-2},
  pages={0057--0070},
  year={2024}
}

\bib{fiveproblems}{article}{
  title={Five open problems in quantum information},
  author={Horodecki, Paweł},
  author={Rudnicki, L.},
  author={Zyczkowski, K.},
  journal={arXiv:2002.03233 },
  year={2020},
}

\bib{horodeckifamily_distillability}{article}{
  title={Mixed-state entanglement and distillation: Is there a “bound” entanglement in nature?},
  author={Horodecki, Michał},
  author={Horodecki, P.},
  author={Horodecki, R.},
  journal={Physical Review Letters},
  volume={80},
  number={24},
  pages={5239},
  year={1998},
  publisher={APS}
}

\bib{Clarisse}{article}{
  title={The distillability problem revisited},
  author={Clarisse, Lieven},
  journal={Quantum Inf.  Comp.},
  volume={6},
  number={6},
  pages={539-560},
  year={2006}
}

\bib{LinChen}{article}{
  title={On the distillablity conjecture in matrix theory},
  author={Liu, Saiqi},
  author={Chen, Lin}
  journal={International Journal of Theoretical Physics},
  volume={64},
  number={8},
  pages={1-18},
  year={2025}
}

\bib{Cariello_thesis}{article}{
   author={Cariello, Daniel},
   title={Analytical techniques on multilinear problems },
   journal={Doctoral dissertation}, 
   year={2016},
   pages={Retrieved from https://docta.ucm.es/rest/api/core/bitstreams/fdf4dfb0-dc77-4e77-91e0-2eedbb7463e7/content}
}

\bib{choi}{article}{
   author={Choi, Man Duen},
   title={Completely positive linear maps on complex matrices},
   journal={Linear Algebra and Appl.},
   volume={10},
   date={1975},
   pages={285--290},
   review={\MR{0376726 (51 \#12901)}},
}

\bib{cariello}{article}{
  title={Separability for weakly irreducible matrices},
  author={Cariello, Daniel},
  journal={Quantum Inf.  Comp.},
  volume={14},
  number={15-16},
  pages={1308--1337},
  year={2014}
}

\end{biblist}
\end{bibdiv}

\end{document}